\newtheorem{theorem}{Theorem}
\newtheorem{assumption}{Assumption}
\newtheorem{lemma}{Lemma}
\newtheorem{proposition}{Proposition}
\newtheorem{corollary}{Corollary}
\newtheorem{definition}{Definition}
\numberwithin{equation}{section}		
\numberwithin{figure}{section}			
\numberwithin{table}{section}				
\title{
		\vspace{-1in} 	
		\usefont{OT1}{bch}{b}{n}
		\normalfont \normalsize \textsc{} \\ [25pt]
		\huge Multiple scattering in random mechanical systems and  diffusion approximation \\
}
\author{\normalfont \large 
Renato Feres\footnote{Washington University, Department of Mathematics, Campus Box 1146, St. Louis, MO 63130},\  \ Jasmine Ng\footnotemark[1],\ \  Hong-Kun Zhang\footnote{Department of Mathematics and Statistics, University of Massachusetts, Amherst, MA 01003, USA}
}
\date{\normalfont  \large \today}
\begin{document}

\maketitle
\vspace{-0.2in}
\begin{abstract}
\begin{center}Abstract\end{center}
This paper is concerned with   stochastic   processes that 
model multiple (or iterated) scattering in    classical mechanical systems  
 of {\em billiard type},  defined below. 
 From  a given (deterministic) system of billiard type,
a  random process with transition probabilities operator $P$  is
introduced  by assuming that some of the
dynamical variables are random with  prescribed probability distributions. 
Of particular interest are systems with weak scattering, which are associated to parametric families of operators $P_h$, depending on 
a geometric or mechanical   parameter $h$, that approaches the identity 
as $h$ goes to $0$.  It is shown that 
$(P_h-I)/h$ converges for small $h$ to 
  a second order elliptic differential operator $\mathcal{L}$ on compactly supported functions 
  and  that the Markov chain  process associated to $P_h$ converges to a diffusion  with infinitesimal generator 
$\mathcal{L}$.
Both $P_h$ and $\mathcal{L}$ are
self-adjoint (densely) defined on the space $L^2(\mathbb{H},\eta)$ of square-integrable functions over 
the (lower) half-space $\mathbb{H}$   in $\mathbb{R}^m$,  where
$\eta$ is  a stationary measure. This  measure's density  is  either
(post-collision) Maxwell-Boltzmann distribution  or Knudsen cosine law, and the random processes with infinitesimal generator $\mathcal{L}$
respectively correspond to what we call {\em MB diffusion} and (generalized) {\em Legendre diffusion}. 
Concrete examples of simple mechanical systems  are given and illustrated by numerically simulating  the random  processes. 
 \end{abstract}

\section{Introduction}
The purpose of this section is to explain informally the nature of the results
that will be stated in detail and   greater generality  in the course of  the paper.

A type of idealized multi-scattering experiment is depicted in Figure \ref{multiscattering}.
The figure represents the flight of a molecule between two parallel solid plates.  At each collision,
the molecule impinges on the surface of a plate  with a velocity $v$ and, after interacting with the surface in
some way (which will be explicitly described by a mechanical   model),
it scatters away with a post-collision velocity $V$. The single scattering event $v\mapsto V$, for 
 some specified  molecule-surface  interaction model, is  given  by  a random map in the following
sense. Let $\mathbb{H}$ denote  the half-space of vectors $v=(v_1, v_2, v_3)$ with negative third component. 
It is convenient to also regard the scattered velocity $V$   as a vector in $\mathbb{H}$  by
identifying  vectors  that  differ only by the sign  of their  third component.
A scattering event is then represented by a map from  $\mathbb{H}$ into the space of probability
measures on $\mathbb{H}$, which we call  for now  the {\em scattering map}\,\!\! ;  the probability
measure associated to $v$ is the law of the random variable $V$.  Thus
the scattering map encodes the ``microscopic'' mechanism of molecule-surface interaction
in the form of a random map, whose iteration provides the   information about velocities needed to
determine  the  sample trajectories of the molecule.

The mechanical-geometric interaction models specifying the scattering map will be limited in this paper to
what we call a mechanical system of {\em billiard type}. Essentially, it is a conservative classical mechanical
system without ``soft'' potentials. Interactions between moving masses (comprising
the ``wall sub-system'' and the ``molecule sub-system,'' using the language of \cite{scott}) are  billiard-like elastic collisions.

 \vspace{0.1in}
\begin{figure}[htbp]
\begin{center}
\includegraphics[width=3in]{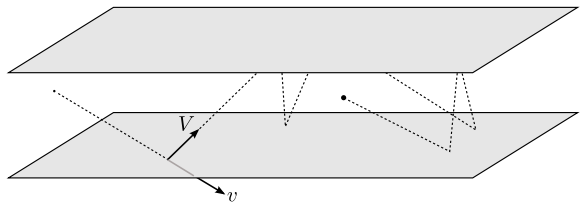}\ \ 
\caption{\small   An idealized molecular flight between two solid plates, as
an example of a multi-scattering experiment. We refer to $v$ and $V$, respectively,  as the pre- and post-scattering
velocities at  a  collision event, and  regard $V$ as a random function of $v$, as explained in the text.}
\label{multiscattering}
\end{center}
\end{figure} 

An example of a very simple interaction mechanism (in dimension $2$)
is shown in Figure \ref{movingwall}. That figure can be thought to represent 
a choice of wall ``microstructure.''  In addition to a choice of mechanical system representing the wall 
microstructure, the specification of a scattering map requires fixing a statistical
kinetic state of  this microstructure  prior to each collision. 
For the example of Figure \ref{movingwall}, one possible specification may be as follows:
(1) the precise position on  the horizontal axis (the dashed line of the figure) where the molecule enters
the zone of interaction is random, uniformly distributed over the period of the periodic surface contour;
(2) at the same time that the molecule crosses the dashed line (which arbitrarily 
sets the boundary of the interaction zone),  the position and velocity of the up-and-down moving wall 
are chosen randomly  from prescribed probability distributions. The most natural are the uniform distribution (over a
small interval)
for the position,
and a one-dimensional  normal distribution for the 
velocity, with mean zero and constant variance. (The variance specifies the wall temperature, as will be seen.)
In fact, one  general assumption of the main theorems essentially amounts to   the constituent masses of the wall sub-system 
having velocities which are  normally distributed and in a state of equilibrium (specifically, energy equipartition is assumed).
In this respect, a random-mechanical  model of ``heat bath'' is explicitly  given.  Once the
random pre-collision conditions are set, the mechanical system describing the interaction evolves
deterministically to produce $V$. Note that a single collision event may consist of several ``billiard collisions''
at the ``microscopic level.''
 
Having specified a scattering map  (by the choices of a mechanical system
and the  constant  pre-collision  statistical state of the wall), a random dynamical system on $\mathbb{H}$ is
defined, which can then be  studied from the perspective of the theory of Markov chains on general state spaces (\cite{meyn}). 

Clearly, one can equally well  envision a multiple scattering set-up similar to the one depicted in Figure \ref{multiscattering}
but inside a cylindrical channel or a spherical container rather than two parallel  plates; or, more generally,
inside a solid container of irregular shape, in which case a ``random change of frames''
operator must be composed with the scattering operator to account for
the changing orientation of the inner surface of the container at different collision points. (See \cite{feres}. This is not needed
in the case of plates, cylinders, and spheres.) We like to think of this general set-up as 
defining a {\em random billiard system}, an idea that is nicely illustrated by \cite{comets,comets2}, for example.

We are particularly interested in situations that exhibit weak scattering, in the sense that
the probability distribution of $V$ is concentrated near $v$ or, what amounts to the same thing,
the scattering is nearly specular.  Our systems will typically depend on a parameter $h$
that indicates the strength of the scattering, and we are mainly concerned with the limit of the  velocity (Markov) process
as $h$ approaches $0$.  This will lead to  novel types of  diffusion processes canonically associated with
the underlying   mechanical systems. 
We  call $h$ the {\em flatness} parameter for reasons that will soon become obvious.

For the systems of {billiard type} considered  here (introduced in Section \ref{billiardtypesection}),
the   essential information concerning their    mechanical and probabilistic  definition is contained
in two linear maps: $C$ and $\Lambda$ on $\mathbb{R}^{m+k}$, where  
$k$ is the number of  ``hidden'' independent  variables (whose statistical states are prescribed by the model)
and $m$ is the number  of   ``observed'' variables (say, the $3$ velocity coordinates of the molecule
in the situation of Figure \ref{multiscattering}). This $m$  is also the general dimension of $\mathbb{H}$. 
Vectors in  $\mathbb{H}$ will be written $v=(v_1, \dots, v_m)$. The maps $C$ and $\Lambda$ are
non-negative definite and Hermitian; $C$ is  a covariance matrix for the hidden velocities
and, by the equipartition assumption, it is  a scalar multiple of an orthogonal projection, while $\Lambda$ contains
(in the limit $h\rightarrow 0$) information about the system geometry and mass distribution.

The first observation (which is studied in much greater generality in \cite{scott})
  is that the resulting Markov chain on velocity space $\mathbb{H}$ 
  has   canonical stationary distributions given by what we refer to as the {\em post-collision} Maxwell-Boltzmann
  distribution of velocities. (The term ``post-collision'' is used to distinguish it from
  the more commonly known distribution of velocities sampled at random times,  not necessarily on the wall surface.)
This velocity distribution has the form
\begin{equation}\label{stationaryintroduction} d\mu(v)=|v_m| \exp\left(-\frac12|v|^2/\sigma^2\right)\, dV(v)\end{equation}
where $\sigma^2=\text{Tr}(C\Lambda)/\text{Tr}(\Lambda^\curlywedge)$,
$\Lambda^\curlywedge$ is the restriction of $\Lambda$ to the subspace of ``hidden velocities,''
and $dV$ denotes Euclidean   volume element.
 The 
     scattering map can be represented as a (very generally) self-adjoint operator on $L^2(\mathbb{H},\mu)$
     of norm $1$, which we indicate by $P_h$, where $h$ is the flatness parameter.
We denote the density of $\mu$ by $\varrho:=d\mu/dV.$  The term $|v_m|$ in $\varrho$ equals  the
speed times the cosine of the angle between $v$ and the normal to the scattering surface; this
cosine factor is often referred to in the applied literature as the {\em Knudsen cosine law.}  (\cite{celestini})

Let $C_0^\infty(\mathbb{H})$ denote the space of compactly supported smooth functions on the half-space.
A first order differential operator can be defined on this space using $C$ and $\Lambda$ as follows:
\begin{equation}\label{firstorderoperator}
\left(\mathcal{D}\Phi\right)(v):=\sqrt{2}\left[ \Lambda^{1/2}\left(v_m\, \text{grad}_v\, \Phi- \Phi_m(v) v\right) +\text{Tr}\left(C\Lambda\right)^{1/2}\Phi_m(v) e\right]
\end{equation}
where $e$ is the coordinate vector $(0, \dots, 0,1)$ and the subindex $m$ in $\Phi_m$ indicates 
partial derivative with respect to $v_m$. 
We  now define a second order differential operator on 
$C_0^\infty(\mathbb{H})$   by
$$ \mathcal{L}:=- \mathcal{D}^*\mathcal{D},$$
where $\mathcal{D}^*$ indicates the adjoint of $\mathcal{D}$ with respect to the natural
inner product on the  pre-Hilbert space of  smooth, compactly supported square integrable vector fields
with  the Maxwell-Boltzmann  measure $\mu$.  We refer to $\mathcal{L}$ as
the {\em MB-Laplacian} of the mechanical-probabilistic model, $\mathcal{D}$ as the {\em MB-gradient}, and $-\mathcal{D}^*$
 the {\em MB-divergence.}  

The central result of the paper is that, as $h$ approaches $0$ (and under commonly satisfied  further conditions
to be spelled out later), the Markov chain process with transition probabilities operator $P_h$ converges
to a diffusion process on $\mathbb{H}$ whose  infinitesimal generator  is the MB-Laplacian $\mathcal{L}$. 
The resulting   process, which we call {\em MB-diffusion}, is illustrated in a number of concrete examples in the paper.

\vspace{0.1in}
\begin{figure}[htbp]
\begin{center}
\includegraphics[width=3in]{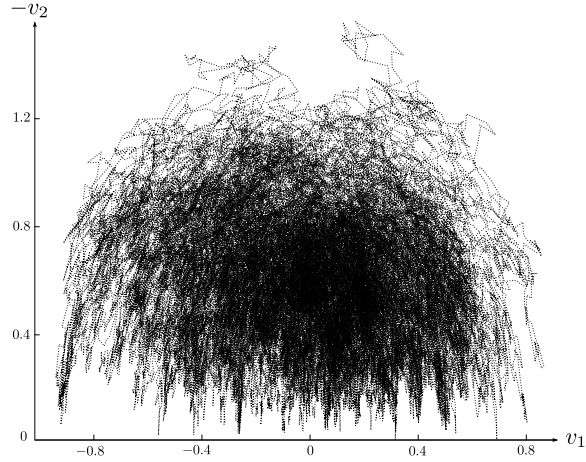}\ \ 
\caption{\small  Sample trajectory of the MB-diffusion process whose infinitesimal generator is given by
$(\mathcal{L}\Phi)(v)=-2v_1\Phi_1 +\left[-4v_2+\left(1+v_1^2\right)/v_2\right]\Phi_2+v_2^2 \Phi_{11}-2v_1v_2\Phi_{12}+(1+v_1)\Phi_{22}$
obtained by simple Euler approximation. We have used a time interval of length $50$,
initial condition $(0,-1)$ and number of steps $50000$. The parameters chosen here are not
related to those of Figure \ref{example3traj},  so the axis scales are not comparable.}
\label{example3SDE}
\end{center}
\end{figure}

The MB-diffusion can be expressed as an It\^o  stochastic differential equation
$$dV_t= Z(V_t)\, dt + b(V_t)\, dB_t,$$ where $B_t$ is   $m$-dimensional Brownian motion (restricted
to $\mathbb{H}$), $Z(v)$ is the vector field
$$Z(v):=-2\Lambda v +({\varrho_m}/{\varrho}) \left[\langle \Lambda v,v\rangle +\text{Tr}(C\Lambda)\right]e$$
and $b(v)$ is the linear map
$$ b(v)u:=v_m \Lambda^{1/2}u-\langle \Lambda^{1/2}v,u\rangle e_m +\text{Tr}\left(C\Lambda\right)^{1/2}u_m e. $$
A  sample path of an MB-diffusion  in dimension $2$ is shown in Figure \ref{example3SDE}.

It is interesting to note that, in dimension $1$, the operator $\mathcal{L}$ reduces to 
the (up to a constant) Laguerre differential operator (in this case on functions defined on the interval  $(-\infty, 0)$):
$$\frac{1}{2\lambda\sigma^2}(\mathcal{L}\Phi)(v)=\frac1\varrho \frac{d}{dv}\left(\varrho \frac{d\Phi}{dv}\right),$$
where $\varrho(v)=\sigma^{-2}v\exp\left(-v^2/2\sigma^2\right)$ is the Maxwell-Boltzmann density and $\lambda$
is the scalar equal to the (in this case $1$-by-$1$) matrix denoted above by  $\Lambda$.

\vspace{0.1in}
\begin{figure}[htbp]
\begin{center}
\includegraphics[width=2.5in]{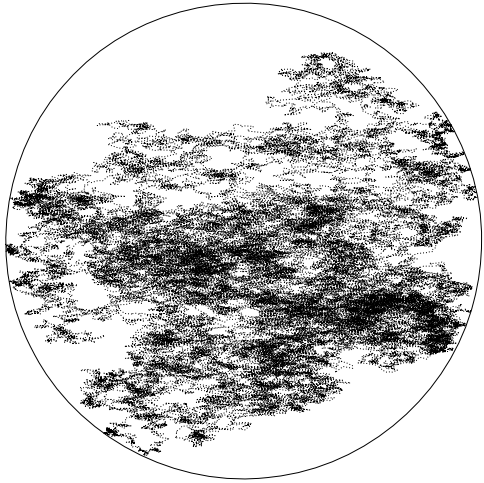}\ \ 
\caption{\small  Sample path for the Legendre diffusion in dimension $2$ with eigenvalues (of $\Lambda$)
$\lambda_1=2.5$ and $\lambda_2=1$. The starting point is $(0,0)$, the time length is $5$, and the
number of steps is $50000$.  Note that diffusion is faster along the horizontal axis as $\lambda_1>\lambda_2$.
The stationary distribution for this process is the normalized Lebesgue measure; therefore,  long trajectories    fill the disc evenly.}
\label{legendre}
\end{center}
\end{figure}

It has been implicitly assumed above that the number of independent ``hidden'' velocity components  $k$ is positive.
The case $k=0$ is somewhat different and has special interest. Now, particle speed does not change
after scattering (we refer to this case as {\em random elastic} scattering) and both the Markov chain process and
the diffusion approximation can be restricted to a unit hemisphere in $\mathbb{H}$. Alternatively,
by orthogonal projection from the hemisphere to the unit open ball in dimension $m-1$, 
we can consider these processes taking place on the ball. Now $\mathcal{L}$ is  
 completely determined by $\Lambda$ and 
has the form
$$ (\mathcal{L}\Phi)(v) = 2\sum_{i=1}^n \lambda_i  \left((1-|v|^2) \Phi_i\right)_i$$
where the $\lambda_i$ are the eigenvalues of $\Lambda$ and  $\Phi$ is a compactly supported smooth function on the unit ball. 
(We have chosen coordinates adapted to the eigenvectors of $\Lambda$.)

The associated diffusion
process, written as an It\^o stochastic differential equation, has the form 
$$dV_t= -4\Lambda V_t\, dt + \left[2\left(1-|V_t|^2\right) \Lambda\right]^{1/2}dB_t,$$ where $B_t$ is $(m-1)$-dimensional Brownian motion
restricted to the unit ball. The operator $\mathcal{L}$ in this case naturally generalizes the
standard Legendre differential operator on the unit interval $(-1,1)$; we call the
stochastic process on the higher dimensional balls   {\em Legendre diffusions}.
The stationary measure turns out to be standard  Lebesgue measure on the ball, so Legendre diffusions
have the interesting property that sample paths fill the ball uniformly with probability $1$.
A sample path of a Legendre diffusion process is illustrated in Figure \ref{legendre}.
One can also think of the Legendre diffusion as a special case of the MB-diffusion
in the sense explained in  Proposition \ref{reductionLegendre} and in the remarks immediately 
after this proposition.

The relationship between the scattering operators $P$ and the above  differential operators of Sturm-Liouville
type suggests that one should be able  fruitfully to  investigate the spectral theory of $P$ based on an analysis of $\mathcal{L}$
and a  spectral perturbation approach.  A very simple observation in this regard is indicated in
\cite{fz}, while  \cite{fz2} discusses the spectral gap of $P$ (which can often be shown to be a Hilbert-Schmidt operator)
in very special cases.  We hope to turn to a more detailed analysis  of the spectrum of $P$ in a future study.

 \section{Mechanical systems of billiard type}\label{billiardtypesection}
 We introduce in this section the main  definitions and basic facts  concerning  
 classical  mechanical systems of {\em billiard type} and their derived random systems.
Component masses  of a given     mechanical model interact via elastic scattering that admit
 a {\em billiard representation}.
    Particular attention is given to {\em weak scattering},  in which reflection in this billiard representation
     is nearly specular.
     A sequence of random scattering events comprises a Markov chain on velocity space whose transition probabilities operator,
     in the case of weak scattering,  is close to the identity.

 \subsection{Deterministic scattering events}
 The reader may like to keep in view the examples of Figures \ref{convex} and \ref{movingwall} while
 reading the below definitions.
For our purposes, a system of {\em billiard type} is a mechanical system   defined  by the geodesic motion of
a point particle in  a Riemannian manifold of dimension $n$ with piecewise smooth boundary.  
Upon hitting the boundary, trajectories  reflect back into the
interior of the manifold  according to
ordinary specular reflection and continue along a geodesic path.  
Except for passing references to more general
situations, the configuration manifolds  of the systems considered in  this paper
are  {\em Euclidean}. 
More specifically, we   consider
  $(n+1)$-dimensional  submanifolds  of $ \mathbb{T}^n\times\mathbb{R}$ with boundary,  for some $n$,
with a  Riemannian metric which  will have constant coefficients with respect to the standard coordinate system. 
The boundary, assumed to be the graph of a piecewise smooth  function,  and the metric coefficients are the distinguishing features of each model.
The periodicity implied by the torus factor   is a more restrictive condition than really needed, 
but these  manifolds are   a natural first step  and describe a variety of situations of special interest.
Thus  the configuration manifold $M$ is   assumed to have the form
 $$M=\{(x,x_{n+1})\in \mathbb{T}^n\times \mathbb{R}: x_{n+1}\geq F(x)\}$$
where $F$ is a piecewise smooth function. 

The Riemannian metric on $M$ is specified  by the kinetic energy quadratic form, 
 which depends on the distribution of masses in the system. The Euclidean condition means, in effect,
that the kinetic energy form becomes, after a linear coordinate change,   the standard dot-product norm restricted to (the tangent bundle of) $M$, while
the torus component of $M$ has the form
 $\mathbb{T}^n=\Pi_{i=1}^n\left(\mathbb{R}/a_i \mathbb{Z}\right)$ for positive constants $a_1, \dots, a_n$.

A {\em scattering event}  is defined   as follows.
Let $c$ be an arbitrary  constant satisfying  $F(x)<c$ for all $x\in \mathbb{T}^n$. The
submanifold $x_{n+1}=c$ will be called the {\em reference plane}. We identify the tangent space to $M$
at any point  on the reference plane with $\mathbb{R}^{n+1}$ and denote by $\mathbb{H}_-^{n+1}$ the lower-half space
in $\mathbb{R}^{n+1} $, which consists of tangent vectors whose $(n+1)$st coordinate is negative. 
\begin{definition}[Deterministic scattering event]\label{definitionevent}
A scattering event
is an iteration of the  correspondence $(x,v)\mapsto (x',V)$, where $x, x'$ lie on the reference plane  and $(x', V)$ is
the end state of a billiard trajectory that begins at $x$ with velocity $v$ and ends at $x'$ with velocity $V$.
By reflecting $V$ on the reference plane, we may
when convenient
 regard both $v$ and $V$ as vectors in $\mathbb{H}_-^{n+1}$.
\end{definition}

Notice that the map describing a scattering event is indeed well defined, at least for almost all $(x,v)$, by Poincar\'e recurrence.
If  $|\text{grad}_xF|$ is uniformly small over $x\in \mathbb{T}^n$, a condition that is assumed in the main theorems,
trajectories cannot get trapped.

The iteration of the scattering event map  introduced in Definition \ref{definitionevent},
as well as its   associated random maps described in Subsection \ref{definitionsscatter},
acquires greater significance in the context of random billiard systems as   in 
 \cite{scott}, but the various concrete examples given later in this paper (the simplest of
 which appears in  Subsection \ref{flatfloor})   should provide enough
 motivation.

\subsection{Random systems and weak scattering}\label{definitionsscatter}
The random scattering set-up defined here  is a special case of the one considered in \cite{scott}.
Briefly, the main idea is that some of the variables involved in a  deterministic scattering 
event, as defined above, are  taken to be random. The scattering map then becomes a random function 
of the initial state of the system. 
The resulting random system can model a variety of physical situations; we refer to \cite{scott} for
more details on the physical interpretation.   

The notation  $\mathcal{P}(X)$ will be
used below to designate the space of probability measures on a measurable space $X$.
We start with a deterministic scattering system with configuration manifold $M\subset \mathbb{T}^n\times\mathbb{R}$ and boundary function $F:\mathbb{T}^n\rightarrow \mathbb{R}$. 
Recall that $M$ is  defined by
the inequality $x_{n+1}\geq F(x)$, $x\in \mathbb{T}^n$. 
The deterministic scattering map is  then  $(x,v)\mapsto (x',V)$,
where $x,x'$ lie on the reference plane $x_{n+1}=c$ (recall that $c$ is an essentially arbitrary value that
specifies the reference plane);  $v$ and $V$ lie in  the lower-half space $\mathbb{H}_-^{n+1}$,
and $V$ is the  reflection on the reference plane of the   velocity of the billiard trajectory with
initial state $(x,v)$ at the moment the trajectory returns to the reference plane. A scattering event
can  consist of several billiard collisions.

Choose $c'$ such that $\sup_{x}|F(x)|\leq c'<c$ and define
$$M_{c'}:=\{(x,x_{n+1})\in M:x_{n+1}>c'\}=\mathbb{T}^n\times (c', \infty).$$
Let $k\leq n$ be a non-negative integer and write $M_{c'}=\mathbb{T}^k\times \mathbb{T}^{n-k}\times (c',\infty)$.
Accordingly, decompose the tangent space to $M$ at any point on the reference plane 
as $\mathbb{H}^{n+1}_-=\mathbb{R}^k \times \mathbb{H}_-^{n-k+1}$.
Fix a probability measure $\mu$ on $\mathbb{R}^k$ and set  $m=n-k+1$.
By a {\em  random initial state with observable component} $v\in \mathbb{H}_-^{m}$ we mean a state of the
form $(x, c, w, v)$, where $x\in \mathbb{T}^n$ is a uniformly distributed  random variable, $c$ is the value defining the
reference plane, and  $w$ is a random variable taking values in $\mathbb{R}^k$ with probability measure $\mu$.
To this  random initial state  we can associate a probability measure $\nu_v\in \mathcal{P}(\mathbb{H}^m_-)$ as follows:
Consider the 
 trajectory of the system of billiard type having
random initial state  $(x, c, w, v)$,  and
let $V$ be the component in  $\mathbb{H}_-^{m}$ of the final velocity of this trajectory, reflected back into $\mathbb{H}_-^m$,
at the moment   the trajectory   returns to the reference plane. Then $V$ is
a random variable and $\nu_v$ is by definition  its probability measure. We refer to $\nu_v$ as the {\em return probability distribution}
associated to the random initial state having observable component $v$.

\begin{definition}[Random scattering event]
Let $\mu$ be a probability measure on $\mathbb{R}^k$ and give $\mathbb{T}^n$ the uniform  probability measure, denoted $\lambda$.
Then the {\em random scattering event} associated to the system of billiard type and these fixed measures
 is defined by  the map $$v\in \mathbb{H}^{m}_-\rightarrow \nu_v\in \mathcal{P}(\mathbb{H}^m_-),$$
 where $\nu_v$ is the return probability associated to the  random initial state with observable component $v$.
\end{definition}

The probability measure $\mu$ 
on $\mathbb{R}^k$  typically will be assumed to have zero mean,  non-singular   covariance matrix
  of finite norm, and finite moments of order $3$, when not assumed more concretely to be Gaussian. 
   The uniform distribution on
 $\mathbb{T}^n$ is, by definition, the unique translation invariant   probability measure.

\begin{definition}[Scattering operator $P$]
Let $C_0(\mathbb{H}_-^{m})$ denote the space of compactly supported continuous functions on 
the lower half-space. For any given $\Phi\in C_0(\mathbb{H}_-^{m})$, define
$$(P\Phi)(v):=\int_{\mathbb{R}^k}\int_{\mathbb{T}^{n}}\Phi(V(x,c, v, w))\, d\lambda(x)\, d\mu(w). $$
We call $P$ the {\em scattering operator} of the system for a random initial state specified by $\mu$
and the uniform distribution on the torus.
\end{definition}

Operators similar to our $P$ naturally arise in kinetic  theory of gases 
and are used to specify boundary conditions for the Boltzmann equation. 
See, e.g., \cite{cercignani,harris}. Typically, the models of gas surface interaction
used in the Boltzmann equation literature are phenomenological, such as
the Maxwell model (\cite{cercignani}, Equation 1.10.20), and are not derived from
explicit mechanical interaction models as we are interested in doing here.

From the definitions it follows that  $P$ and $\nu_v$ are related by
$$ (P\Phi)(v)=E_v[\Phi(V)]=\int_{\mathbb{H}_-^{m}} \Phi(u)\, d\nu_v(u)$$
where the expression in the middle denotes the expectation of the random variable $\Phi(V)$
given the initial condition $v$. 

Based on the  examples  given  throughout the paper, we can expect   $P$  and  $v\mapsto \nu_v$  generally to have 
good  measurability properties, due to the deterministic map from which the random process is defined 
  being typically piecewise smooth.
In  our general theorems it will be implicitly  assumed  that $v\mapsto \nu_v(A)$  is Borel measurable 
for all Borel measurable subsets $A$ of $\mathbb{H}_-^m$. Billiard maps are typically not continuous;
see \cite{chernov} for basic facts on billiard dynamics (in dimension $2$).
 
The following additional assumption turns out to be  convenient and not too restrictive.
\begin{definition}[Symmetric $M$, $F$]\label{symmetric}
The configuration manifold $M$ of  a random scattering process or, equivalently, the  function $F$ defining it, will
be called    {\em symmetric} if   $F(o+u)=F(o-u)$ for all $u\in \mathbb{R}^n$  and some choice
of origin $o$ in $\mathbb{T}^n$.
 \end{definition}

\subsection{Example: collision of a rigid body and flat floor}\label{flatfloor}

A simple example will help to  clarify and motivate some of the above  definitions.  (More representative
examples will be introduced later.)
Consider the $2$-dimensional system of Figure \ref{convex}. 
It consists of a rigid body in dimension $2$ of constant density 
and mass $m$  that
moves in the half-plane set by a hard straight floor. There are no potentials (e.g., gravity). 
The body and floor surfaces are assumed to be physically smooth, in the sense that 
there is no change  in the component of the linear  momentum  tangential to the floor after a collision. 
The motion of the center of mass can then   be restricted to the dashed line of Figure \ref{convex}
due to conservation of the horizontal component of the linear momentum.

 \vspace{0.1in}
\begin{figure}[htbp]
\begin{center}
\includegraphics[width=2.0in]{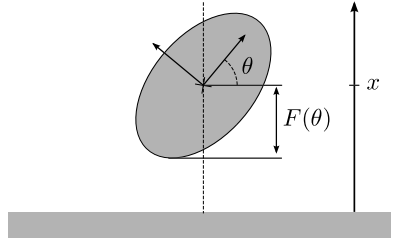}\ \ 
\caption{\small  Collision of a rigid body and a flat floor in dimension $2$. The shape of the body is
encoded in the function $F(\theta)$. For a disc, $F$ is constant. Some of the main results of the paper apply to shapes for
which $h:=\sup_{\theta}\left(F'(\theta)\right)^2$  is small.   }
\label{convex}
\end{center}
\end{figure}

Figure \ref{billconvex} shows the  description of the same example explicitly as a system of billiard type.
 Let $B$ represent the body at a fixed position, with its  center of mass at
the origin. Define the second moment of the
position vector by   $l^2:=\text{Area}^{-1}\int_B |b|^2 \, dA(b)$, where  $A$ is the
area measure. Set 
coordinates $x_1=\theta$ and $x_2=x/l$, where $\theta$ is the angle of rotation and $x$ is the height of the center of mass 
of the body
at a given configuration   in $\mathbb{R}^2$.  
Then the configuration manifold of the system
is the  region
$M=\{(x_1, x_2)\in \mathbb{T}\times\mathbb{R}: F(x_1)\leq l x_2\},$ equipped with the kinetic energy metric $K=\kappa\left(\dot{x}_1^2+\dot{x}_2^2\right)$, where $\kappa$ is a positive constant.
We model the collision between the body and the floor by a linear map $C:T_xM\rightarrow T_x M$, where $x$ is
a boundary (collision) point of $M$. Under the assumption of energy conservation and time reversibility, 
$C$ is an orthogonal involution; the assumption of physically smooth contact is interpreted as $Cu=u$ for every
 nonzero vector $u$  tangent to the boundary at  $x$. As   $C$ cannot be the identity map, it must be 
standard  Euclidean  reflection, whence  the system is of billiard type.

   The $3$-dimensional version of this example is similarly described, the function $F$  now being  defined on the special orthogonal group
 $\text{SO}(3)$.   The
 kinetic energy   metric on $M\subset \text{SO}(3)\times \mathbb{R}$ is no longer Euclidean and  naturally involves the body's moment of inertia.

 \vspace{0.1in}
\begin{figure}[htbp]
\begin{center}
\includegraphics[width=1.5in]{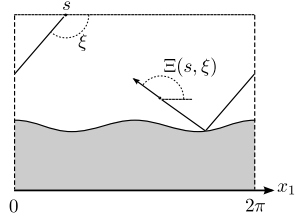}\ \ 
\caption{\small   The billiard representation  associated to the mechanical system of Figure \ref{convex}.  Here $M\subset \mathbb{T}\times \mathbb{R}$. (Dashed lines indicate periodic conditions.) The reference plane is indicated by the top horizontal   line 
and the wavy ground is the graph of $F$. 
We define  the pre-scattering    angle $\xi$ and   initial position $s\in [0,2\pi]$. The  
outgoing angle   is $\Xi$.}
\label{billconvex}
\end{center}
\end{figure}

One way in which this deterministic system can be turned into 
an example of a  random system   is by regarding the initial angle $\theta$, at
the moment the center of mass of the body crosses a reference plane, to be random with the uniform
distribution over the interval $[0,2\pi]$.
In other words, 
suppose that the exact orientation of the
body in space at a  given moment prior to collision is completely  unknown. As the magnitude of the velocity of the billiard particle (that is, 
of the moving point particle of Figure \ref{billconvex})  is invariant throughout the process due to energy conservation, we may consider the return probability
as being supported  on the half-circle in $\mathbb{H}^2_-$, which we identify with the interval of angles $[0,\pi]$.

Thus the   probability distribution $\nu_\xi$ of the return angle $\Xi$ given $\xi$ is   the measure:
$$U\mapsto \nu_\xi(U)=\frac{1}{2\pi}\int_0^{2\pi} \mathbbm{1}_U\left(\Xi(s,\xi)\right)\, ds $$
where $U$ is a measurable subset of the interval $[0,\pi]$ (a set of scattered angles) and $\mathbbm{1}_U$
is the indicator function of $U$. 
Similarly, given a continuous function $\Phi$ on $[0,\pi]$, 
$$(P\Phi)(\xi)=\frac{1}{2\pi}\int_0^{2\pi}\Phi\left(\Xi(s,\xi)\right)\, ds=\int_0^\pi \Phi(\Xi)\, \nu_\xi(\Xi).$$
The probability distributions of the velocity  of the center of mass and the angular
velocity (expressed in terms of $\dot{x}_2$ and $\dot{x}_1$, respectively)  are obtained by taking the push-forward of $\nu_\xi$ under
the maps $u\mapsto |\xi| \cos(u)$ and $u\mapsto |\xi| \sin(u)$, respectively.
Note that $\nu_\xi$ approaches weakly the delta measure $\delta_\xi$ supported on $\xi$ when
 the body  becomes more and more round (hence the reflecting line in Figure \ref{billconvex} becomes more
 and more straight).    
In this case,   $P$ approaches the identity operator.

\subsection{Further notations}\label{further}
All the   examples  of deterministic systems given  in this paper  can be turned into random scattering systems in various 
ways. 
The most natural choices of random variables fall within the scope of the following discussion,
in which the definition of a random scattering event is restated in a more convenient  form.
Let the tangent space to $M$ at any point $(x,c)$ of the reference plane 
decompose in the following two  different ways: $$\mathbb{H}_-^{n+1}=\mathbb{R}^k\times \mathbb{H}^{m}=\mathbb{R}^{n}\times (-\infty,0).$$
where $m=n-k+1$. Accordingly, any given $\xi\in T_xM$ has components  relative to these two decompositions
defined by
$$\xi=(\xi^\curlywedge, \xi^\curlyvee)=(\overline{\xi}, \xi_{n+1}). $$
Let $\{e_1, \dots, e_{n+1}\}$ be the standard basis of $\mathbb{R}^{n+1}$  and $e:=e_{n+1}$ the last basis vector.
So $$\xi_{n+1}=\langle \xi, e\rangle, \ \ \xi^\curlywedge= \sum_{i=1}^k \langle \xi, e_i\rangle e_i,\ \ 
\xi^\curlyvee=\xi-\xi^{\curlywedge}$$
where the inner product represented by the angle brackets is the standard dot product.
The component $\xi^\curlyvee$ of the final velocity of the  billiard   trajectory is the quantity 
of interest produced  by the scattering event. The component $\xi^\curlywedge$  of the initial velocity 
is assumed random with a probability distribution $\mu$.

 \vspace{0.1in}
\begin{figure}[htbp]
\begin{center}
\includegraphics[width=3.5in]{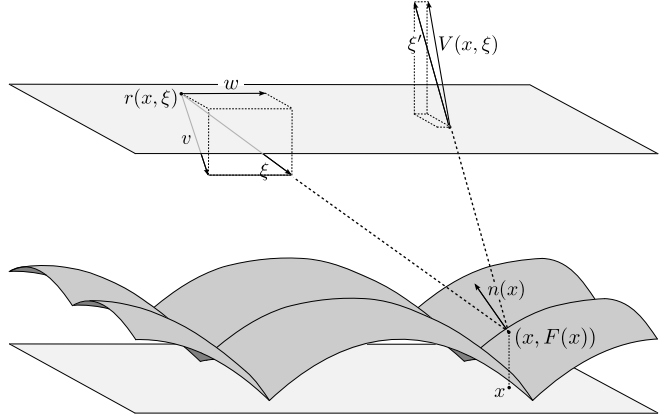}\ \ 
\caption{\small  The random collision process.  The initial position $r\in \mathbb{T}^{n}\times\{c\}$ on the reference plane is chosen randomly  with the uniform
distribution. The outgoing velocity in $TM$ is $\xi'$, and the vertical projection $V:=\xi^\curlyvee$
is the outgoing observed velocity. We thus obtain a random
map $v\mapsto V$.
It will turn out to be convenient to express both  $r$  and $V$ in terms of $\xi$ and the independent variable $x\in \mathbb{T}^{n}$, as indicated in the figure.
 }
\label{definitions3}
\end{center}
\end{figure}

We can now express the random scattering map $v\mapsto V$ by the following algorithm, which is illustrated in Figure
\ref{definitions3}.
\begin{definition}[Random scattering algorithm]\label{steps} In the notation introduced above, the random scattering map is defined by
the following steps:
\begin{enumerate}
\item[i.] Start with $v\in \mathbb{H}_-^{m}$;
\item[ii.] Choose a random $w\in \mathbb{R}^k$ with the probability distribution $\mu$ and form $\xi:= w+v$;
\item[iii.] Choose a random $r=(\overline{r},c)$, where  $\overline{r}\in \mathbb{T}^n$ is uniformly distributed, and let $(r, \xi)$  be the initial state for the billiard trajectory;
\item[iv.] Let $\xi'\in \mathbb{H}_-^{n+1}$ be the velocity of the billiard trajectory at  the time of its  return to $r_{n+1}=c$;
\item[v.] Set $V:=(\xi')^\curlyvee\in \mathbb{H}_-^{m}$.
\end{enumerate}
\end{definition}

For billiard surfaces (i.e., the graph of $F$)
that are relatively flat, the typical collision process comprises a single collision. 
It makes sense in this case  to introduce the independent variable $x$ as indicated in Figure \ref{definitions3},
and allow both $\overline{r}$ and $\xi'$ to be functions of $x$ and $\xi$.
For   $x\in \mathbb{T}^{n}$, let $n(x)$ be the unit normal vector to the graph of 
$F$ at  $(x, F(x))$. 
Note that 
$$n(x)=\frac{e - \text{grad}_xF}{\sqrt{1+\|\text{grad}_xF\|^2}}.$$

As a geometric measure of the strength of scattering we introduce the following parameter.

\begin{definition}[Flatness parameter]\label{flatness}
The quantity $h:=\sup_{x\in \mathbb{T}^n}|\text{\em grad}_xF|^2$ will be referred to as the {\em flatness parameter} of the
system defined by $F$.  
\end{definition}

\section{Stationary measures and general  properties of $P$}
 The basic properties of $P$ are described in this section.
These are mostly special cases of results from \cite{scott}, which we add here for easy reference.
Proofs  are much simpler in our present setting and are sketched  here.
\subsection{Stationary measures}\label{invariantstandard}
It will be assumed in much of  the rest of the paper that
 the probability distribution $\mu$ for the velocity component $w$ in $\mathbb{R}^k$, when $k>0$, is Gaussian:
$$d\mu(w)=\frac{e^{-\frac12|w|^2/\sigma^2}}{\left(2\pi\sigma^2\right)^{k/2}} dV(w)$$
occasionally referring to $\sigma^2$ as the {\em  temperature} (of the ``hidden state'').
Let  $\lambda$ be the translation invariant probability measure on $\mathbb{T}^n$. The 
standard  volume element in open subsets of Euclidean space will be   written   $dV$, or $dV^k$
if we wish to be explicit about the dimension.

Recall that the deterministic scattering map $T$ is defined as the return billiard flow map on the phase space
restricted to the reference plane, that is,
$\mathbb{T}^n\times \mathbb{H}_-^{n+1}=\left(\mathbb{T}^n\times \mathbb{R}^k\right)\times \mathbb{H}_-^m$,
where we are factoring out the observable velocity component $\mathbb{H}^m$ from
the hidden  component at temperature $\sigma^2$, which    is  given the  probability distribution  $\nu:=\lambda\otimes  \mu$
in the sense described in the previous section. 
Denote by $\pi$   the natural projection    $\pi: \mathbb{T}^n\times \mathbb{R}^k\times  \mathbb{H}_-^m\rightarrow \mathbb{H}_-^m$

When $k=0$ (no  velocity components among the hidden variables), the scattering interaction does not change the magnitude of the
velocity in $\mathbb{H}_-^m$; thus one may restrict the state to the unit hemisphere $S_-^{m-1}$ in $\mathbb{H}^m_-$.
Let $d\omega(u)$ represent the Euclidean volume element (measure) on the hemisphere at the unit vector $u$. When necessary we indicate the
dimension of the unit hemisphere as
$d\omega^{m-1}$.  Observe that $$dV^m(v)=c |v|^{m-1}\, d\omega^{m-1}(v/|v|)\, d|v|,$$ where 
$c$ is $m$ times the ratio of the volume of the unit $m$-ball by the volume of the unit $(m-1)$-sphere.

The Markov operator $P$ naturally acts on probability measures on $\mathbb{H}_-^{m}$ 
as follows. With the notation $\eta(f):=\int f\, d\eta$,   the action of $P$ on $\eta$ is the measure $\eta P$ such that
  $(\eta P)(f)=\eta(Pf)$, for every compactly supported continuous  $f$. 
A probability measure $\eta$ on $\mathbb{H}_-^m$ is said to be {\em stationary} for a Markov operator with
state space $\mathbb{H}_-^m$ if $\eta P=\eta$.  

The action of $P$ on probability measures has the following convenient expression. 
Given any probability measure  $\eta$ on $\mathbb{H}_-^m$, we can form the probability measure
$\nu\otimes \eta$ on $(\mathbb{T}^n\times\mathbb{R}^k)\times\mathbb{H}_-^m$, then
act on this measure by the push-forward operation  $T_*$ under the return map, and finally project
the resulting  probability measure back to $\mathbb{H}_-^m$. 
(We recall that $T_*\zeta$, for a given measure $\zeta$, can be defined by its evaluation on continuous functions as
$(T_*\zeta)(f):=\zeta(f\circ T)$.) 
 The result is  $\eta P$.
\begin{lemma}\label{simplelemma}
The operation $\eta\mapsto \eta P$ for  $\eta\in \mathcal{P}\left(\mathbb{H}_-^m\right)$ can be
expressed as 
$$ \eta P= (\pi\circ T)_*( \nu\otimes \eta),$$
where $\nu=\lambda\otimes \mu$ is the fixed probability on the hidden variables space $\mathbb{T}^n\times \mathbb{R}^k$,
$T$ is the return map to the phase space restricted to  reference plane, identified with
$\mathbb{T}^n\times \mathbb{R}^k\times \mathbb{H}_-^m$,  and $\pi$ is the projection from this
phase space to $\mathbb{H}_-^m$.
\end{lemma}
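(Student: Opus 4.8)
The plan is to verify the asserted identity of measures by testing both sides against an arbitrary compactly supported continuous function $f$ on $\mathbb{H}_-^m$; since two finite Borel measures that agree on all such $f$ must coincide, this suffices. First I would unwind the left-hand side using the definition of the action of $P$ on measures, $(\eta P)(f)=\eta(Pf)=\int_{\mathbb{H}_-^m}(Pf)(v)\,d\eta(v)$, and then substitute the integral expression for $P$ from its definition. This produces the iterated integral
$$(\eta P)(f)=\int_{\mathbb{H}_-^m}\int_{\mathbb{R}^k}\int_{\mathbb{T}^n} f\big(V(x,c,v,w)\big)\,d\lambda(x)\,d\mu(w)\,d\eta(v).$$

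The crucial identification is that the random scattering output $V(x,c,v,w)$ coincides with $(\pi\circ T)(x,w,v)$, where $(x,w,v)$ is read as a point of the reference-plane phase space $\mathbb{T}^n\times\mathbb{R}^k\times\mathbb{H}_-^m$ with position $x$ and velocity $\xi=w+v$. This is exactly the content of the random scattering algorithm of Definition \ref{steps}: step ii forms $\xi=w+v$, steps iii--iv apply the return map $T$ to the initial state $(x,\xi)$, and step v extracts the observable component $V=(\xi')^\curlyvee=\pi(T(x,w,v))$. Substituting this identity and recognizing $\nu=\lambda\otimes\mu$, I would collapse the triple integral into a single integral of $f\circ(\pi\circ T)$ against the product measure $\nu\otimes\eta$ on $\mathbb{T}^n\times\mathbb{R}^k\times\mathbb{H}_-^m$; by the definition of push-forward recalled immediately before the lemma, this equals $\big((\pi\circ T)_*(\nu\otimes\eta)\big)(f)$. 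As $f$ was arbitrary, the two measures agree.

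The hard part will not be the algebra but the measure-theoretic bookkeeping in passing from the iterated integral to the integral against the product measure. This step is a Fubini--Tonelli argument, and to invoke it I must know that the map $(x,w,v)\mapsto V$ is \emph{jointly} Borel measurable, not merely that each slice $v\mapsto\nu_v$ is measurable. Since billiard return maps are generically discontinuous (as noted after the definition of $P$), I would lean on the standing measurability assumption recorded earlier and on the fact that $f\circ(\pi\circ T)$ is bounded, because $f$ is compactly supported and continuous; integrability is then automatic and the order of integration is immaterial. The only remaining point meriting a word is that $T$ is defined merely almost everywhere by Poincar\'e recurrence, but the exceptional set carries $\nu\otimes\eta$-measure zero and so affects none of the integrals.
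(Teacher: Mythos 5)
Your proposal is correct and follows essentially the same route as the paper, which simply notes that the proof "amounts to interpreting the definition of $P$ given earlier in terms of the push-forward notation"; you carry out exactly that unwinding, testing against compactly supported continuous $f$ and identifying $V(x,c,v,w)=(\pi\circ T)(x,w,v)$ via Definition \ref{steps}. Your added care about joint measurability, Fubini, and the almost-everywhere definition of $T$ fills in details the paper leaves implicit but does not change the argument.
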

\begin{proof}
The straightforward  proof amounts to interpreting the definition of $P$ given earlier in terms of the push-forward notation.
See \cite{scott} for more details. 
\end{proof}

\begin{proposition}\label{invariantproposition}
When $k=0$, the   measure $d\eta(v)= \langle v, e\rangle \, d\omega(v)$ defined on the unit hemisphere in $\mathbb{H}_-^m$
is stationary under $P$. Identifying the unit hemisphere with $D_1^{m-1}=\{x\in \mathbb{R}^{m-1}: |x|<1\}$
   under the linear projection $(x,x_m)\mapsto x$,
the stationary probability is, in this case, the normalized Lebesgue measure on $D_1^{m-1}$.
For $k>0$,
the   measure
$$ d\eta(v)= | \langle v,e\rangle| e^{-\frac12 |v|^2/\sigma^2}\, dV^m(v)$$
on $\mathbb{H}_-^m$   is stationary  under $P$. 
\end{proposition}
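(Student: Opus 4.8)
The plan is to deduce stationarity from Lemma~\ref{simplelemma} by recognizing each candidate measure as the flux measure induced on the reference plane by a flow-invariant measure of the billiard. By Lemma~\ref{simplelemma}, $\eta P=(\pi\circ T)_*(\nu\otimes\eta)=\pi_*\bigl(T_*(\nu\otimes\eta)\bigr)$. Since $\pi$ is the projection onto the observable factor $\mathbb{H}_-^m$, since $\nu$ is a probability measure, and since $\eta$ is the $\mathbb{H}_-^m$-marginal of $\nu\otimes\eta$, we have $\pi_*(\nu\otimes\eta)=\eta$; hence it suffices to prove that $\nu\otimes\eta$ is invariant under the return map $T$ on the reference-plane phase space $\mathbb{T}^n\times\mathbb{H}_-^{n+1}$. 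The return map is defined for $(\nu\otimes\eta)$-almost every state by Poincar\'e recurrence, and under the standing small-gradient hypothesis no trajectory is trapped, so the manipulations below are valid almost everywhere.

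First I would build the flow-invariant measure. The billiard flow on $TM$ preserves Liouville measure, which by the constant-coefficient (Euclidean) hypothesis on the metric equals, up to a positive constant, $d\bar q\,dq_{n+1}\,dV^{n+1}(\xi)$; because free flight and elastic reflection both conserve the speed $|\xi|$, the speed-weighted measure
$$d\Omega:=e^{-\frac12|\xi|^2/\sigma^2}\,d\bar q\,dq_{n+1}\,dV^{n+1}(\xi)$$
is again flow-invariant. For the reference plane $\{x_{n+1}=c\}$, whose conormal is $e=e_{n+1}$, the Poincar\'e flux formula produces the induced section measure $d\Omega_\Sigma=|\langle\xi,e\rangle|\,e^{-\frac12|\xi|^2/\sigma^2}\,d\bar q\,dV^{n+1}(\xi)$. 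Writing $\xi=w+v$ with $w=\xi^\curlywedge\in\mathbb{R}^k$ and $v=\xi^\curlyvee\in\mathbb{H}_-^m$, orthogonality of the two factors gives $|\xi|^2=|w|^2+|v|^2$ and $\langle\xi,e\rangle=\langle v,e\rangle$, so $d\Omega_\Sigma$ splits, up to a positive constant, as $d\lambda(\bar q)\,d\mu(w)\,\bigl(|\langle v,e\rangle|\,e^{-\frac12|v|^2/\sigma^2}\,dV^m(v)\bigr)=d(\nu\otimes\eta)$. This simultaneously identifies the candidate with $\Omega_\Sigma$ and exhibits the cosine factor $|\langle v,e\rangle|$ in the density as the Poincar\'e flux weight.

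It then remains to check that $T$ preserves $\Omega_\Sigma$. The map $T$ is the transit map $\Phi$ carrying an incoming state $(\bar q,\xi)$ with $\xi_{n+1}<0$ through the collision region below the reference plane to the outgoing state at which the trajectory recrosses the plane, followed by the reflection $R:\xi_{n+1}\mapsto-\xi_{n+1}$ that returns the outgoing velocity to $\mathbb{H}_-^{n+1}$. The transit map $\Phi$ preserves the flux measure because the flow of $\Omega$ is conserved across the collision region (equal entrance and exit flux), and time-reversibility of the elastic billiard makes it an almost-everywhere bijection between the incoming and outgoing halves of $\Sigma$; the reflection $R$ preserves $\Omega_\Sigma$ because both $|\langle\xi,e\rangle|$ and the Gaussian weight are even in $\xi_{n+1}$ while $R$ is volume-preserving. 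The composition $T=R\circ\Phi$ therefore fixes $\Omega_\Sigma$, hence $\nu\otimes\eta$. The step I expect to be the main obstacle is precisely this bookkeeping of crossing directions: one must verify that the reflected transit map coincides with the map $T$ of Definition~\ref{steps}, that restricting to incoming (downward) velocities is compatible with the flux identity, and that the collision region is traversed in finite time almost everywhere so that $\Phi$ is genuinely measure-preserving rather than merely measure-nonincreasing.

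For $k=0$ there is no hidden randomization and the scattering conserves $|\xi|=|v|$, so I would restrict the whole construction to a single energy surface $\{|\xi|=1\}$ and work on the unit hemisphere $S_-^{m-1}$. The same flux computation, now with Liouville measure restricted to the sphere, yields the induced stationary measure $\langle v,e\rangle\,d\omega(v)$, the cosine law. Finally, under the orthogonal projection $(x,x_m)\mapsto x$ from $S_-^{m-1}$ onto the ball $D_1^{m-1}$, the spherical area element satisfies $d\omega=|x_m|^{-1}\,dV^{m-1}(x)$ and $|\langle v,e\rangle|=|x_m|$, so the measure pushes forward to $dV^{m-1}(x)$, i.e.\ normalized Lebesgue measure on the ball, completing the claim.
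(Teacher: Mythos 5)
Your proposal is correct and follows essentially the same route as the paper: both arguments reduce stationarity via Lemma \ref{simplelemma} to the $T$-invariance of $\nu\otimes\eta$, which is identified with the Gaussian-weighted cosine measure on $\mathbb{T}^n\times\mathbb{H}_-^{n+1}$ through the orthogonal splitting $|\xi|^2=|w|^2+|v|^2$ and $\langle\xi,e\rangle=\langle v,e\rangle$. The only difference is one of depth: where the paper simply cites the $T$-invariance of the cosine-weighted measure as the canonical invariant measure of billiard return maps (deferring generality to \cite{scott}), you derive that fact from scratch via flow-invariance of speed-weighted Liouville measure and the Poincar\'e flux construction, and you additionally spell out the $k=0$ case and the pushforward to normalized Lebesgue measure on $D_1^{m-1}$, which the paper states without proof.
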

\begin{proof}
For a much  more general result see \cite{scott}. We briefly show here the second claim.
First note that the measure $d\zeta_0(\xi):= \langle\xi,e\rangle \, d\lambda(x)\, dV(\xi)$ on $\mathbb{T}^n\times \mathbb{H}_-^{n+1}$
is    $T$-invariant. (The term $\langle \xi, e\rangle$ contains the cosine factor that appears in the canonical invariant measure
of billiard systems in general dimension.)
The measure 
$d\zeta(\xi)= \exp(-\frac12|\xi|^2/\sigma^2)\, d\eta_0(\xi)$ is also $T$-invariant,
since  any function of $|\xi|$ is  invariant under the return map $T$. Now, 
consider the  
  decomposition $\xi=(x, w,v)\in \mathbb{T}^n\times \mathbb{R}^k\times  \mathbb{H}_-^m$, under which
 $\zeta$ splits  as
$\zeta=\nu\otimes \eta$,
where $$d\eta(v)= | \langle v,e\rangle| e^{-\frac12 |v|^2/\sigma^2}\, dV^m(v).$$
Here we have used: $\langle \xi, e\rangle=\langle v,e\rangle$ and the splitting of the exponential involving
$|\xi|^2=|w|^2+|v|^2$ as a product of exponentials in $w$ and $v$. Thus $\pi_*\zeta=\eta$,
where $\pi_*$ indicates the push-forward operation on  probability measures.
We now apply Lemma \ref{simplelemma}, noting that $T_*\zeta =\zeta$, to obtain
 $$\eta P= (\pi\circ T)_* (\nu\otimes \eta)=\pi_* T_* \zeta=\pi_*\zeta=\eta,$$
which is the claim.
\end{proof}

There is a significant literature  in both pure mathematics and physics/engineering  concerning  {\em random billiards},
in which ordinary specular billiard reflection is replaced with a random reflection. The typical
assumption is that the post-collision velocity distribution corresponds to the above Maxwellian distribution
or, more simply, to the Knudsen cosine law with constant speed. See, for example, \cite{celestini,comets,comets2}.

It is natural to regard $P$ as an operator on the Hilbert space $L^2(\mathbb{H}_-^m,\eta)$ of
square integrable functions on the observable factor with the stationary measure given in Proposition \ref{invariantproposition}.
The next (easy) proposition is proved in \cite{scott}.

\begin{proposition}
Suppose that the billiard system is symmetric, as defined in Definition \ref{symmetric},
and let $\eta$ be one of the stationary measures described in Proposition \ref{invariantproposition}. Then
 $P:L^2(\mathbb{H}_-^m,\eta)\rightarrow L^2(\mathbb{H}_-^m,\eta)$
is a self-adjoint operator of norm $1$.
\end{proposition}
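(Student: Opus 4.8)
The plan is to deduce self-adjointness from a detailed-balance (reversibility) identity for the transition kernel $\nu_v$, and to obtain the norm from the elementary fact that $P$ is a Markov averaging operator fixing the constants. First I would reduce self-adjointness to the statement that $\nu_v$ is reversible with respect to $\eta$, i.e.\ that the measure $d\eta(v)\,d\nu_v(u)$ on $\mathbb{H}_-^m\times\mathbb{H}_-^m$ is invariant under the swap $(v,u)\mapsto(u,v)$; since $(P\Phi)(v)=\int\Phi\,d\nu_v$, this symmetry is exactly $\langle P\Phi,\Psi\rangle_\eta=\langle\Phi,P\Psi\rangle_\eta$. To verify it I would lift everything to the full reference-plane phase space, where by the proof of Proposition \ref{invariantproposition} the return map $T$ preserves $\zeta=\nu\otimes\eta$ and $\pi_*\zeta=\eta$ (with $\pi$ the projection onto $\mathbb{H}_-^m$ as in Lemma \ref{simplelemma}). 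Writing $\langle P\Phi,\Psi\rangle_\eta=\int(\Phi\circ\pi\circ T)(\Psi\circ\pi)\,d\zeta$, the desired identity reduces to producing an involution $I$ of the phase space with (i) $I^2=\mathrm{id}$, (ii) $I_*\zeta=\zeta$, (iii) $ITI=T^{-1}$, and, crucially, (iv) $\pi\circ I=\pi$. Given such an $I$, one changes variables by $I$ in $\langle\Phi,P\Psi\rangle_\eta$ and uses (iv) together with $TI=IT^{-1}$ and the $T$-invariance of $\zeta$ to turn it into $\langle P\Phi,\Psi\rangle_\eta$.

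The heart of the argument is the construction of $I$, and this is where both the time-reversibility of billiards and the symmetry hypothesis enter. Write $T=R\Phi$, where $\Phi$ sends an incoming reference-plane state to the outgoing state at its first return and $R$ is the reference-plane reflection. Time-reversibility of the billiard flow gives $\mathcal{N}\Phi\mathcal{N}=\Phi^{-1}$, where $\mathcal{N}$ reverses all velocity components; a short computation then yields $\mathcal{S}T\mathcal{S}=T^{-1}$, where $\mathcal{S}:=R\mathcal{N}$ reverses only the horizontal velocity components $\xi_1,\dots,\xi_n$ and fixes position and the vertical component. This $\mathcal{S}$ satisfies (i)--(iii) unconditionally, but it flips $v_1,\dots,v_{m-1}$ and so fails (iv). To repair this I would use the reflection $\mathcal{R}$ through the center of symmetry $o$ (sending $\bar r\mapsto 2o-\bar r$ and flipping the horizontal velocity): precisely when $F$ is symmetric in the sense of Definition \ref{symmetric}, $\mathcal{R}$ is a symmetry of the dynamics, so $\mathcal{R}T=T\mathcal{R}$, and $\mathcal{R}$ commutes with $\mathcal{S}$. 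Setting $I:=\mathcal{R}\mathcal{S}$ gives the map that reflects position about $o$ and leaves the velocity untouched, whence $\pi\circ I=\pi$, $I^2=\mathrm{id}$, and $I_*\zeta=\zeta$ (position reflection preserves $\lambda$; velocity is unchanged), while $ITI=\mathcal{R}(\mathcal{S}T\mathcal{S})\mathcal{R}=\mathcal{R}T^{-1}\mathcal{R}=T^{-1}$. This is the $I$ required above, completing self-adjointness. The $k=0$ case is identical, with $\zeta$ replaced by its restriction to fixed speed and $\eta$ by the cosine measure on the hemisphere, both again invariant under $\mathcal{S}$ and $\mathcal{R}$.

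For the norm I would show $P$ is a contraction on $L^2(\mathbb{H}_-^m,\eta)$ with $\|P\|=1$. Conditional Jensen gives $|(P\Phi)(v)|^2\le(P|\Phi|^2)(v)$, and integrating against $\eta$ and using stationarity $\eta P=\eta$ yields $\|P\Phi\|_\eta^2\le\int P(|\Phi|^2)\,d\eta=\int|\Phi|^2\,d\eta=\|\Phi\|_\eta^2$, so $\|P\|\le1$; since $\eta$ is finite and $P\mathbbm{1}=\mathbbm{1}$ (each $\nu_v$ is a probability measure), the constant function shows $\|P\|=1$. Note this half needs neither symmetry nor self-adjointness.

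I expect the main obstacle to be the rigorous justification of (iii): that the time-reversal of the return map $T$ is realized by the horizontal-velocity flip $\mathcal{S}=R\mathcal{N}$ rather than by $\mathcal{N}$ itself, and the bookkeeping showing that composing with the reflection symmetry $\mathcal{R}$ yields a reversing involution preserving the observable velocity. Some care is also required because billiard return maps are defined only almost everywhere, so all identities should be read modulo $\zeta$-null sets, with Poincar\'e recurrence and the smallness of $|\mathrm{grad}_xF|$ ensuring that $T$ and $T^{-1}$ are $\zeta$-almost everywhere defined.
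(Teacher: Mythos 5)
Your proof is correct, but there is nothing in the paper to compare it against line by line: for this proposition the paper gives no argument at all, stating only that it ``is proved in \cite{scott}.'' Your detailed-balance argument is precisely the kind of proof that citation supplies, so you have in effect filled a gap the authors delegated to a reference. The structure is sound: reducing $\langle P\Phi,\Psi\rangle_\eta=\langle\Phi,P\Psi\rangle_\eta$ to the existence of an involution $I$ with $I_*\zeta=\zeta$, $ITI=T^{-1}$ and $\pi\circ I=\pi$, and then performing the two changes of variables (first by $I$, then by $T$, using the $T$-invariance of $\zeta$ established in the proof of Proposition \ref{invariantproposition}), is valid. Your key dynamical identities also check out. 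Writing $T=R\Phi$ with $R$ the vertical-velocity flip and $\Phi$ your (unreflected) first-return map, one has $\mathcal{S}T\mathcal{S}=(R\mathcal{N})R\Phi(R\mathcal{N})=\mathcal{N}\Phi R\mathcal{N}=\mathcal{N}\Phi\mathcal{N}R=\Phi^{-1}R=T^{-1}$, using that $R$ and $\mathcal{N}$ commute and that time reversal gives $\mathcal{N}\Phi\mathcal{N}=\Phi^{-1}$; and the hypothesis of Definition \ref{symmetric} enters exactly (and only) where you say it does, namely to make the positional reflection $\mathcal{R}$ a symmetry of the billiard domain and hence commute with $T$. The resulting $I=\mathcal{R}\mathcal{S}$ is just $\bar r\mapsto 2o-\bar r$ with the velocity untouched, which is why $\pi\circ I=\pi$ holds and why, pleasantly, no symmetry assumption on $\mu$ is ever needed. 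The norm-one half (conditional Jensen, stationarity $\eta P=\eta$, and $P\mathbbm{1}=\mathbbm{1}$ with $\mathbbm{1}\in L^2$ because $\eta$ is finite) is standard and correct, and indeed independent of the symmetry hypothesis. The only caveats are the ones you already flag yourself: all identities should be read modulo $\zeta$-null sets since $T$ is only defined almost everywhere, and the $k=0$ case runs identically on the constant-speed hemisphere with the cosine measure.
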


\subsection{Example:  collision between particle and moving surface} \label{movingsurface}
The example discussed here is the simplest that exhibits most of the features of the general case.
Its  components  are a point mass $m_1$ and 
a wall that is allowed to move  up and down. See   Figure \ref{movingwall}. 
The wall surface, which could be
of any dimension $n\geq 0$,
has a periodic, piecewise smooth contour  and the up and down motion is restricted to an interval $[0,a_0/2]$.
In  the interior of
this interval the wall moves freely, bouncing  off   elastically at the heights $0$ and $a_0/2$. 
Collisions between the wall and mass $m_1$ are also elastic.

      \vspace{0.1in}
\begin{figure}[htbp]
\begin{center}
\includegraphics[width=3.5in]{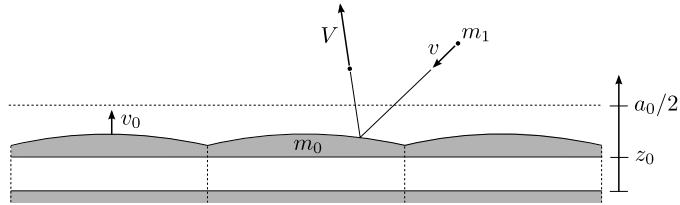}\ \ 
\caption{\small   Collision between the   surface of a moving rigid  wall   and a point mass $m_1$. 
The entire   wall is given a certain mass $m_0$; models with
more localized mass definition are also possible.}
\label{movingwall}
\end{center}
\end{figure}

Let $x$ denote the coordinate along the (horizontal) base of the wall and let $z_0$ be the height  at
which the  base stands at any given moment  relative to its lowest position.
The range of $z_0$ is assumed to be $[0,a_0/2]$. The  contour of the wall top surface, when  $z_0=0$,  is described by a periodic function $f(x)$
of period $a_1$.  Thus,  when the base is at height $z_0$, that contour  is the graph of
$x\mapsto f(x)+z_0$.  It is convenient to allow $z_0$ to vary over the symmetric interval 
$[-a_0/2, a_0/2]$ and set the wall surface function of $x, z_0$ as $G(z_0,x)=f(x)+|z_0|$, which can then be
extended periodically over $\mathbb{R}^2.$ The graph of $G$ so extended is, up to a rescaling of
the coordinates to be described shortly,  
the surface shown in   Figure \ref{example3}. 
Periodicity  of $G$ is expressed by  $G(x+ma_1,z_0+n a_0)=G(x,z_0)$ for integers
$m, n$.  Equivalently, we think of $G$ as a function on the $2$-torus.

 \vspace{0.1in}
\begin{figure}[htbp]
\begin{center}
\includegraphics[width=2.5in]{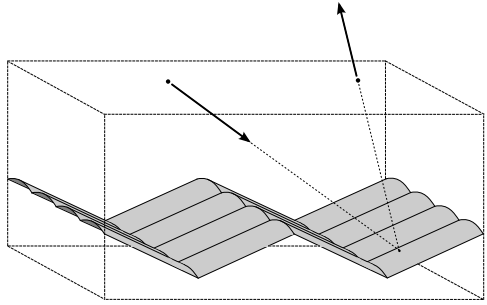}\ \ 
\caption{\small  
Billiard representation  for the moving wall example.}
\label{example3}
\end{center}
\end{figure}

The coordinates of the point mass $m_1$ are represented by 
  $(z_1,z_2)$  respectively along and perpendicular to the base line of the wall.
  Thus the state of the system
 at any  moment  is specified by $(z_0, z_1, z_2, v_0,  v_1, v_2)$, where $v_0$ is
 the velocity of $m_0$ and  $v=(v_1,v_2)$ is the velocity of $m_1$.

   An appropriate choice of coordinates makes  the kinetic energy metric explicitly Euclidean.  
Set $x=(x_0, x_1,   x_2)$, where 
 $x_0:=\sqrt{{m_0}/{m_1}}  {z_0}/{a_1}, \  x_1:= {z_1}/{a_1},\     x_{2}:={z_{2}}/{a_1}.
$
The above function $G$ in this new system becomes $$F(x_0,x_1)=a_1^{-1}  f\left(a_1x_1\right)+\sqrt{{m_1}/{m_0}}|x_0|.$$
Defining  $\tau:=\frac{a_0}{a_1}\sqrt{\frac{m_0}{m_1}}$, then  
$ F\left(x_0+m\tau, x_1+n\right)=F(x_0,x_1)$ for integers $m, n$.
The  configuration manifold of the particle-movable wall
system can now  be written in terms of $F$  as
$$ M=\{x\in \mathbb{R}^{3}:x_{2}\geq  F(x_0,x_1)\}.$$ 
The kinetic energy of the system then becomes
 $ K(x,\dot{x})=K_0  \|\dot{x}\|^2, $
where $K_0=m_1a_1^2/2$. 
Under the assumption that the wall surface is physically smooth, we obtain again a system of billiard type in $M$,
as depicted in Figure \ref{example3}.

A random billiard scattering process based on the above set up can now be defined as follows. The observable state space is the set $\mathbb{H}_-^2$
of approaching velocities,     consisting of the vectors  $u_1e_1+u_2 e_2$, $u_2<0$. 
The part of the  phase space of the deterministic process  on which the return map $T$ is defined is  
 $\mathbb{T}^2\times \{0\}\times \mathbb{R}\times \mathbb{H}_-^2$. Observe that  $\mathbb{T}^2=(\mathbb{R}/\tau\mathbb{Z})\times (\mathbb{R}/\mathbb{Z})$
has coordinate functions  $(x_0,x_1)$, and  the reference plane, with equation $x_2=0$, is  identified with $\mathbb{T}^2$. 
At the initial moment of the scattering event  it is assumed that 
the height of the wall ($x_0$) and the position of $m_1$  along a period interval of the wall contour ($x_1$) are random uniformly
distributed over the respective ranges. Thus the initial position on $\mathbb{T}^2$ is a random
variable distributed according to the normalized translation-invariant measure.

Also at the initial moment of the scattering event the velocity of the wall is
assumed to be a Gaussian  random variable with zero mean and variance $\sigma_0^2$.
That is, the initial  derivative $w:=\dot{x}_0$ of $x_0$ is  normally distributed  with mean $0$ and
variance $\sigma^2=\frac{m_0}{m_1}\frac{\sigma_0^2}{a_1^2}$. Thus the probability distribution
for $w$ is given by the measure $\mu$ such that
$ d\mu(w)=(2\pi\sigma^2)^{-1/2} {e^{-\frac12w^2/\sigma^2}}\, dw.$
In the original  coordinate $v_0$ for the  velocity   of the wall,  the distribution is
$$ d\mu(v_0)= \sqrt{\frac{m_0\beta}{2\pi}}e^{-\frac{\beta}2 m_0v_0^2} \, dv_0,$$
where $\beta^{-1}:=m_0\sigma_0^2$.

 \vspace{0.1in}
\begin{figure}[htbp]
\begin{center}
\includegraphics[width=3.5in]{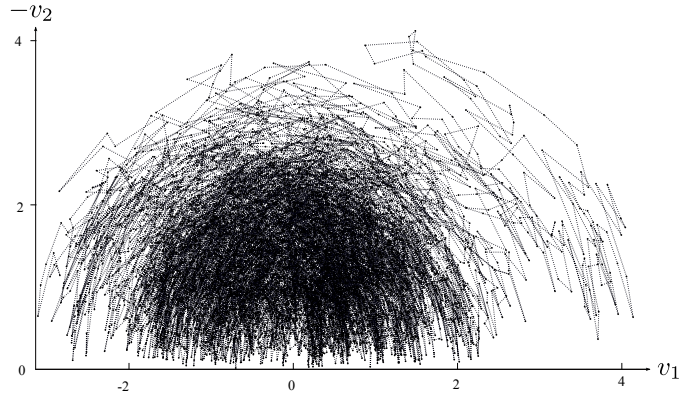}\ \ 
\caption{\small  A typical trajectory of the Markov chain for the random scattering process of
Figure \ref{movingwall}. The parameters are given in the text.
}
\label{example3traj}
\end{center}
\end{figure}

The random scattering map given in Definition \ref{steps} generates a random dynamical system on $\mathbb{H}_-^m$
whose orbits are equivalently described  as sample paths of a Markov chain process with state space  $\mathbb{H}_-^m$.
One way to interpret such multi-scattering process is to imagine that a point mass $m_1$
undergoes a random flight  inside a long  channel bounded by two parallel lines (the channel walls),
these walls having at close range (compared to the distance between the two lines) the structure depicted
in Figure \ref{movingwall}.

 Figure \ref{example3traj} shows a typical sample path of the multi-scattering Markov chain obtained numerically for
 the contour function 
 $f(z_1)=\sqrt{R^2-z_1^2}-\sqrt{R^2-a_1^2/4}$, with $a_1=1$, $R=4$,  and masses $m_0=80$ and $m_1=1$. 
 The variance is $\sigma_0=1$ and the number of iterations is $10^4$.
 (In the figure we used $-v_2$,  so  the trajectory  is shown  in the upper-half plane.)

 According to Proposition \ref{invariantproposition},
 the stationary probability distribution for $u= (\dot{x}_1, \dot{x}_2)$ is 
\begin{equation}\label{deta1}d\eta(u)=\frac1{\sigma^3\sqrt{2\pi}} u_2 e^{-\frac12 |u|^2/\sigma^2}\,  du_1\, du_2.\end{equation}
Expressed in the original velocity variables $v_1, v_2$ of $m_1$, this distribution
has the form
\begin{equation}\label{deta} d\eta(v)=\frac{\left(m_1\beta\right)^{3/2}}{\sqrt{2\pi}} v_2 e^{-\frac{\beta}2m_1 |v|^2} \, dv_1\, dv_2=\left(\sqrt{\frac{2}{\pi}}\left(m_1\beta\right)^{3/2}s^2 e^{-\frac{\beta}{2}m_1s^2}\, ds\right)\left(\frac12 \cos\theta \, d\theta\right)\end{equation}
where $s=|v|$ is  the speed of $m_1$ and $\theta$ is
the angle the velocity of $m_1$ makes with the normal to the reference plane pointing into
the region of interaction.  The fact that $\beta$ is the same in both distributions of velocities (for $m_0$ and $m_1$)
is indicative  of (thermal) equilibrium. 
These distributions are illustrated in Figure \ref{angle}.

\vspace{0.1in}
\begin{figure}[htbp]
\begin{center}
\includegraphics[width=2.7in]{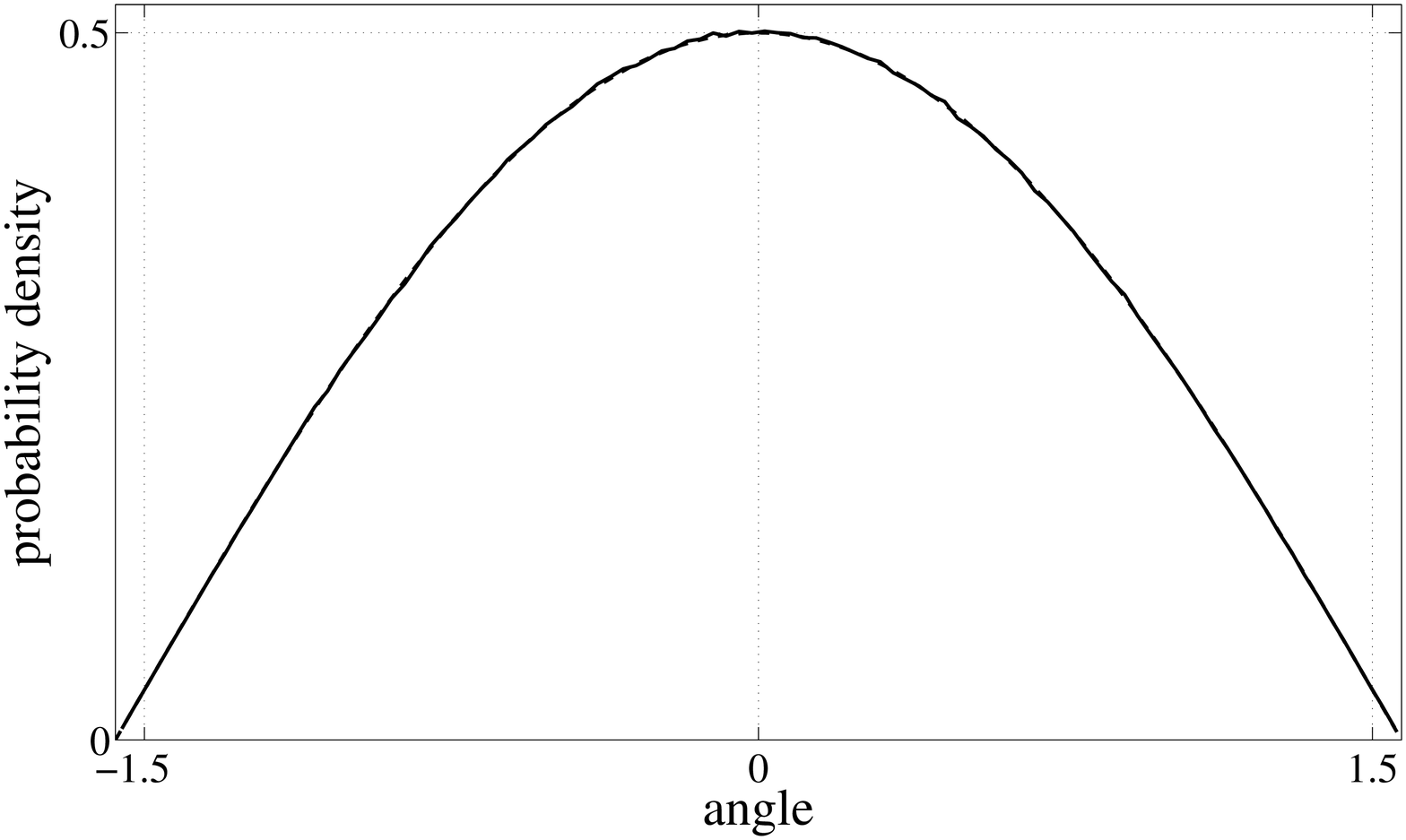}\hspace{-0.2in} \includegraphics[width=2.7in]{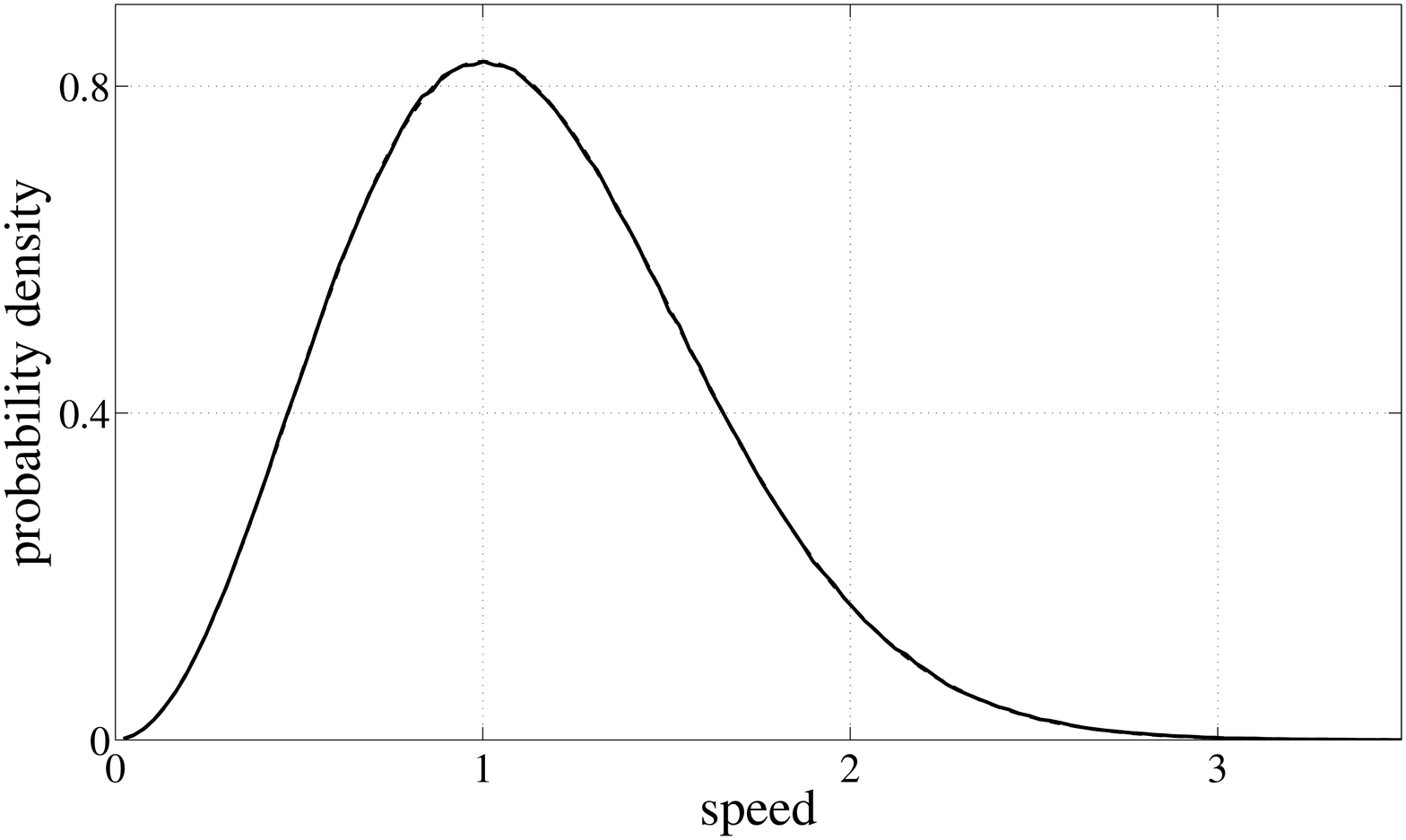}
\caption{\small  
Factor densities of the stationary distribution  $d\eta$ of scattered  angle (left) and speed (right)
  as given  on the right of  \ref{deta} above,
obtained by   numerical simulation
of the random billiard 
using a  sample run of the Markov chain of length $10^7$.
The parameters are $m_0=m_1=1$,
$\sigma_0^2=1/2$, $f(z_1)=\sqrt{R^2-z_1^2}-\sqrt{R^2-a_1^2/4}$ (an arc of circle) with $a_1=1$ and $R=3$.
 The analytically derived expressions for the distributions are also shown above (dashed lines)
 but are virtually indistinguishable  from those obtained numerically  (solid lines).}
\label{angle}
\end{center}
\end{figure}

\section{Differential approximation of the scattering operator}\label{diffapprsection}
In this section we denote  the scattering  operator  by $P_h$,
indexed by the flatness parameter $h$, and define
for all $\Phi$ in the space of compactly supported bounded functions $C_0^\infty(\mathbb{H}_-^m)$
$$\mathcal{L}_h\Phi :=\frac{P_h\Phi -\Phi}{h}.$$
Other choices of  denominator can be more natural or convenient  in specific cases, but $h$ indicates the
correct order of magnitude. 
Our immediate task is   to describe a second order differential operator $\mathcal{L}$ to which $\mathcal{L}_h$  converges  uniformly when applied
to elements of $C_0^\infty(\mathbb{H}_-^m)$.

\subsection{Definitions, notations, and preliminary remarks}
The  notations used below were introduced in Subsection 
\ref{further} and  are summarized in Figure
\ref{definitions3}. In addition, we
 occasionally use the shorthand 
 $ \xi_\smalltriangledown:=\overline{\xi}/\langle \xi, e\rangle$. 
 Also,   the variable $r=(\overline{r},c)$ will typically be used to represent the initial position 
 of trajectories, instead of $(x, c)$.  Given an initial state $(r,\xi)$ with $r$ on the reference plane,
 let $V=V(r,\xi)$ denote the component in $\mathbb{H}_-^m$ of the velocity of the return state $T(r, \xi)$.
 Recall that, at the end point,  the scattering map reflects the velocity back
into $\mathbb{H}_-^{n+1}$.
 For trajectories that collide only once with the graph of $F$,
  it will be convenient to  introduce the independent variable $x\in \mathbb{T}^{n}$
  as
 indicated in Figure \ref{definitions3},
and use it to express both $\overline{r}$ and $V$ for a given $\xi$, instead of writing 
$V(\overline{r},\xi)$ directly.  
Note that
$$\overline{r}(x,\xi)= x+(c-F(x)) \xi_\smalltriangledown,$$
whose differential in $x$ is  
$ d\overline{r}_x=I - dF_x \otimes \xi_\smalltriangledown$. By a standard determinant formula,
\begin{equation}\label{det} \det(d\overline{r}_x)=1 -  {dF_x(\xi_\smalltriangledown})=
1+
{
\langle n_\smalltriangledown(x), \xi_\smalltriangledown \rangle}. \end{equation}
Recall that $h$ is the supremum over $\mathbb{T}^{n}$ of $|\text{grad}_xF|^2$.
We wish to study the scattering process for small values of $h$.

 \begin{lemma}\label{zeta}
Let     $\xi:=v+w\in \mathbb{H}_-^m\times \mathbb{R}^{k}$ be such that the trajectory with initial state  $(\overline{r}, c, \xi)$
collides with the graph of $F$ only once for all  $\overline{r}\in \mathbb{T}^{n}$.
We regard $\overline{r}$ as a function of the initial velocity and a point $x\in \mathbb{T}^n$, as indicated in Figure
\ref{definitions3}.
Let $V=V(x,v,w)$ be the component in $\mathbb{H}_-^m$ of the velocity of the billiard trajectory as it returns to
the reference plane after one iteration of the scattering map. 
Then
  $V=v+2\zeta_1 - 2\zeta_2$, where
\begin{align*}
\zeta_1(x,v,w)&:=\langle n(x), e\rangle \left(\langle \overline{n}(x), {v}\rangle e + 
 \langle \overline{n}(x), w\rangle  e -\langle v,e\rangle \overline{n}^\curlyvee(x)\right)\\
\zeta_2(x,v,w)&:= \langle \overline{n}(x), {v}\rangle \overline{n}^\curlyvee(x) + \langle\overline{n}(x),w\rangle \overline{n}^\curlyvee(x) + |\overline{n}(x)|^2 \langle v,e\rangle e. 
\end{align*}
If $F$ is symmetric, these functions satisfy
 $\zeta_1(-x,v,w)=-\zeta_1(x,v,w)$,   $\zeta_2(-x,v,w)=\zeta_2(x,v,w)$.
\end{lemma}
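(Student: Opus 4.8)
The plan is to reduce the statement to the elementary specular reflection formula and then to a single algebraic identity for the unit normal. Since the trajectory collides with the graph of $F$ exactly once, its velocity is constant along the two straight geodesic segments before and after the collision, so the velocity upon return to the reference plane is simply the specular reflection $\tilde\xi = \xi - 2\langle \xi, n(x)\rangle\, n(x)$ of $\xi$ across the tangent plane at the collision point, where $n(x)=(e - \text{grad}_x F)/\sqrt{1+|\text{grad}_x F|^2}$ is the unit normal recorded in the excerpt. The observable outgoing velocity $V$ is then obtained from $\tilde\xi$ by the two bookkeeping operations implicit in Definition \ref{steps}: reflecting the $(n+1)$st component back into $\mathbb{H}_-$ (a sign flip on the last coordinate, which also forces $V=v$ in the flat case $\overline n=0$, $\langle n,e\rangle=1$), and projecting onto the observable factor via $\xi\mapsto\xi^\curlyvee$.

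First I would write $\xi = w+v$ and $n = \overline n + \langle n,e\rangle\, e$, and compute the scalar $a := \langle \xi, n\rangle = \langle \overline n, v\rangle + \langle \overline n, w\rangle + \langle v,e\rangle\langle n,e\rangle$, using that $v$ has no component in the hidden directions so that $\langle \overline n, v\rangle = \langle \overline n^\curlyvee, v\rangle$. Splitting $\tilde\xi$ into horizontal and vertical parts, applying the sign flip on the last coordinate, and then projecting by $(\cdot)^\curlyvee$ yields $V - v = -2\langle v,e\rangle\, e - 2a\, \overline n^\curlyvee + 2a\langle n,e\rangle\, e$. Expanding $a$ and collecting terms matches every contribution of $2\zeta_1 - 2\zeta_2$ except for one discrepancy among the $e$-components.

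The one genuinely non-formal step is this remaining cancellation: after collecting vertical terms one is left needing $-2\langle v,e\rangle\, e + 2\langle v,e\rangle\langle n,e\rangle^2\, e$ to equal $-2|\overline n|^2\langle v,e\rangle\, e$. These agree precisely because $n$ is a unit vector, i.e. $|\overline n|^2 + \langle n,e\rangle^2 = 1$; this is exactly where the quadratic term $|\overline n|^2\langle v,e\rangle\, e$ of $\zeta_2$ originates, and it is the only place the unit-length constraint enters. Once this identity is used, the formula $V = v + 2\zeta_1 - 2\zeta_2$ holds exactly, with no small-$h$ approximation.

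For the symmetry statement I would use that $F(o+u)=F(o-u)$ forces $\text{grad}_x F$ to be odd under $x\mapsto -x$ (placing the origin at $o$), while $|\text{grad}_x F|$, and hence the normalizing denominator, is even. Consequently $\langle n,e\rangle$ and $|\overline n|^2$ are even functions of $x$, whereas $\overline n$ and $\overline n^\curlyvee$ are odd. Substituting $x\mapsto -x$ into the defining expressions, every term of $\zeta_1$ carries exactly one odd factor, so $\zeta_1$ is odd, while every term of $\zeta_2$ carries two odd factors or none, so $\zeta_2$ is even, which is the assertion. The main obstacle throughout is not any deep idea but the careful bookkeeping of the two decompositions $\mathbb{H}_-^{n+1} = \mathbb{R}^k \times \mathbb{H}^m = \mathbb{R}^n \times (-\infty,0)$ and of the reflection sign conventions, so that the horizontal/vertical and hidden/observable projections are applied in the correct order.
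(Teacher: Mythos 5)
Your proposal is correct and follows essentially the same route as the paper's own (sketched) proof: specular reflection $\xi'=\xi-2\langle\xi,n(x)\rangle n(x)$, reflection across the plane perpendicular to $e$, projection by $(\cdot)^\curlyvee$ with $\langle\xi,e\rangle=\langle v,e\rangle$ and $\langle\overline{n},\overline{\xi}\rangle=\langle\overline{n},v\rangle+\langle\overline{n},w\rangle$, and oddness/evenness of $\overline{n}(x)$ and $\langle n(x),e\rangle$ for the symmetry claim. The only presentational difference is that you keep the scalar $a=\langle\xi,n\rangle$ and invoke the unit-normal identity $|\overline{n}|^2+\langle n,e\rangle^2=1$ explicitly at the end, whereas the paper absorbs that identity directly into its expansion of $\eta$; the computations are identical.
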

\begin{proof}
This is an  entirely  straightforward calculation, of  which we indicate a few steps. The reflection of $\xi$ after the single collision with the graph of $F$ at $x\in \mathbb{T}^{n}$ is naturally given by  $\xi'=\xi -2\langle n(x),\xi\rangle n(x) \in \mathbb{H}_+^{n+1}$. 
This is then reflected by a plane perpendicular to $e$, resulting in 
$$\eta=\xi'- 2\langle \xi',e\rangle e = \xi +2\langle n,e\rangle\left(\langle \overline{n}, \overline{\xi}\rangle e - \langle \xi, e\rangle \overline{n} \right)  -2\left(\langle \overline{n}, \overline{\xi} \rangle \overline{n} + |\overline{n}|^2 \langle\xi,e\rangle e\right)\in \mathbb{H}_-^{n+1}.$$
Now apply the linear  projection ${\eta}\mapsto \eta^\curlyvee$  and use that $\langle \xi, e\rangle=\langle v, e\rangle$
and $\langle\overline{n},\overline{\xi}\rangle =\langle\overline{n}, v\rangle + \langle \overline{n},w\rangle$ to obtain the stated   identity relating $V$ and $v$.  For the rest, use $\langle n(-x), e\rangle = \langle n(x), e\rangle $ and 
$\overline{n}(-x)=-\overline{n}(x)$.
 \end{proof}

 \begin{lemma}\label{lemmaD0} Define $W(v,h):=-|v|+{|\langle v,e\rangle|}/{4\sqrt{h}} $.
Then for small enough $h$ (e.g., $h\leq (3/4)^2$),
 for all $x\in \mathbb{T}^{n}$  and all $\xi$ in the set
 $$\mathcal{D}_h:=\left\{v+w\in \mathbb{H}^{n+1}_-: |w|<W(v,h)\right\}$$  the trajectory  with initial vector $\xi$ 
   starting at   $(\overline{r}(x,\xi),c)$ 
   collides with the graph of $F$ only once and the Jacobian determinant  of $x\mapsto \overline{r}(x,\xi)$ satisfies $\det(d\overline{r}_x)=1+\langle n_\smalltriangledown(x), \xi_\smalltriangledown\rangle>0$. 
 \end{lemma}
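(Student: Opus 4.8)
The plan is to extract from the membership $\xi=v+w\in\mathcal{D}_h$ a single clean geometric fact — that the incoming ray is steep relative to the reference plane — and then run three short, essentially one-dimensional comparison arguments: one for the downward leg before the collision, one for the reflection, and one for the upward leg afterward. The key preliminary observation is that, since $v\in\mathbb{H}_-^m$ and $w\in\mathbb{R}^k$ lie in complementary orthogonal subspaces, $\langle\xi,e\rangle=\langle v,e\rangle$ and $|\xi|=(|v|^2+|w|^2)^{1/2}\le|v|+|w|$. Hence the defining inequality $|w|<W(v,h)$ rearranges to $4\sqrt h\,(|v|+|w|)<|\langle v,e\rangle|$ and gives
\[
4\sqrt h\,|\xi|<|\langle\xi,e\rangle|\le|\xi|.
\]
In particular $\mathcal{D}_h$ is empty once $4\sqrt h\ge1$ (so the assertion is vacuous for such $h$, which is why any explicit threshold like $(3/4)^2$ is comfortably safe), and whenever $\xi\in\mathcal{D}_h$ the direction makes an angle with $e$ whose cosine exceeds $4\sqrt h$.

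The Jacobian claim is then immediate. Writing $n_\smalltriangledown(x)=\overline n(x)/\langle n(x),e\rangle=-\text{grad}_xF$ and using $|\text{grad}_xF|\le\sqrt h$ together with $|\overline\xi|\le|\xi|$,
\[
|\langle n_\smalltriangledown(x),\xi_\smalltriangledown\rangle|=\frac{|\langle\text{grad}_xF,\overline\xi\rangle|}{|\langle\xi,e\rangle|}\le\frac{\sqrt h\,|\xi|}{|\langle\xi,e\rangle|}<\frac14,
\]
so $\det(d\overline r_x)=1+\langle n_\smalltriangledown(x),\xi_\smalltriangledown\rangle>3/4>0$, which simultaneously re-derives \eqref{det}.

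For the single-collision claim I would track the signed height of the trajectory above the graph, $g(t):=c+t\langle\xi,e\rangle-F(\overline r+t\overline\xi)$. One has $g(0)=c-F(\overline r)>0$ and $g'(t)=\langle\xi,e\rangle-\langle\text{grad}F,\overline\xi\rangle\le\langle\xi,e\rangle+\sqrt h\,|\xi|<-3\sqrt h\,|\xi|<0$ by the steepness inequality, so $g$ is strictly decreasing and vanishes exactly once: the unique downward collision, at a point $p$. After the reflection $\xi'=\xi-2\langle n,\xi\rangle n\in\mathbb{H}_+^{n+1}$ (the subsequent reference-plane reflection being irrelevant to re-collision), I would repeat the comparison for $\tilde g(s):=$ (height of $p+s\xi'$)$\,-\,F(\text{its horizontal part})$, noting $\tilde g(0)=0$ and $\tilde g'(s)\ge\langle\xi',e\rangle-\sqrt h\,|\overline{\xi'}|$. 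If $\langle\xi',e\rangle>\sqrt h\,|\overline{\xi'}|$, then $\tilde g$ increases off zero, the reflected leg stays strictly above the graph until it reaches $x_{n+1}=c$, and the event consists of exactly one collision.

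The main obstacle is thus the single inequality $\langle\xi',e\rangle>\sqrt h\,|\overline{\xi'}|$ for the reflected velocity (equivalently, since $|\overline{\xi'}|^2=|\xi|^2-\langle\xi',e\rangle^2$, the bound $\langle\xi',e\rangle>|\xi|\sqrt{h/(1+h)}$). I would prove it by computing, from $\xi'=\xi-2\langle n,\xi\rangle n$ with $n=(e-\text{grad}F)/\sqrt{1+|\text{grad}F|^2}$,
\[
\langle\xi',e\rangle=\frac{|\langle\xi,e\rangle|\,(1-|\text{grad}F|^2)+2\langle\overline\xi,\text{grad}F\rangle}{1+|\text{grad}F|^2},
\]
bounding the cross term below by $-2\sqrt h\,|\overline\xi|$ and inserting $|\langle\xi,e\rangle|>4\sqrt h\,|\xi|$. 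The delicate point is that reflection can tilt the outgoing ray away from vertical by an amount of order $\sqrt h$, so one must check that the factor $4$ in the definition of $W$ leaves enough margin; the worst case is the grazing end $|\langle\xi,e\rangle|\to4\sqrt h\,|\xi|$ with $|\text{grad}F|\to\sqrt h$, where the reduced inequality $4(1-h)-2\sqrt{1-16h}>\sqrt{1+h}$ holds with room to spare for every admissible $h$. Everything else is routine.
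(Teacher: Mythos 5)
Your proof is correct, and its core reduction is the same as the paper's: a single collision is guaranteed once the reflected velocity $\xi'=\xi-2\langle\xi,n(x)\rangle n(x)$ satisfies the steepness condition $\langle\xi',e\rangle>\sqrt{h}\,|\overline{\xi}'|$ (equivalently $\langle\xi',e\rangle^2>\tfrac{h}{1+h}|\xi|^2$), and this is extracted from membership in $\mathcal{D}_h$ via the reflection formula. The differences are in execution, and two of them genuinely add something. First, where the paper dismisses re-collision with an informal ``comparison of slopes'' remark, you prove it by monotonicity of the height-above-graph functions $g$ and $\tilde g$ on the downward and upward legs; this also makes explicit that $\langle\xi',e\rangle>0$, i.e.\ that the reflected ray actually points upward, which the paper obtains only implicitly from its lower bound. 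Second, your observation that $\mathcal{D}_h=\emptyset$ once $4\sqrt{h}\geq 1$ is precisely what legitimizes the threshold in the statement: the paper passes from its intermediate sufficient condition \ref{ineq} to the final form $|w|<|\langle v,e\rangle|/4\sqrt{h}-|v|$ only when $1/4\leq (1-2h)/3$, i.e.\ $h\leq 1/8$, so the range $1/8<h\leq (3/4)^2$ is covered only by the vacuity you point out and the paper never mentions. Your quantitative Jacobian bound $|\langle n_\smalltriangledown(x),\xi_\smalltriangledown\rangle|<1/4$, hence $\det(d\overline{r}_x)>3/4$, is likewise cleaner than the paper's separate sufficient condition. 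Two minor remarks: on the non-vacuous range $h<1/16$ the crude bound $|\overline{\xi}'|\leq|\xi|$ already suffices (one only needs $2(1-2h)>\sqrt{1+h}$), so your sharper grazing-case inequality $4(1-h)-2\sqrt{1-16h}>\sqrt{1+h}$ is dispensable; and declaring that configuration ``the worst case'' tacitly uses that your lower bound is increasing in $|\langle\xi,e\rangle|/|\xi|$ and decreasing in $|\text{grad}_xF|$, which is true but deserves a line. Neither affects correctness.
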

 \begin{proof} Let  $\xi':=\xi-2\langle\xi,n(x)\rangle n(x)$.
A  sufficient condition for   single collision  is  
$ {|\langle \xi',e\rangle|}/{|\overline{\xi}'|}>\sqrt{h}.$
In fact, if there is a second collision elsewhere on the graph of $F$ under this condition,
a comparison of slopes would indicate the existence of a point where the gradient of $F$ exceeds $\sqrt{h}$, a contradiction. 
 Using   $|\xi'|=|\xi|$ and 
simple algebra, this is equivalent to $$\langle\xi',e\rangle^2>(h/(1+h)) |\xi|^2.$$
Further elementary manipulations give
$ \langle \xi',e\rangle=-\langle v,e\rangle    + 2\langle v, e\rangle |\overline{n}|^2 - 2\langle {v +w}, \overline{n}\rangle \langle n, e\rangle.$
From  $|\langle n, e\rangle|\leq 1$,  $|\langle {v+w}, \overline{n}\rangle|\leq( |v|+|w|)|\overline{n}|$, 
    and $|\overline{n}|\leq \sqrt{h}<1$,    we derive
$$|\langle \xi',e\rangle| \geq   |\langle v,e\rangle | - 2\sqrt{h}(|v|+ |w|)    -2h|\langle v,e\rangle |.$$
It follows that  
$$(1-2h) |\langle v,e\rangle |  - 2\sqrt{h}(|v|+|w|)> \sqrt{h}\sqrt{|v|^2 + w^2}$$
is also a   sufficient condition for single collision.
Since $0<\sqrt{x^2+y^2} \leq |x|+|y|$, yet another sufficient condition  is
\begin{equation}\label{ineq}
|w|<(1-2h) \frac{|\langle v,e\rangle|}{3\sqrt{h}}-|v|.\end{equation}
The inequality $1+\langle n_{\smalltriangledown}, \xi_\smalltriangledown\rangle >0$ can be rewritten as
$ \langle n,e\rangle |\langle v,e\rangle | > |\langle \overline{n}, {v+w}\rangle |,$
which is easily seen to be implied by 
$$|w|<\frac{\sqrt{1-h}|\langle v, e\rangle |}{\sqrt{h}}-|v|. $$
But this in turn is implied by inequality \ref{ineq} for sufficiently small $h$. For small enough $h$, we may simplify
\ref{ineq} by writing the right-hand side as $|\langle v,e\rangle|/4\sqrt{h} - |v|$.
 \end{proof}

 \subsection{The operator approximation argument}\label{appargument}
Let $\Phi$ be a smooth function defined on a subset $U\subset \mathbb{R}^m$.
 The {\em $k$th differential} $d^k\Phi_v$   of   $\Phi$ at  $v\in U$ is 
 the symmetric $k$-linear map on  $T_vU$  such that
 $ d^k\Phi_v(e_{i_1}, \dots, e_{i_k})= (D_{i_1}\cdots D_{i_k}\Phi)(v),$
  where $D_i$ is the directional  derivative in the direction of the constant  vector field $e_i$. 
  If $\xi$ is a constant vector field, then $$d^k\Phi_v(\xi,\dots, \xi)=\left.\left(\frac{d}{ds}\right)^k\right|_{s=0}\Phi(v+s\xi).$$
 Let  $g(s)=\Phi(v+s\xi)$. In  the above  notations,  the Taylor approximation of $g(1)$
  up to degree $2$, expanded in derivatives of $g(s)$  at $s=0$, has  the form
  \begin{equation}\label{Taylor} \Phi(v+\xi)=\Phi(v) + d\Phi_v(\xi)+\frac12 d^2\Phi_v(\xi,\xi)+R_v(\xi)\end{equation}
  where 
  $|R_v(\xi)|=\left|\int_0^1\frac{(1-t)^2}{2}(d^3\Phi)_{v+t\xi}(\xi,\xi,\xi)\, dt\right|\leq \frac16 \|d^3\Phi\| |\xi|^3.$
  For the main theorem below, where $\Phi$ will be compactly supported in $\mathbb{H}_-^m$,
   $\|d^3\Phi\|$ may be taken  to be the supremum over $v$ of any choice of  norm on the $3$-linear map at $v$.

We introduce   linear maps   $C:\mathbb{R}^{n+1}\rightarrow \mathbb{R}^k$
and $A:\mathbb{R}^{n+1}\rightarrow \mathbb{R}^{n}$ defined by
$$C:=\int_{\mathbb{R}^k} w^*\otimes w\, d\mu(w),\ \ A:= \int_{\mathbb{T}^n} \overline{n}^*(x)\otimes \overline{n}(x)\, d\lambda(x).$$
Thus,
 by definition,
$Cu=\int_{\mathbb{R}^k} \langle w, u\rangle w\, d\mu(w)$,
 and 
$A:=\int_{\mathbb{T}^n}A(x)\, d\lambda(x)$, where $A(x)$ is the linear map $A(x)u:=\langle u, \overline{n}(x) \rangle\, \overline{n}(x)$.
Then $A$  and $C$ are  non-negative definite symmetric linear maps.

For convenience of notation, we  shall often write  below $E[\cdots]=\int_{\mathbb{T}^n}\cdots\, d\lambda(x)$.
Given a twice differentiable function $\Phi$, let $\text{Hess}_v \Phi$ represent the matrix associated via  the standard inner product $\langle \cdot, \cdot \rangle$ to the second derivatives quadratic form $d^2\Phi_v$.
Let $Q:\mathbb{R}^{n+1}\rightarrow \mathbb{R}^q$ be any self-adjoint  map and denote $A^Q=QAQ$. 
Observe  the identities:
$$
\text{Tr}\left(A^Q\right)=E\left[\left| Q\overline{n}\right|^2\right], \  \text{Tr}\left(A^Q\text{Hess}_v \Phi\right)=E\left[d^2\Phi_v(Q\overline{n},Q\overline{n} )\right], \   \left\langle A^Qu,u\right\rangle =  E\left[\langle Q\overline{n} , u\rangle^2 \right]   $$
as well as 
$$
d\Phi_v A^Q u=E\left[\langle Q\overline{n},u\rangle\, d\Phi_v Q\overline{n} \right],\ 
 d^2\Phi_v\left(A^Qv_1,v_2\right)=E\left[\langle Q\overline{n}, v_1\rangle\, d^2\Phi_v(Q\overline{n},v_2)\right]. 
$$
 Similar identities hold for $C$. In particular,
 $$\text{Tr}(AC)=\text{Tr}\left(A^{C^{1/2}}\right)=\int_{\mathbb{R}^k}E\left[\langle \overline{n},w\rangle^2\right]\, d\mu(w).$$
 The orthogonal 
  projection $\eta\mapsto \eta^{\curlyvee}$
 will be indicated by 
 $Q^\curlyvee:\mathbb{R}^{n+1}\rightarrow \mathbb{R}^{m}$.  If  $\Phi$ is a function
 on $\mathbb{H}_-^m$ which  does not depend on $w\in \mathbb{R}^k$, 
then 
$\text{grad}_v\Phi=Q^\curlyvee\text{grad}_v\Phi$
and $\text{Hess}_v\Phi=Q^\curlyvee (\text{Hess}_v\Phi) Q^\curlyvee$.
Similarly, we may define the orthogonal projection $Q^\curlywedge$ to $\mathbb{R}^k$.
 if  $\mu$ is the Gaussian distribution with temperature parameter $\sigma^2$ (see
Subsection \ref{invariantstandard}) then
  $ C:=\int_{\mathbb{R}^k} w^*\otimes w \, d\mu(w) =\sigma^2 Q^\curlywedge.$
Observe that $A$ goes to $0$ linearly in $h$. The following assumption
 is very commonly  satisfied:
 \begin{assumption}\label{assumption1}
We  suppose that the limit 
$ \Lambda:=\lim_{h\rightarrow 0}A/h$
exists.
\end{assumption}

\begin{theorem}\label{maintheorem}
Let  $\mu$ be a probability measure on $\mathbb{R}^k$ with mean $0$,  finite second moments given by
the matrix $C$ and finite third moments.  Under   Assumption \ref{assumption1}, define the differential
operator
\begin{align*}(\mathcal{L}\Phi)(v)& =  
-4 \left\langle \Lambda\, \text{\em grad}_v \Phi, v\right\rangle  
 +  \frac{2}{\langle v, e\rangle} \left[ \langle \Lambda  v, v\rangle +\text{\em Tr}\left(C\Lambda\right) -  \text{\em Tr}( \Lambda)\langle v,e\rangle^2\right] \langle \text{\em grad}_v\Phi,e\rangle +  \\   
  &   2\langle v,e\rangle \left[ \langle v,e\rangle \text{\em Tr}\left(\Lambda\,  \text{\em Hess}_v \Phi\right) -2 \langle  \Lambda\,  \text{\em Hess}_v\Phi\, e,    v\rangle \right]  +2\left(\langle \Lambda  v, v\rangle + \text{\em Tr}\left(C\Lambda\right)\right)\langle \text{\em Hess}_v  \Phi \, e,e\rangle. 
\end{align*}
on smooth functions $\Phi$. Then
$$\lim_{h\rightarrow 0}\frac{P_h\Phi - \Phi}{h}= \mathcal{L}\Phi $$
uniformly on $\mathbb{H}_-^m$, for each $\Phi\in C_0^\infty(\mathbb{H}_-^m)$.
When $\mu$ is the Gaussian distribution on $\mathbb{R}^k$ with temperature parameter $\sigma^2$ (see Subsection
\ref{invariantstandard}), then  $C=\sigma^2 Q^\curlywedge$. 
\end{theorem}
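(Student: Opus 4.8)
The plan is to compute $P_h\Phi(v)-\Phi(v)$ by a second-order Taylor expansion of $\Phi$ about $v$, using the explicit single-collision formula $V=v+2\zeta_1-2\zeta_2$ of Lemma \ref{zeta}, and then to extract the $O(h)$ coefficient after integrating over the hidden velocity $w$ and the torus coordinate. First I would localize. For $v$ in a fixed neighborhood of $\mathrm{supp}\,\Phi$ one has $|\langle v,e\rangle|\geq\beta_0>0$, so the threshold $W(v,h)=-|v|+|\langle v,e\rangle|/(4\sqrt h)$ of Lemma \ref{lemmaD0} tends to $+\infty$. Splitting the $w$-integral at $|w|=W(v,h)$, the tail $\{|w|\geq W(v,h)\}$ contributes at most $2\|\Phi\|_\infty\,\mu(|w|\geq W(v,h))\leq 2\|\Phi\|_\infty\,W(v,h)^{-3}\int|w|^3\,d\mu(w)=O(h^{3/2})=o(h)$, uniformly; this is where finite third moments first enter. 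On the complementary single-collision region Lemma \ref{lemmaD0} guarantees that $x\mapsto\overline{r}(x,\xi)$ is a diffeomorphism of $\mathbb{T}^n$ with positive Jacobian, so I may change the integration variable from the uniformly distributed initial position $\overline{r}$ to the collision coordinate $x$. This change of variables is the decisive point: by \eqref{det} it produces the weight $J(x,\xi)=\det(d\overline{r}_x)=1+\langle n_\smalltriangledown(x),\xi_\smalltriangledown\rangle$, whose odd-in-$x$ part is $O(\sqrt h)$ and ultimately supplies several first-order terms of $\mathcal{L}$.

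Next I would insert $V=v+\delta$ with $\delta=2\zeta_1-2\zeta_2$ into the Taylor formula \eqref{Taylor}, multiply by $J$, and integrate $d\lambda(x)\,d\mu(w)$. Since $\int_{\mathbb{T}^n}J\,d\lambda=1$ for each fixed $w$ (it is the uniform mass pulled back), the zeroth-order term cancels $\Phi(v)$ exactly, leaving $\int\int[\,d\Phi_v(\delta)+\tfrac12 d^2\Phi_v(\delta,\delta)+R_v(\delta)\,]\,J\,d\lambda\,d\mu$. The organizing principle is the parity decomposition from Lemma \ref{zeta}: under the symmetry of $F$, $\zeta_1$ is odd and $\zeta_2$ is even in $x$, while $\mu$ has mean zero in $w$. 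Writing $\delta=\delta^{\mathrm{odd}}+\delta^{\mathrm{even}}$ with $\delta^{\mathrm{odd}}=2\zeta_1=O(\sqrt h)$ and $\delta^{\mathrm{even}}=-2\zeta_2=O(h)$, and $J=1+J_1$ with $J_1$ odd in $x$ and $O(\sqrt h)$, the only surviving $O(h)$ contributions are products of two odd-in-$x$ factors (hence even, with nonzero $\lambda$-average) or $w$-quadratic factors with nonzero $\mu$-average. The dangerous $O(\sqrt h)$ pieces $\int d\Phi_v(\delta^{\mathrm{odd}})\,d\lambda$ vanish by the symmetry of $F$ (equivalently $\int_{\mathbb{T}^n}\overline{n}\,d\lambda=0$) and by $\int w\,d\mu=0$; this is precisely why symmetry is assumed, since otherwise these terms would blow up after division by $h$.

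I would then collect the genuinely $O(h)$ contributions, evaluating each $x$-integral through $A=\int\overline{n}^*\otimes\overline{n}\,d\lambda$ and $C=\int w^*\otimes w\,d\mu$ and the trace identities recorded before the theorem, and using Assumption \ref{assumption1} to replace $A/h$ by $\Lambda$. The term $\int\int d\Phi_v(\delta^{\mathrm{even}})\,d\lambda\,d\mu$ (its $w$-linear part killed by mean zero) yields $-2\langle\Lambda\,\text{grad}_v\Phi,v\rangle-2\,\text{Tr}(\Lambda)\langle v,e\rangle\langle\text{grad}_v\Phi,e\rangle$. The cross term $\int\int d\Phi_v(\delta^{\mathrm{odd}})\,J_1\,d\lambda\,d\mu$ splits into a $w$-independent piece giving $-2\langle\Lambda\,\text{grad}_v\Phi,v\rangle+\frac{2}{\langle v,e\rangle}\langle\Lambda v,v\rangle\langle\text{grad}_v\Phi,e\rangle$ and a $w$-quadratic piece (from $\langle\overline{n},w\rangle^2$, averaged via $C$) giving $\frac{2}{\langle v,e\rangle}\text{Tr}(C\Lambda)\langle\text{grad}_v\Phi,e\rangle$. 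Finally the second-order term $\tfrac12\int\int d^2\Phi_v(\delta,\delta)\,d\lambda\,d\mu$ contributes, from $\tfrac12 d^2\Phi_v(\delta^{\mathrm{odd}},\delta^{\mathrm{odd}})$ at its $w$-independent part, the Hessian block $2\langle v,e\rangle[\langle v,e\rangle\text{Tr}(\Lambda\,\text{Hess}_v\Phi)-2\langle\Lambda\,\text{Hess}_v\Phi\,e,v\rangle]+2\langle\Lambda v,v\rangle\langle\text{Hess}_v\Phi\,e,e\rangle$, and from its $w$-quadratic part the remaining $2\,\text{Tr}(C\Lambda)\langle\text{Hess}_v\Phi\,e,e\rangle$. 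Summing these exactly reproduces $\mathcal{L}\Phi$; in particular the first pair of terms combine to $-4\langle\Lambda\,\text{grad}_v\Phi,v\rangle$ and the $\langle\text{grad}_v\Phi,e\rangle$ terms assemble into $\frac{2}{\langle v,e\rangle}[\langle\Lambda v,v\rangle+\text{Tr}(C\Lambda)-\text{Tr}(\Lambda)\langle v,e\rangle^2]\langle\text{grad}_v\Phi,e\rangle$.

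Lastly I would control errors uniformly. The remainder obeys $|R_v(\delta)|\leq\tfrac16\|d^3\Phi\|\,|\delta|^3$ with $|\delta|\leq C\sqrt h\,(1+|w|)$ on the single-collision region, so $\int\int|R_v(\delta)|\,J\,d\lambda\,d\mu=O(h^{3/2})=o(h)$, again using the third moment; similarly the errors incurred by replacing the ball $\{|w|<W(v,h)\}$ with all of $\mathbb{R}^k$ when invoking $\int w\,d\mu=0$ and $\int\langle\overline{n},w\rangle^2\,d\mu=\langle C\overline{n},\overline{n}\rangle$ are tails of order $o(h)$. Uniformity over all of $\mathbb{H}_-^m$ follows from compact support: on a neighborhood of $\mathrm{supp}\,\Phi$ the bounds $\beta_0$ and $\sup|v|$ make every estimate uniform, while for $v$ far from $\mathrm{supp}\,\Phi$ both $\Phi(v)$ and $\mathcal{L}\Phi(v)$ vanish and $\Phi(V)\neq0$ forces $|w|$ large, so $P_h\Phi(v)=o(h)$ there. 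The main obstacle is the bookkeeping that keeps the Jacobian $J$ in play: dropping it loses exactly $-2\langle\Lambda\,\text{grad}_v\Phi,v\rangle$ and $\frac{2}{\langle v,e\rangle}(\langle\Lambda v,v\rangle+\text{Tr}(C\Lambda))\langle\text{grad}_v\Phi,e\rangle$, and the cancellation of the $O(\sqrt h)$ terms must use both parities—the symmetry of $F$ in $x$ and the mean-zero property of $\mu$ in $w$—simultaneously.
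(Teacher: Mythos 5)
Your proposal is correct and follows essentially the same route as the paper's proof: truncation to the single-collision region of Lemma \ref{lemmaD0}, the change of variables $\overline{r}\mapsto x$ with Jacobian $1+\langle n_\smalltriangledown,\xi_\smalltriangledown\rangle$, a second-order Taylor expansion organized by the odd/even parities of $\zeta_1$, $\zeta_2$, and the Jacobian correction (the paper implements this as an explicit symmetrization of $\Phi(v\pm2\zeta_1-2\zeta_2)$), and third-moment control of the tails, with the terms assembled via $A$, $C$ and Assumption \ref{assumption1}. If anything, you are slightly more careful than the paper on two points it glosses over: that the tail integral $I_2$ and the $w$-linear terms over the truncated ball $\mathcal{D}_h(v)$ are $o(h)$ rather than exactly zero or merely vanishing, and the uniformity of the limit for $v$ outside a neighborhood of $\mathrm{supp}\,\Phi$.
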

 \begin{proof}
Recall that  the translation invariant probability measure on $\mathbb{T}^n$ is here denoted by $\lambda$.
 With  the notations of Lemma \ref{lemmaD0} in mind we define
 $\mathcal{D}_h(v):=\{w\in \mathbb{R}^k: |w|<W(v,h)\}.$
Then for  $\Phi\in C_0^\infty(\mathbb{H}_-^{m})$,  
 $$(P_h\Phi)(v)=\int_{\mathbb{R}^k}\int_{\mathbb{T}^{n}} \Phi(V(\overline{r}, v, w))\,d\lambda(\overline{r})\, d\mu(w)=I_1+I_2,$$ 
 where  for  $I_1$ the   integration in $w$ is over $\mathcal{D}_h(v)$, and for $I_2$ the integration is over  $\mathcal{D}_h^c(v)$.
 Notice that
 $|I_2|\leq (1-\mu(\mathcal{D}_h(v)))\|\Phi\|_\infty$ goes to  $0$  as $h$ approaches $0$. 
 
 We now concentrate on $I_1$.
 Using Lemma \ref{zeta} and the form of the Jacobian determinant  $\det(d\overline{r}_x)$ given in Lemma \ref{lemmaD0},
 $$I_1=\int_{\mathcal{D}_h(v)} \int_{\mathbb{T}^{n}}\Phi(v+2\zeta_1-2\zeta_2)\left(1+\delta(x,v,w)\right)\,  d\lambda(x)\, d\mu(w),$$
where
$\delta(x,v,w):={\langle \overline{n}(x),{v}+w\rangle}/ \langle n(x),e\rangle\langle v,e\rangle$
and $\zeta_i=\zeta_i(x,v,w)$. To simplify the notation we write
$I_1=\int_{\mathcal{D}_h(v)}I_1(v,w)\, d\mu(w)$, where $$I_1(v,w):=E[\Phi(v+2\zeta_1-2\zeta_2)(1+\delta)],$$ and $E$, defined earlier, indicates
average over $x$. We now use the symmetries: $$\zeta_1(-x,v,w)=-\zeta_1(x,v,w), \ \zeta_2(-x,v,w)=\zeta_2(x,v,w), \ \delta(-x,v,w)=-\delta(x,v,w)$$
to write
$$I_1(v,w)=E\left[\frac{\Phi(v+2\zeta_1-2\zeta_2)+\Phi(v-2\zeta_1-2\zeta_2)}{2}+\frac{\Phi(v+2\zeta_1-2\zeta_2)-\Phi(v-2\zeta_1-2\zeta_2)}{2}\delta\right].$$
 Notice that $\zeta_i$ are of the order $O(h)$ in $h$ for each $v$ and $w$. 
Each $\Phi(v+\eta)$ may be approximated by a Taylor polynomial at $v$ up to degree $2$ (\ref{Taylor}),
$$ \Phi(v+\eta)=\Phi(v)+d\Phi_v \eta + \frac12 d^2\Phi_v(\eta,\eta)+R_v(\eta),$$
where $|R_v(\eta)|\leq \frac16 \|d^3\Phi\| \|\eta\|^3.$
The sum of all terms inside $E[\cdots]$
has second degree Taylor polynomial of the form
$$P_2(v,\zeta_1,\zeta_2)= \Phi(v) +2\left( d\Phi_v(-\zeta_2 +\zeta_1\delta)  + d^2\Phi_v(\zeta_1,\zeta_1) + d^2\Phi_v(\zeta_2,\zeta_2) -2 d^2\Phi_v(\zeta_1,\zeta_2)\delta\right).$$
Keeping only terms   in  $I_1(v,w)$ up to first degree in $h$  yields
$$I_1(v,w)=\Phi(v) +2E\left[ d\Phi_v(-\zeta_2 +\zeta_1\delta)\right]  +2E\left[ d^2\Phi_v(\zeta_1,\zeta_1) \right]  +\text{Error}(v,w,h),$$
where the error term is bounded by a product, $|\text{Error}|\leq C_{\Phi} p_3(|v|,|w|) h^{3/2}$;  here  $C_\Phi$ is a constant depending  only on
the derivatives of $\Phi$ up to third order and  $p_3$ is a polynomial in $|v|, |w|$ of degree at most $3$ that does not depend on $\Phi$
and $h$.
The linear term in $\zeta_i$ contributes to $I_1(v,w)$ the expression
\begin{align*} -4E\left[\langle \overline{n},v\rangle d\Phi_v \overline{n}^\curlyvee \right]  +&
\frac{2}{\langle v,e\rangle} E\left[\langle \overline{n},v\rangle^2 + \langle \overline{n},w\rangle^2  -|\overline{n}|^2\langle v,e\rangle^2\right] d\Phi_v e  +\\
&\frac{4}{\langle v,e\rangle}E\left[  \langle \overline{n}, w\rangle \langle v,e\rangle d\Phi_v\overline{n}^\curlyvee    
 - \langle \overline{n}, w\rangle \langle \overline{n},v\rangle   d\Phi_v e
\right].
\end{align*}
Since the measure $\mu$ is assumed to have mean $0$ (and finite second and third  moments),
the last term above (in which   $w$ appears linearly) vanishes after integration over $\mathcal{D}_h(v)$.
Therefore, the zeroth and first order terms (in $\Phi$) contribution to $I_1$
are
$$I_1=\alpha_h\left\{\Phi(v)
-4E\left[\langle \overline{n},v\rangle d\Phi_v \overline{n}^\curlyvee \right]  +\frac{2}{\langle v,e\rangle} E\left[\langle \overline{n},v\rangle^2 +{\alpha_h^{-1}}\langle C_h\overline{n},\overline{n}\rangle -|\overline{n}|^2\langle v,e\rangle^2\right] d\Phi_v e 
\right\}  +\cdots $$
where  $\alpha_h:=\mu(\mathcal{D}_h(v))$ goes to $1$ and $C_h:=\int_{\mathcal{D}_h(v)} w^*\otimes w \, d\mu(w)$ goes to $C$ as $h$ approaches $0$.

We now proceed to the second order terms. A similar kind of analysis, where we disregard first order terms in $w$
and drop  terms in $h$ of power $3/2$ or greater  into the error term (this involves 
approximating  an overall multiplicative factor $\langle n,e\rangle^2$ by $1$),  yields the second order (in $\Phi$) contribution to 
$I_1$ given by the sum  $a_1+a_2$, where (separately, so as to fit in one line)
\begin{align*}
a_1&=2 \alpha_h\left\{\langle v,e\rangle^2 E\left[ d^2\Phi_v(\overline{n}^\curlyvee, \overline{n}^\curlyvee)\right]-2\langle v,e\rangle E\left[ \langle \overline{n}, v \rangle d^2\Phi_v(\overline{n}^\curlyvee,e)\right]\right\}\\
a_2&=2\alpha_h  E\left[\langle \overline{n},v\rangle^2 \right] d^2\Phi_v(e,e)+2 E\left[ \langle C_h\overline{n},\overline{n}\rangle\right] d^2\Phi_v(e,e).
\end{align*}
 Collecting all terms, and using the identities listed for  $A$ and $C$ noted prior to the statement of the theorem, yields
\begin{align*}
\alpha_h^{-1}I_1 =\Phi(v) 
-4\left \langle   \text{grad}_v\Phi, A v\right\rangle  &
 +  \frac{2}{\langle v, e\rangle} \left[ \langle Av, v\rangle +\alpha_h^{-1}\text{Tr}\left(C_hA\right) -\text{Tr} A\langle v,e\rangle^2\right] \langle \text{grad}_v\Phi,e\rangle     \\
 & + 2\langle v,e\rangle \left[ \langle v,e\rangle \text{Tr}\left(A\circ \text{Hess}_v \Phi\right) -2\left\langle \text{Hess}_v \Phi  A  v,e\right\rangle \right] \\
 & +2\left(\langle Av, v\rangle + \alpha_h^{-1} \text{Tr}\left({C_h}A\right)\right)\langle \text{Hess}_v \Phi\,  e,e\rangle + \text{Error}(v,h)
\end{align*}
where the error term is of   order   $h^{3/2}$. We have used that the third moment of $\mu$ is finite to ensure that
the error term is finite.
If $\Phi\in C_0^\infty(\mathbb{H}_-^{m})$, it follows that as $h\rightarrow 0$,  the quantity   $(I_1-\Phi(v))/h$ has the same
limit as $((P_h\Phi)(v)-\Phi(v))/h$, which is $(\mathcal{L}\Phi)(v)$, the convergence is uniform, and
the limit is $(\mathcal{L}\Phi)(v)$ as claimed.
\end{proof}

Recall that $\mathbb{H}_-^{n+1}=\mathbb{H}_-^m\times \mathbb{R}^k$
is the decomposition of velocity space into ``observable'' and ``hidden'' components, with respective projections  $Q^\curlyvee$ and $Q^{\curlywedge}$ defined earlier.
Let  $A^\curlyvee = Q^\curlyvee A Q^\curlyvee$ and $A^\curlywedge = Q^{\curlywedge}A Q^\curlywedge.$
We make
now an  additional but  very natural  assumption, which   holds in  all the examples discussed in this paper,
that $\Lambda$ is {\em adapted}, according to the  following definition.
\begin{definition}\label{adapted}
The linear map $A$  is  {\em adapted}  
if
 $A=A^\curlyvee+A^\curlywedge$, in which case
a similar decomposition holds for $\Lambda$ under Assumption \ref{assumption1},
and we say that   $\Lambda$ is also  {adapted}.
\end{definition}

For adapted $\Lambda$ and for $C$ and $\sigma^2$ as described at the end of Theorem \ref{maintheorem}, 
$\sigma^2=\text{Tr}(C\Lambda)/\text{Tr}(\Lambda^\curlywedge).$
Also recall the stationary measure $d\eta(v)=\rho(v)\, dV(v)$ described in Proposition \ref{invariantproposition},
whose density  is  
$\varrho(v)=c v_m \exp\left(-\frac12|v|^2/\sigma^2\right)$, where   $c$
is a   constant of normalization.

\begin{corollary}\label{kbig}
Let the same assumptions of Theorem \ref{maintheorem} hold. Further suppose that $k\geq 1$ and that  $\Lambda$ is adapted.
Let 
 $e_1, \dots, e_{m-1}, e=e_m\in \mathbb{R}^m$ be an orthonormal basis of eigenvectors   of
$\Lambda^\curlyvee$, with $\Lambda^\curlyvee e_i= \Lambda e_i = \lambda_i e_i$, and $\lambda_m=0$. 
   The partial derivative of a function $\Phi$ on $\mathbb{H}_-^m$ in the direction $e_i$ is denoted $\Phi_i$
   and the coordinate functions are $v_i:=\langle v, e_i\rangle.$ Then, for $\Phi\in C_0^\infty(\mathbb{H}_-^m)$,
  $$ \left(\frac12\mathcal{L}\Phi\right)(v)=
   \left(
   \sum_{i=1}^{m-1}\lambda_i v_i^2 +\text{\em Tr}(C\Lambda)\right)
   \left[\left(\frac1{v_m} - 
\frac{v_m}{\sigma^2}\right)\Phi_m(v) 
 +\Phi_{mm}(v)\right]
 +\sum_{i=1}^{m-1}\lambda_i\left(\mathcal{L}_i\Phi\right)(v)$$
 where $\mathcal{L}_i$ is defined by
 $$ (\mathcal{L}_i\Phi)(v)=-2v_i\Phi_i(v)
 +  v_m^2\Phi_{ii}(v)-2 v_iv_m \Phi_{im}(v)  -\left[1-
 \left(\sigma^2\text{\em Tr}( \Lambda^\curlyvee)\right)^{-1}\sum_{j=1}^{m-1}\lambda_jv_j^2\right]v_m \Phi_m(v).$$
 This rather cumbersome expression can be greatly simplified by the following coordinate change:
 $x_i:=v_i$ for $i=1,\dots, m-1$ and $x_m:=|v|^2/2\sigma^2$.  Let $h(x)=2\sigma^2 x_m-x_1^2-\dots-x_{m-1}^2.$
 Then
 $$ \left(\frac12\mathcal{L}\Phi\right)(x)=\sum_{i=1}^{m-1}\lambda_i \left(h(x)\Phi_i\right)_i + \frac{\text{\em Tr}(C\Lambda)}{\sigma^4}e^{x_m}\left(h(x)e^{-x_m}\Phi_m\right)_m$$
 where $\Phi$ is a compactly supported function on $\{x:2\sigma^2x_m> x_1^2+\dots+x_{m-1}^2\}$.
\end{corollary}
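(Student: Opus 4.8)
The plan is to treat the two displayed formulas in turn: to derive the first by expanding the operator $\mathcal{L}$ of Theorem \ref{maintheorem} in the given eigenbasis, and then to obtain the second from the first by the stated change of variables. Throughout, both identities are pointwise equalities of differential operators applied to a fixed smooth $\Phi$, so no integration by parts or boundary analysis is needed; the divergence form in the second claim is merely a compact way of rewriting the operator, and compact support plays no role in the verification itself.

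For the first identity, I would begin by using that $\Phi$ depends only on the observable velocity, so that $\text{grad}_v\Phi$ and $\text{Hess}_v\Phi$ take values in $\mathbb{R}^m$ (as recorded before the theorem), together with the adapted hypothesis. Since $\Lambda$ always annihilates $e$ (because $A$, hence $\Lambda$, takes values in the horizontal space), one has $\lambda_m=0$, and every pairing of $\Lambda$ with a gradient or Hessian of $\Phi$ may be replaced by the corresponding pairing with $\Lambda^\curlyvee$. In the eigenbasis $e_1,\dots,e_{m-1},e_m=e$ the invariants then unwind coordinatewise: $\langle\Lambda\,\text{grad}_v\Phi,v\rangle=\sum_{i<m}\lambda_i v_i\Phi_i$, $\langle\Lambda v,v\rangle=\sum_{i<m}\lambda_i v_i^2=:s$, $\text{Tr}(\Lambda\,\text{Hess}_v\Phi)=\sum_{i<m}\lambda_i\Phi_{ii}$, $\langle\Lambda\,\text{Hess}_v\Phi\,e,v\rangle=\sum_{i<m}\lambda_i v_i\Phi_{im}$, $\langle\text{Hess}_v\Phi\,e,e\rangle=\Phi_{mm}$, and $\langle v,e\rangle=v_m$. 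The only term demanding care is $\text{Tr}(\Lambda)\langle v,e\rangle^2$, where $\text{Tr}(\Lambda)$ is the full trace on $\mathbb{R}^{n+1}$; I would split it as $\text{Tr}(\Lambda^\curlyvee)+\text{Tr}(\Lambda^\curlywedge)=\sum_{i<m}\lambda_i+\text{Tr}(C\Lambda)/\sigma^2$, invoking $\sigma^2=\text{Tr}(C\Lambda)/\text{Tr}(\Lambda^\curlywedge)$ from just above the corollary. Substituting all of this into $\tfrac12\mathcal{L}\Phi$ and collecting the coefficients of $\Phi_i,\Phi_{ii},\Phi_{im},\Phi_m,\Phi_{mm}$ separately produces, after writing $t:=\text{Tr}(C\Lambda)$, exactly the $(s+t)$-prefactored bracket plus $\sum_i\lambda_i\mathcal{L}_i\Phi$; the match is a term-by-term bookkeeping check, the only nontrivial cancellation being that the $\Phi_m$ coefficient assembles into $\tfrac{s+t}{v_m}-\text{Tr}(\Lambda^\curlyvee)v_m-\tfrac{t v_m}{\sigma^2}$.

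For the second identity, the decisive observation is that in the new coordinates $h(x)=2\sigma^2 x_m-x_1^2-\cdots-x_{m-1}^2=|v|^2-(v_1^2+\cdots+v_{m-1}^2)=v_m^2$, so the divergence-form weight is simply $v_m^2$. I would then record the chain rule $\partial_{x_i}=\partial_{v_i}-(v_i/v_m)\partial_{v_m}$ for $i<m$ and $\partial_{x_m}=(\sigma^2/v_m)\partial_{v_m}$, expand the outer derivatives in the $x$-variables to get $\sum_{i<m}\lambda_i(-2x_i\Phi_i+h\Phi_{ii})+\tfrac{t}{\sigma^4}\bigl[(2\sigma^2-h)\Phi_m+h\Phi_{mm}\bigr]$ (after the factors $e^{x_m}$, $e^{-x_m}$ cancel), and finally re-express these $x$-derivatives in terms of $v$-derivatives to recover the operator from the first identity.

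The routine substitutions aside, the one genuinely delicate point is this last second-derivative transformation: because the substitution is nonlinear in the final coordinate, $\partial_{x_i}\partial_{x_i}$ and $\partial_{x_m}\partial_{x_m}$ generate extra first-order $\Phi_m$ contributions from differentiating the coefficients $v_i/v_m$ and $\sigma^2/v_m$, and these must combine with the first-order terms already present. This is where I expect to spend the most care; the identity $h=v_m^2$ is precisely what makes the stray $\Phi_m$ terms collapse into the single coefficient found in the first formula, so that the two forms coincide exactly.
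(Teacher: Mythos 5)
Your proposal is correct and takes the same route as the paper, whose proof of this corollary is simply the remark that it follows from Theorem \ref{maintheorem} by straightforward calculation; your expansion in the eigenbasis (using $\Lambda e=0$, adaptedness, and $\mathrm{Tr}(\Lambda^\curlywedge)=\mathrm{Tr}(C\Lambda)/\sigma^2$) and your verification of the coordinate change via $h(x)=v_m^2$ and the chain rule $\partial_{x_i}=\partial_{v_i}-(v_i/v_m)\partial_{v_m}$, $\partial_{x_m}=(\sigma^2/v_m)\partial_{v_m}$ are exactly the computations the paper leaves to the reader, and the coefficient assemblies you single out (in particular $\tfrac{s+t}{v_m}-\mathrm{Tr}(\Lambda^\curlyvee)v_m-\tfrac{tv_m}{\sigma^2}$ for $\Phi_m$) check out.
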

\begin{proof}
This is  derived from Theorem \ref{maintheorem} by  straightforward  calculations. 
\end{proof}

As a special case,  suppose that  $n=0$. Then $m=1$ and 
 $\mathbb{H}_-^m=(-\infty,0)$, in the direction of the single vector $e$. Write  $\Lambda=\lambda>0$ and $C=\sigma^2>0$.  Here, only the speed, $v\in (0,\infty)$, is of interest.
 We denote by $\Phi'$ and $\Phi''$ the first and second derivatives with respect to $v$. 
  Then 
  \begin{corollary}[Dimension $1$]\label{cordim1} Under the assumptions of Theorem \ref{maintheorem} and that $n=m=k=1$, then for
  any compactly supported smooth function $\Phi$ on $(0,\infty)$,  
  \begin{equation}\label{operator}(\mathcal{L}\Phi)(v)=2\lambda\sigma^2\left[\left(\frac{1}{v}-\frac{v}{\sigma^2}\right)\Phi' + \Phi''\right]. \end{equation}
  This can   be written in Sturm-Liouville form as
  $$\frac{1}{2\lambda\sigma^2}(\mathcal{L}\Phi)(v)=\frac1\varrho \frac{d}{dv}\left(\varrho \frac{d\Phi}{dv}\right)$$
where $$ \varrho=\sigma^{-2}v\exp\left(-\frac{v^2}{2\sigma^2}\right).$$
\end{corollary}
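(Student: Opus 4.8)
The plan is to specialize the general operator of Theorem \ref{maintheorem} (equivalently, Corollary \ref{kbig}) to the scalar data of this example, and then to recognize the resulting second-order operator as a logarithmic derivative. First I would fix the decomposition $\mathbb{R}^{n+1}=\mathbb{R}^2=\mathbb{R}^1\times\mathbb{R}^1$ into the hidden line (spanned by $e_1$) and the observable line (spanned by $e=e_2$). Since $n=1$, the horizontal component $\overline{n}(x)$ of the unit normal lies entirely in the hidden direction $e_1$, so $A=\int \overline{n}^*\otimes\overline{n}\,d\lambda$ is supported on $e_1$; dividing by $h$ and passing to the limit, $\Lambda$ is supported there as well. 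Hence $\Lambda$ is adapted with $\Lambda^\curlyvee=0$ and $\Lambda^\curlywedge=\lambda$, so that $\Lambda$ annihilates the observable direction $e$. With $C=\sigma^2 Q^\curlywedge$ this gives the scalar identities $\text{Tr}(C\Lambda)=\lambda\sigma^2$ and $\text{Tr}(\Lambda)=\lambda$, which are the only quantities needed.

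The cleanest route to (\ref{operator}) is through Corollary \ref{kbig}, which applies since $k=1\geq1$ and $\Lambda$ is adapted. When $m=1$ every sum $\sum_{i=1}^{m-1}$ there is empty, so all the $\mathcal{L}_i$ terms and the $\sum_i\lambda_i v_i^2$ terms disappear, leaving
$$\left(\tfrac12\mathcal{L}\Phi\right)(v)=\text{Tr}(C\Lambda)\left[\left(\frac1{v}-\frac{v}{\sigma^2}\right)\Phi'+\Phi''\right];$$
substituting $\text{Tr}(C\Lambda)=\lambda\sigma^2$ yields exactly (\ref{operator}). Alternatively I would substitute directly into the formula of Theorem \ref{maintheorem}: because $\Lambda e=0$ and both $\text{grad}_v\Phi$ and $\text{Hess}_v\Phi$ are supported in the $e$-direction (as $\Phi$ depends only on the observable variable), every term containing a factor $\Lambda\,\text{grad}_v\Phi$, $\Lambda\,\text{Hess}_v\Phi$, or $\langle\Lambda v,v\rangle$ vanishes, leaving only the $\frac{2}{\langle v,e\rangle}\text{Tr}(C\Lambda)\,\Phi'$, the $-\frac{2}{\langle v,e\rangle}\text{Tr}(\Lambda)\langle v,e\rangle^2\,\Phi'$, and the $2\,\text{Tr}(C\Lambda)\,\Phi''$ contributions; collecting these reproduces the same expression. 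Here I would note that the operator is even under $v_m\mapsto -v_m$ (the coefficient $\frac1{v_m}-\frac{v_m}{\sigma^2}$ is odd while $d/dv_m$ also changes sign), so it may be written consistently in the speed variable $v\in(0,\infty)$.

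For the Sturm--Liouville form I would simply expand the right-hand side. Writing $\varrho(v)=\sigma^{-2}v\exp(-v^2/2\sigma^2)$, one has
$$\frac1\varrho\frac{d}{dv}\!\left(\varrho\,\frac{d\Phi}{dv}\right)=\frac{\varrho'}{\varrho}\,\Phi'+\Phi''=(\log\varrho)'\,\Phi'+\Phi'',$$
and since $\log\varrho=\mathrm{const}+\log v-v^2/2\sigma^2$ we get $(\log\varrho)'=\frac1v-\frac{v}{\sigma^2}$. This is precisely $\frac{1}{2\lambda\sigma^2}(\mathcal{L}\Phi)(v)$ by (\ref{operator}), which completes the identification.

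There is essentially no hard step here: the whole argument is a specialization plus a one-line logarithmic-derivative computation. The only point that requires care is correctly identifying the observable direction $e$ as the one annihilated by $\Lambda$ (so that $\Lambda^\curlyvee=0$), together with the harmless bookkeeping of the speed-versus-signed-velocity parametrization; once these are settled, every remaining manipulation is routine.
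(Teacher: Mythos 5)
Your proof is correct and takes essentially the same route as the paper: the paper's own proof is precisely your main argument, namely specializing Corollary \ref{kbig} to $m=1$ (so the sums over $i=1,\dots,m-1$ are empty) and using $\mathrm{Tr}(C\Lambda)=\lambda\sigma^2$. Your additional direct substitution into Theorem \ref{maintheorem}, the parity remark about the speed variable, and the logarithmic-derivative computation for the Sturm--Liouville form are just explicit elaborations of steps the paper leaves as ``straightforward.''
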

\begin{proof} This is  a straightforward consequence of Corollary \ref{kbig}. Note that the coordinates $v_i$
are absent  for $i=1,\dots, m-1$
and $\text{Tr}(C\Lambda^\curlywedge)=\lambda\sigma^2$.
\end{proof}

Consider now the case  $k=0$, or $m=n+1$. This means that only the initial position in
$\mathbb{T}^n$ is random,  while the initial velocity is fully specified.
Then, as the speed $|v|$ of the billiard trajectory does not change 
after collision, we may restrict the state space of the Markov operator $P$ to the hemisphere of radius $\rho:=|v|$ in
$\mathbb{H}_-^{n+1}$. This hemisphere is diffeomorphic to the ball $D^{n}_\rho$ of radius $\rho$, via the
linear   projection $Q:\mathbb{R}^{n+1}\rightarrow \mathbb{R}^n$ taking $e$ to $0$ and fixing the other coordinate vectors.
In this special case, we can restrict attention to functions of the form $\Phi=\Psi\circ Q$, where
$\Phi(\overline{v})$ is a smooth function on $D^n_\rho$ and $\overline{v}=Qv$. 
For these functions, $\langle \text{grad}_v\Phi , e\rangle=0$ and 
$ \langle \text{Hess}_v \Phi\,   u_1,u_2\rangle=0$ if either $u_1$ or $u_2$ or both are  multiples of  $e$. 
Thus the operator $\mathcal{L}$  reduces to
 $$( \mathcal{L}\Psi)(\overline{v})=-4\langle Q\,  \text{grad}_{\overline{v}} \Psi, \Lambda  \overline{v}\rangle + 2\left(\rho^2 -|\overline{v}|^2\right) \text{Tr}(\Lambda\,  \text{Hess}_{\overline{v}}^Q \Psi).$$

\begin{corollary}[Constant speed] \label{constspeed}
Let the same assumptions of Theorem \ref{maintheorem} hold, and that $k=0$.
Without loss of generality,  let the particle speed be $1$.
Let $\lambda_i\geq 0$, $i=1, \dots, n$ and $e_i$ be
as in Corollary \ref{kbig}, while  $v_i$ is now used as the  coordinates on $D_1^n$ whose coordinate vector fields are the $e_i$. 
In this new   system the operator $\mathcal{L}$ has the Sturm-Liouville form
\begin{equation}\label{genLegendreOp} (\mathcal{L}\Psi)(v) = 2\sum_{i=1}^n \lambda_i  \left((1-|v|^2) \Psi_i\right)_i\end{equation}
where the index in $\Psi_i$ indicates partial derivative in $v_i$.
In dimension $n=1$,   $\mathcal{L}$ is    the standard Legendre's differential operator
on the interval $[-1,1]$ up to a multiplicative constant. 
 \end{corollary}
\begin{proof}
This readily follows from the general form of the operator.
\end{proof}

When $k\geq 1$, define the inner product   
 $$\langle \Phi, \Psi\rangle := \int_{\mathbb{H}_-^m} \Phi(v)\Psi(v) \varrho(v) \, dV(v)$$
on
 compactly supported smooth functions. When $k=0$, we restrict the functions to
 the unit hemisphere equipped with the measure (given by a scalar multiple of) $\langle v,e\rangle \, d\omega(v)$,
 where $\omega$ is the Euclidean  volume measure on the hemisphere, and define the inner product accordingly.
 (In this latter case, the density of the measure is proportional to the cosine of the angle between the vector $v$ and
 the unit normal to the boundary of $\mathbb{H}_-^m$.)
 We say that $\mathcal{L}$ is {\em symmetric} if 
$\langle \mathcal{L}\Phi, \Psi\rangle = \langle \Phi, \mathcal{L} \Psi \rangle$.

\begin{theorem}\label{elliptic}
Under the general conditions of Theorem \ref{maintheorem}, assume further that $\Lambda$ is adapted and
positive definite. (Recall that $\Lambda$  is in general non-negative definite.) Then $\mathcal{L}$ is a second order, symmetric, elliptic
operator on $C_0^\infty(\mathbb{H}_-^m)$.
\end{theorem}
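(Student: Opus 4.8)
The plan is to sidestep the unwieldy expression for $\mathcal{L}$ in Theorem \ref{maintheorem} and instead read off symmetry and ellipticity from the Sturm--Liouville (divergence) representations already extracted in Corollary \ref{kbig} (for $k\geq 1$) and Corollary \ref{constspeed} (for $k=0$). These apply precisely because $\Lambda$ is assumed adapted, and together they exhaust the two cases of the theorem; they are the concrete incarnation of the factorization $\mathcal{L}=-\mathcal{D}^*\mathcal{D}$ of the introduction, which is what makes symmetry transparent. That $\mathcal{L}$ is second order with coefficients smooth on the \emph{open} half-space is immediate from either the explicit formula or these divergence forms, the only denominators $\langle v,e\rangle$ and $h(x)$ being nonzero in the interior. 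So the real content is symmetry and ellipticity.

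For \textbf{symmetry} with $k\geq 1$, I would pass to the coordinates $x_i=v_i$, $x_m=|v|^2/2\sigma^2$ of Corollary \ref{kbig} and integrate by parts against the weight $e^{-x_m}$. Each of the first $m-1$ summands $\lambda_i(h\Phi_i)_i$ is formally self-adjoint for $e^{-x_m}\,dx$ because the weight depends only on $x_m$ and passes through the integration by parts in $x_i$; the last summand is the canonical form $\tfrac1w(wa\Phi_m)_m$ with $w=e^{-x_m}$. Concretely,
\[
\big\langle \tfrac12\mathcal{L}\Phi,\Psi\big\rangle=-\sum_{i=1}^{m-1}\lambda_i\!\int h\,\Phi_i\Psi_i\,e^{-x_m}dx-\frac{\text{Tr}(C\Lambda)}{\sigma^4}\!\int h\,\Phi_m\Psi_m\,e^{-x_m}dx,
\]
manifestly symmetric in $\Phi,\Psi$, the boundary terms vanishing because $\Phi,\Psi$ are compactly supported in the interior (equivalently, the leading coefficient $h$ vanishes on $\partial\mathbb{H}_-^m$). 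The one bookkeeping step is matching $e^{-x_m}\,dx$ to the prescribed measure $\varrho\,dV$: since the Jacobian gives $dx=(|v_m|/\sigma^2)\,dV$ and $\varrho=c|v_m|e^{-|v|^2/2\sigma^2}$, indeed $\varrho\,dV\propto e^{-x_m}\,dx$. The case $k=0$ is identical via Corollary \ref{constspeed}: $\mathcal{L}\Psi=2\sum_i\lambda_i((1-|v|^2)\Psi_i)_i$ is in divergence form, self-adjoint for Lebesgue measure on the ball, which by Proposition \ref{invariantproposition} is exactly the relevant inner-product measure.

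For \textbf{ellipticity} I would compute the principal symbol. Replacing $\text{Hess}_v\Phi$ by $\zeta\otimes\zeta$ in the second-order part of $\mathcal{L}$ and collecting terms yields
\[
\sigma_{\mathcal{L}}(v,\zeta)=2\big\langle\Lambda\,u,\,u\big\rangle+2\,\text{Tr}(C\Lambda)\,\langle\zeta,e\rangle^2,\qquad u:=\langle v,e\rangle\,\zeta-\langle\zeta,e\rangle\,v,
\]
a sum of two nonnegative terms; equivalently, in the $x$-coordinates above the symbol is $2h(x)$ times the diagonal form with entries $\lambda_1,\dots,\lambda_{m-1},\text{Tr}(C\Lambda)/\sigma^4$, patently positive definite on $\{h>0\}$. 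To see it vanishes only at $\zeta=0$ on the interior $\{\langle v,e\rangle\neq 0\}$: the second term forces $\langle\zeta,e\rangle=0$, whence $u=\langle v,e\rangle\,\zeta$, and positivity of the first term in the horizontal directions then forces $\zeta=0$ since $\langle v,e\rangle\neq 0$.

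The \textbf{main subtlety}, and the step I would treat most carefully, is the meaning of ``$\Lambda$ positive definite.'' Because $\overline{n}(x)$ has no $e$-component, $A$ and hence $\Lambda^\curlyvee$ annihilate $e$, so $\lambda_m=0$ automatically and $\Lambda^\curlyvee$ is never literally positive definite on all of $\mathbb{R}^m$. The hypothesis must be read as positivity of $\Lambda$ in the horizontal observable directions together with $\text{Tr}(C\Lambda)>0$ (guaranteed once $\Lambda^\curlywedge\succ 0$, since then $\text{Tr}(C\Lambda)=\sigma^2\,\text{Tr}(\Lambda^\curlywedge)$ in the Gaussian case). The content of the symbol computation is exactly that the degeneracy of $\Lambda^\curlyvee$ along $e$ is compensated by the $\text{Tr}(C\Lambda)\langle\zeta,e\rangle^2$ term: without the hidden-variable contribution the operator would fail to be elliptic in the $e$-direction. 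Finally, $\mathcal{L}$ is genuinely elliptic only in the interior, its leading coefficient degenerating at $\partial\mathbb{H}_-^m$ where $h=0$, consistent with the statement being asserted on $C_0^\infty(\mathbb{H}_-^m)$.
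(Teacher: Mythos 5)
Your proof is correct, and its ellipticity half coincides with the paper's own argument: the paper likewise reads off the symbol from Corollary \ref{kbig}, obtaining $\sigma_{\mathcal{L}}(\xi)=2\sum_{i=1}^{m-1}\lambda_i\left(v_m\xi_i-v_i\xi_m\right)^2+2\sigma^2\,\text{Tr}(\Lambda^\curlywedge)\,\xi_m^2$, which is exactly your $2\langle\Lambda u,u\rangle+2\,\text{Tr}(C\Lambda)\langle\zeta,e\rangle^2$ written in eigencoordinates of $\Lambda$ (using $\text{Tr}(C\Lambda)=\sigma^2\text{Tr}(\Lambda^\curlywedge)$), with the same vanishing argument: the second term forces $\xi_m=0$, then $v_m\neq 0$ forces the remaining components to vanish. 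Where you genuinely differ is symmetry: the paper's proof dismisses it as ``a long but completely straightforward calculation'' (direct integration by parts on the formula of Theorem \ref{maintheorem}), whereas you integrate by parts on the divergence form of Corollary \ref{kbig} against the weight $e^{-x_m}\,dx$; this is precisely the Sturm--Liouville shortcut the paper itself advertises immediately \emph{after} the theorem (and formalizes as $\mathcal{L}=-\mathcal{D}^*\mathcal{D}$ in the subsequent proposition), and your Jacobian check $\varrho\,dV\propto e^{-x_m}\,dx$ supplies the one bookkeeping step needed to make that shortcut rigorous, so your route is cleaner and loses nothing. Finally, your reading of ``positive definite'' is not merely careful but necessary: since $\overline{n}(x)\perp e$ one always has $\Lambda e=0$, so the hypothesis can only mean positivity on $e^\perp$ (i.e.\ $\lambda_1,\dots,\lambda_{m-1}>0$ together with $\Lambda^\curlywedge$ positive definite, whence $\text{Tr}(C\Lambda)>0$); the paper glosses over this point --- its proof even writes ``$\lambda_i>0$ for $i=1,\dots,m$,'' contradicting $\lambda_m=0$ from Corollary \ref{kbig} --- so your closing paragraph repairs a genuine imprecision in the statement rather than introducing one.
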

 \begin{proof}
The claims are obtained by a long but completely straightforward calculation.
 We only check ellipticity for $k\geq 1$. (The case $k=0$ is even simpler.)
Recall that the symbol of the second order operator $\mathcal{L}$ is the quadratic form
$\sigma_{\mathcal{L}}(\xi)=\sum a_{ij}(v)\xi_i\xi_j$, where the $a_{ij}(v)$ are the coefficients of the
second order terms of $\mathcal{L}$ and $\xi$ is a vector of dimension $m$. 
Starting from  the expression of $\mathcal{L}$ given  in   Corollary \ref{kbig},
the symbol can the written in the  form
 $$\sigma_{\mathcal{L}}(\xi)= 2\sum_{i=1}^{m-1}\lambda_i \left(v_m\xi_i-v_i\xi_m\right)^2 + 2 \sigma^2 \text{Tr}(\Lambda^\curlywedge) \xi_m^2.$$
 Since $\lambda_i>0$ for $i=1, \dots, m$, and both   $\sigma^2>0$ and $v_m>0$, then  $\sigma_{\mathcal{L}}(\xi)=0$
 only if $\xi=0$. 
 \end{proof}
 
 That $\mathcal{L}$ is symmetric and elliptic can be seen more easily by noting that
 it can be put in Sturm-Liouville form relative to the MB-distribution $\varrho$. To see this, we
 first introduce the following first order differential operators  in $\mathbb{R}^m$. (The subindex $m$ in
 $\Phi_m$ and $\langle\cdot,\cdot\rangle_m$ indicates derivative
 in the direction $e=e_m$.) For a smooth function $\Phi$,
 $$(\mathcal{D}\Phi)(v):= \sqrt{2}\left[\Lambda^{1/2}\left(v_m\, \text{grad}_v\, \Phi- \Phi_m(v) v\right) +\text{Tr}\left(C\Lambda\right)^{1/2}\Phi_m(v) e\right].$$
If $\Xi$ is a vector field in $\mathbb{R}^m$,  
$$ (\mathcal{D}'\Xi)(v):=\sqrt{2}\left[- \text{div}\left(v_m\, \Lambda^{1/2}\Xi\right)+\langle v,  \Lambda^{1/2}\Xi\rangle_m - \text{Tr}\left(C\Lambda\right)^{1/2}\langle \Xi, e\rangle_m\right].$$
Then $\mathcal{D}'$ is the  adjoint of $\mathcal{D}$ relative to the Lebesgue measure on $\mathbb{R}^m$. That is, if either $\Phi$ or $\Xi$ is compactly supported, then
$$ \int_{\mathbb{R}^m}\left(\mathcal{D}\Phi\right)\, \Xi\, dV= \int_{\mathbb{R}^m}\Phi \left(\mathcal{D}'\Xi\right)\, dV.$$
We now restrict these operators to the half-space $\mathbb{H}_-^m$ and
define 
$ \mathcal{D}^*\Xi := \varrho^{-1}\mathcal{D}'\left(\varrho\, \Xi\right).$
Clearly, $\mathcal{D}^*$ is the adjoint of $\mathcal{D}$ with respect to the density $\varrho$:
$$\int_{\mathbb{H}_-^m} \left(\mathcal{D}^*\Xi\right) \Phi\, \varrho \, dV=  \int_{\mathbb{H}_-^m}\Xi\cdot \left(\mathcal{D} \Phi\right)\, \varrho \, dV $$
\begin{proposition}
Under the assumptions of Theorem \ref{maintheorem} and that $\Lambda$ is adapted, the differential
operator $\mathcal{L}$ has the  form
$$\mathcal{L}\Phi=-  \mathcal{D}^* \mathcal{D}\Phi$$
where $\Phi$ is a smooth, compactly supported function in $\mathbb{H}_-^m$.
\end{proposition}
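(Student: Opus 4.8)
The plan is to establish the pointwise identity $-\mathcal{D}^*\mathcal{D}\Phi = \mathcal{L}\Phi$ by direct computation, treating it as an algebraic verification rather than an analytic one. Since $\mathcal{D}'$ has already been shown to be the Lebesgue-adjoint of $\mathcal{D}$ and $\mathcal{D}^*\Xi := \varrho^{-1}\mathcal{D}'(\varrho\,\Xi)$ by definition, nothing about adjoints remains to be proved; the content is purely the matching of two second-order differential expressions. Accordingly, I would set $\Xi := \mathcal{D}\Phi$, expand $\mathcal{D}'(\varrho\,\Xi)$ using its explicit three-term formula, divide by $\varrho$, and compare the result with the explicit operator $\mathcal{L}$ of Theorem \ref{maintheorem} (equivalently, with its Sturm--Liouville form in Corollary \ref{kbig}, whose weighted-divergence shape already makes the $-\mathcal{D}^*\mathcal{D}$ structure transparent).

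Two simplifications carry most of the weight. First, I would work in the eigenframe $e_1,\dots,e_{m-1},e=e_m$ of $\Lambda^\curlyvee$ from Corollary \ref{kbig}, in which $\lambda_m=0$ forces $\Lambda^{1/2}e=0$; consequently $\Xi=\mathcal{D}\Phi$ has $e_i$-component $\sqrt{2\lambda_i}\,(v_m\Phi_i - v_i\Phi_m)$ for $i<m$ and $e_m$-component $\sqrt{2\,\text{Tr}(C\Lambda)}\,\Phi_m$, with the two families largely decoupled because $\Lambda^{1/2}$ annihilates the normal component. Adaptedness of $\Lambda$ is exactly what makes the hidden space enter only through the scalar $\text{Tr}(C\Lambda)=\sigma^2\text{Tr}(\Lambda^\curlywedge)$. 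Second, the crucial analytic input is the logarithmic derivative of the Maxwell--Boltzmann density $\varrho=c\,v_m\exp(-|v|^2/2\sigma^2)$: one has $\varrho^{-1}\partial_i\varrho = -v_i/\sigma^2$ for $i<m$ and $\varrho^{-1}\partial_m\varrho = 1/v_m - v_m/\sigma^2$, and it is precisely these that turn the divergence in $\mathcal{D}'$ into the first-order drift coefficients appearing in $\mathcal{L}$.

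With these in hand the computation proceeds term by term. Differentiating the divergence term $-\text{div}(v_m\Lambda^{1/2}\varrho\,\Xi)$ yields the tangential second-order block $v_m^2\lambda_i\Phi_{ii}$, part of the cross term in $\Phi_{im}$, and tangential first-order remainders; the term $\langle v,\Lambda^{1/2}\varrho\,\Xi\rangle_m$ supplies the remaining half of the $\Phi_{im}$ cross term, the normal second-order contribution $2\sum_i\lambda_i v_i^2\,\Phi_{mm}$, and further first-order pieces; and the term $-\text{Tr}(C\Lambda)^{1/2}\langle\varrho\,\Xi,e\rangle_m$ produces exactly the $\text{Tr}(C\Lambda)\bigl[(1/v_m-v_m/\sigma^2)\Phi_m+\Phi_{mm}\bigr]$ block. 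After dividing by $\varrho$ and collecting, these reproduce the four groups of terms in $\mathcal{L}$. When $k=0$ the argument is identical but shorter: $\text{Tr}(C\Lambda)=0$ and $\Phi_m=0$ for functions pulled back from the ball, and one recovers the Legendre operator of Corollary \ref{constspeed}.

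The main obstacle is not conceptual but the bookkeeping of cancellations. The two factors of $\sqrt2$ must combine to the overall factor matching Corollary \ref{kbig}; the cross term $\Phi_{im}$ only comes out with the correct coefficient after adding the contributions of the divergence term and of $\langle v,\Lambda^{1/2}\varrho\,\Xi\rangle_m$; and, most delicately, the first-order $\Phi_m$ coefficient is correct only because the $\sigma^{-2}$-contributions arising separately from the normal block and from the tangential blocks cancel, leaving the clean $2\sum_i\lambda_i(v_i^2/v_m - v_m)\Phi_m$. As a safeguard against slips I would in parallel verify the equivalent weak identity $\langle\mathcal{L}\Phi,\Psi\rangle = -\langle\mathcal{D}\Phi,\mathcal{D}\Psi\rangle$ for all compactly supported smooth $\Phi,\Psi$, which is immediate from the adjoint relation $\langle\mathcal{D}^*\Xi,\Phi\rangle=\langle\Xi,\mathcal{D}\Phi\rangle$ (boundary terms vanish by compact support), and then pass back to the pointwise statement by nondegeneracy of the $L^2(\varrho)$ pairing on $C_0^\infty(\mathbb{H}_-^m)$.
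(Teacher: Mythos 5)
Your proposal is correct and takes essentially the same route as the paper: the paper's own proof is just the assertion that the identity $\mathcal{L}\Phi=-\mathcal{D}^*\mathcal{D}\Phi$ is a ``tedious but entirely straightforward exercise,'' namely the direct term-by-term expansion of $\varrho^{-1}\mathcal{D}'(\varrho\,\mathcal{D}\Phi)$ that you carry out. Your details check out --- the eigenframe reduction with $\Lambda^{1/2}e=0$, the logarithmic derivatives $\varrho_i/\varrho=-v_i/\sigma^2$ and $\varrho_m/\varrho=1/v_m-v_m/\sigma^2$, and the cancellation of the $\sigma^{-2}$ terms leaving the tangential $\Phi_m$ coefficient $2\sum_i\lambda_i\left(v_i^2/v_m-v_m\right)$ --- and the resulting expression agrees with the form of $\mathcal{L}$ in Corollary \ref{kbig}.
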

\begin{proof}
This amounts to a tedious but entirely straightforward exercise.
\end{proof}
\section{Diffusion limits of the iterated scattering chains }
One reason for relating the Markov operator $P$ to an elliptic second order differential operator is
the desire to obtain diffusion approximations of  Markov chains associated to our random mechanical models.
In this section we turn to such   approximations. 

\subsection{Generalities about diffusion limits}
The results stated here are corollaries of Theorem \ref{maintheorem} and    general facts about diffusion limits from
\cite{sv}, Chapter 11.
 
Let $\mathcal{H}$ be an open connected  subset of $\mathbb{R}^m$.
 We shall soon specialize to $\mathcal{H}=\mathbb{H}_-^m$ after reviewing  some background information.
 Let $\Omega$ be the space of continuous functions
 from $[0,\infty)$ to $\mathcal{H}$. Define $\pi_t:\Omega\rightarrow \mathcal{H}$ such that
 $\pi_t(\omega)=\omega(t)$. Then $\Omega$ has a natural metric topology making it a Polish space,  relative to which
 these  position maps are continuous. Let $\mathcal{M}$ be the Borel $\sigma$-algebra on $\Omega$, which is also the 
$\sigma$-algebra on $\Omega$  generated by all the $\pi_t$. Let $\mathcal{M}_t$ be the $\sigma$-algebra
on $\Omega$ generated by the $\pi_s$ such that $0\leq s\leq t$.

Now consider   a (time independent) second order elliptic differential operator  $\mathcal{L}$ with continuous coefficients acting  on compactly supported  smooth functions on $\mathcal{H}$. After \cite{sv}, a probability measure $\mathbb{P}$ on $(\Omega, \mathcal{M})$
is said to be a solution to the martingale problem for $\mathcal{L}$  starting  from $(s,v)\in [0,\infty)\times \mathcal{H}$
if the $\mathbb{P}$-probability  of the set of paths $\omega$ such that $\omega(t)=v$ for $0\leq t\leq s$ is $1$ and
$\varphi\circ \pi_t - \int_s^t (\mathcal{L}\varphi)\circ \pi_\tau\, d\tau$ is a $\mathbb{P}$-martingale after time $s$ for all
compactly supported smooth $\varphi$ on $\mathcal{H}$.

\begin{lemma}\label{functionphi}
A sufficient condition for the martingale problem to have exactly one solution 
is the existence of  (i) a non-negative function  $\varphi\in C^2(\mathcal{H})$ such that $\varphi(u_n)\rightarrow \infty$
as $u_n\rightarrow \infty$ (that is, $u_n$ eventually  leaves every compact set as $n\rightarrow \infty$)
and (ii) a constant $\lambda>0$  such that $\mathcal{L}\varphi\leq \lambda\varphi$. 
\end{lemma}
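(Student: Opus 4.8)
The plan is to deduce global well-posedness from two ingredients: local (interior) well-posedness on relatively compact subdomains, which I import from \cite{sv}, and a non-explosion estimate built from the Lyapunov function $\varphi$. First I would fix a starting point $v\in\mathcal{H}$ (reducing to $s=0$ by time-translation, since $\mathcal{L}$ is time-independent) and choose an exhausting sequence of relatively compact open sets $G_1\subset G_2\subset\cdots$ with $\overline{G_n}\subset\mathcal{H}$ and $\bigcup_n G_n=\mathcal{H}$, together with the exit times $\tau_n:=\inf\{t:\pi_t\notin G_n\}$ and the explosion time $\tau:=\lim_n\tau_n$. On each $G_n$ the coefficients of $\mathcal{L}$ are continuous and $\mathcal{L}$ is uniformly elliptic, so the stopped martingale problem has a unique solution; since these stopped problems are mutually consistent under restriction, the patching principle of \cite{sv} produces a solution of the martingale problem for $\mathcal{L}$ defined up to $\tau$. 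The whole argument then reduces to showing $\tau=\infty$ almost surely and that this forces uniqueness.

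The heart of the self-contained part is the non-explosion estimate, which is where conditions (i) and (ii) enter. Let $\mathbb{P}$ be any solution and consider $M_t:=e^{-\lambda t}\varphi(\pi_t)$. Because $\varphi$ is not compactly supported I would first localize: on $\overline{G_n}$ replace $\varphi$ by a compactly supported $C^2$ function agreeing with it there, so that the defining martingale property applies and, by Dynkin's formula, $M_{t\wedge\tau_n}$ acquires the drift $e^{-\lambda t}(\mathcal{L}\varphi-\lambda\varphi)(\pi_t)\,dt$. Condition (ii) makes this drift nonpositive, so $M_{t\wedge\tau_n}$ is a nonnegative supermartingale and
$$ \mathbb{E}\!\left[e^{-\lambda(t\wedge\tau_n)}\varphi(\pi_{t\wedge\tau_n})\right]\le \varphi(v). $$
On the event $\{\tau_n\le t\}$ the path sits on $\partial G_n$ at time $\tau_n$, so the left side is at least $e^{-\lambda t}\big(\inf_{\partial G_n}\varphi\big)\,\mathbb{P}(\tau_n\le t)$, whence
$$ \mathbb{P}(\tau_n\le t)\le \frac{e^{\lambda t}\,\varphi(v)}{\inf_{\partial G_n}\varphi}. $$
Condition (i)---that $\varphi$ tends to infinity as the path leaves every compact subset of $\mathcal{H}$, which here includes both escape to infinity and approach to $\partial\mathbb{H}_-^m$---forces $\inf_{\partial G_n}\varphi\to\infty$, so $\mathbb{P}(\tau\le t)=0$ for every $t$ and the process is non-exploding.

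With non-explosion in hand, uniqueness follows from local uniqueness: any two solutions $\mathbb{P},\mathbb{P}'$, when stopped at $\tau_n$, solve the same stopped problem on $G_n$ and hence agree on $\mathcal{M}_{\tau_n}$; since $\tau_n\uparrow\infty$ almost surely for both (by the estimate above applied to each), they agree on $\bigvee_n\mathcal{M}_{\tau_n}=\mathcal{M}$, giving exactly one solution, while the same non-explosion shows the patched solution is defined for all time.

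I expect the main obstacle to be the imported local well-posedness of \cite{sv}, which rests essentially on the interior ellipticity and continuity of the coefficients and supplies both the uniqueness used in the patching and the local existence. Within the present argument the only delicate points are justifying the martingale identity for the non-compactly-supported $\varphi$ by the localization described above, and noting that the degeneracy of $\mathcal{L}$ at $\partial\mathbb{H}_-^m$ is harmless precisely because $\varphi$ is required to blow up there as well, so that the boundary is not reached in finite time.
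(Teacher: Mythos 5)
Your proof is correct and is essentially the argument the paper relies on: the paper's proof of this lemma is simply a citation of Theorem 10.2.1 of \cite{sv}, whose proof is exactly your combination of local well-posedness on an exhausting sequence of relatively compact sets (patched via stopped martingale problems) with the non-explosion estimate coming from the supermartingale $e^{-\lambda t}\varphi(\pi_t)$ and the bound $\mathbb{P}(\tau_n\le t)\le e^{\lambda t}\varphi(v)/\inf_{\partial G_n}\varphi$. The delicate points you flag---extending the martingale identity from compactly supported smooth test functions to the localized $C^2$ function, and treating approach to $\partial\mathbb{H}_-^m$ on the same footing as escape to infinity---are handled in the same way in that reference, so your reconstruction matches the intended proof.
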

\begin{proof}
The  proof is easily extracted from the proof of Theorem 10.2.1, p. 254, of \cite{sv}.
\end{proof}

Given a family of  transition probabilities kernels  $u\mapsto \Pi_h(u,\cdot)$   with state space $\mathcal{H}$ parametrized by $h$, define
for each $v\in \mathcal{H}$  the family $\mathbb{P}^h_v$ of probability measures on $\Omega$ characterized by
the following properties (\cite{sv}, p. 267):
\begin{enumerate}
\item The set $\pi_0^{-1}(\{v\})$ has $\mathbb{P}^h_v$-probability $1$;
\item The set of polygonal paths $\omega$ such that
$$\omega(t)=\frac{(k+1)h-t}{h}\omega(kh) +\frac{t-kh}{h}\omega((k+1)h), $$
for all integer $k\geq 0$, has $\mathbb{P}^h_v$-probability $1$;
\item The conditional probability given $ \mathcal{M}_{kh}$ equals  $\Pi_h(\omega(kh),\cdot)$; that is,
$$\mathbb{P}^h_v(\pi_{(k+1)h}\in \Gamma \, |\,  \mathcal{M}_{kh})=\Pi_h(\pi_{kh}, \Gamma)$$ for all $k\geq 0$ and all $\Gamma$ in the Borel $\sigma$-algebra of $\mathcal{H}$.
\end{enumerate}

Conditions 1 and 2 mean that the distribution of $(\pi_{0}, \pi_h, \pi_{2h}, \dots)$ 
is the time-homogeneous Markov chain starting  from $v$ with transition probabilities $u\mapsto \Pi_h(u,\cdot)$.
Notice that we have used before the notation $\nu_u$ for $\Pi_h(u,\cdot)$.  Let $P_h$ be the corresponding operator
on compactly supported smooth functions and let $A_h:=P_h-I$. 
Condition 3 above is equivalent to: 
$$\varphi\circ\pi_{kh} -\sum_{j=0}^{k-1} (A_h\varphi)\circ\pi_{jh} \text{ is a } (\mathcal{M}_{kh}, \mathbb{P}^h_v)\text{-martingale}$$
 for  every  compactly supported smooth function $\varphi$ on $\mathcal{H}$.

The key fact we need from the general theory of diffusion processes can now be stated.
\begin{theorem}\label{maingeneral0}
Assume that (i) the elliptic second order differential operator  $\mathcal{L}$ (with continuous coefficients) is
such that for each $u\in \mathcal{H}$ there is a unique solution $\mathbb{P}_u$ to the martingale problem for
$\mathcal{L}$ starting at $u$; and (ii)
$ h^{-1} A_h\Phi $ converges to $ \mathcal{L}\Phi $ uniformly on compact sets 
for every smooth compactly supported $\Phi$   on $\mathcal{H}$. 
Then $\lim_{h\rightarrow 0}\mathbb{P}^h_u=\mathbb{P}_u$
and convergence is uniform in $u$ over compact subsets of $\mathcal{H}$.
\end{theorem}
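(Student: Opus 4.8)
The plan is to follow the weak-convergence architecture of Stroock and Varadhan \cite{sv}, of which this statement is essentially the convergence theorem of Chapter 11, Section 11.2. The desired convergence $\mathbb{P}^h_u\to\mathbb{P}_u$ is obtained in three steps: (a) show that the family $\{\mathbb{P}^h_u\}_{h>0}$ is relatively compact (tight) in the weak topology on $\mathcal{P}(\Omega)$; (b) show that every weak limit point solves the martingale problem for $\mathcal{L}$ starting at $u$; and (c) invoke the well-posedness hypothesis (i), so that all limit points coincide with the unique solution $\mathbb{P}_u$. Since $\mathcal{P}(\Omega)$ is metrizable ($\Omega$ being Polish), a relatively compact family with a single weak limit point converges, which yields the claim.

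First I would carry out step (b), the identification of limits, since this is where hypothesis (ii) enters transparently. Recall the discrete martingale characterization recorded just before the statement: for every compactly supported smooth $\varphi$ on $\mathcal{H}$, the process $\varphi\circ\pi_{kh}-\sum_{j=0}^{k-1}(A_h\varphi)\circ\pi_{jh}$ is an $(\mathcal{M}_{kh},\mathbb{P}^h_u)$-martingale. Writing $A_h=h\,(h^{-1}A_h)$, the sum
\begin{equation*}
\sum_{j=0}^{k-1}(A_h\varphi)\circ\pi_{jh}=\sum_{j=0}^{k-1}h\,(h^{-1}A_h\varphi)\circ\pi_{jh}
\end{equation*}
is a Riemann sum for $\int_0^{kh}(\mathcal{L}\varphi)\circ\pi_\tau\,d\tau$. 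If $\mathbb{P}^{h_n}_u\to\mathbb{Q}$ weakly along a sequence $h_n\to0$, then, using the uniform-on-compacts convergence $h^{-1}A_h\varphi\to\mathcal{L}\varphi$ of hypothesis (ii), the continuity of $\tau\mapsto\pi_\tau$ on $\Omega$, and the boundedness of $\varphi$ and of the compactly supported $\mathcal{L}\varphi$, one passes to the limit to conclude that $\varphi\circ\pi_t-\int_0^t(\mathcal{L}\varphi)\circ\pi_\tau\,d\tau$ is a $\mathbb{Q}$-martingale; since $\mathbb{Q}(\pi_0=u)=1$ is preserved in the limit, $\mathbb{Q}$ solves the martingale problem for $\mathcal{L}$ starting at $u$.

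The main obstacle is step (a), tightness, because convergence of the generators on $C_0^\infty(\mathcal{H})$ does not by itself preclude escape of mass or the formation of large increments (and, indeed, the Riemann-sum argument in step (b) implicitly relies on the chain remaining in compact sets, where a Lindeberg-type jump bound $h^{-1}\Pi_h(u,\{|y-u|>\epsilon\})\to0$ holds). I would verify the moment and jump conditions of \cite{sv} by returning to the quantitative estimates behind Theorem \ref{maintheorem}: the single-collision domain $\mathcal{D}_h$ of Lemma \ref{lemmaD0}, the Gaussian tails of $\mu$, and the error bound $|\mathrm{Error}|\le C_\Phi\,p_3(|v|,|w|)\,h^{3/2}$ together control $h^{-1}\int(y-u)^i\,\Pi_h(u,dy)$ and $h^{-1}\int(y-u)^i(y-u)^j\,\Pi_h(u,dy)$ (identifying them with the drift and diffusion coefficients of $\mathcal{L}$) and furnish the Lindeberg estimate, all locally uniformly in $u$. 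These in turn give a Kolmogorov/Aldous continuity bound of the form $E^{\mathbb{P}^h_u}[\,|\pi_t-\pi_s|^{2+\delta}\,]\le C\,|t-s|^{1+\delta/2}$ on compact time intervals, which is the standard sufficient condition for tightness of the polygonal interpolations in $\Omega$.

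Combining (a)--(c) gives $\mathbb{P}^h_u\to\mathbb{P}_u$ for each fixed $u$. The asserted uniformity over compact subsets of $\mathcal{H}$ follows because well-posedness of the martingale problem renders $u\mapsto\mathbb{P}_u$ weakly continuous (the solutions forming a strong Markov, Feller family), while the estimates in step (a) are already locally uniform in the starting point; a routine argument then upgrades pointwise weak convergence to uniform weak convergence on compacta, as in \cite{sv}, Chapter 11. I expect tightness to be the only genuinely substantive point, the remaining steps being either bookkeeping or direct citation.
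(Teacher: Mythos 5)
Your three-step architecture (tightness, identification of limit points via hypothesis (ii), uniqueness via hypothesis (i), then uniformity on compacta from the weak continuity of $u\mapsto\mathbb{P}_u$) is exactly the Stroock--Varadhan Chapter 11 scheme; the paper's own proof is nothing more than the remark that the statement is adapted from Theorem 11.2.3 of \cite{sv}, so in spirit you are on the paper's route, and your steps (b) and (c) are fine in outline. The difficulty is step (a). Theorem \ref{maingeneral0} is a statement about an \emph{arbitrary} family of kernels $\Pi_h$ on $\mathcal{H}$ satisfying (i) and (ii); it is only later, in Theorem \ref{maingeneral}, that it gets applied to the billiard kernels. Your tightness argument invokes the single-collision domain of Lemma \ref{lemmaD0}, the Gaussian tails of $\mu$, and the error bound from the proof of Theorem \ref{maintheorem} --- none of which are hypotheses of the present theorem. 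As written, this proves the conclusion only for the paper's specific kernels, not the general statement being asserted.

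Moreover, the premise motivating that detour is not quite right: hypothesis (ii) \emph{does} yield the Lindeberg-type jump bound and the convergence of truncated drift and covariance, locally uniformly. Fix a compact $K\subset\mathcal{H}$ and $\epsilon>0$ small enough that $B(x,\epsilon)\subset\mathcal{H}$ for $x\in K$; choose $\varphi$ smooth with $0\le\varphi\le1$, $\varphi\equiv1$ on $B(x,\epsilon/2)$ and $\varphi=0$ off $B(x,\epsilon)$. Then $h^{-1}\Pi_h\bigl(x,\mathcal{H}\setminus B(x,\epsilon)\bigr)\le -h^{-1}(A_h\varphi)(x)\rightarrow -(\mathcal{L}\varphi)(x)=0$, since $\varphi$ is constant near $x$ and $\mathcal{L}$ has no zeroth-order term; a finite covering of $K$ makes this uniform, and testing with compactly supported functions that agree with $y\mapsto y_i-x_i$ and $y\mapsto(y_i-x_i)(y_j-x_j)$ near $x$ recovers the conditions on $b_h$ and $a_h$ of \cite{sv}, Section 11.2 (this equivalence is precisely the first lemma of that section). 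What (ii) genuinely cannot control is escape to infinity or to the boundary of $\mathcal{H}$, but the remedy for that is not sharper moment estimates: it is localization --- stop the chains upon exiting large compacts, prove convergence of the stopped processes, and use the well-posedness hypothesis (i) to patch the localized limits together (Theorem 11.1.4 of \cite{sv}). With these two substitutions in step (a), your argument becomes self-contained under the stated hypotheses and coincides with the proof the paper points to.
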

\begin{proof}
A proof is easily adapted from that of  Theorem 11.2.3 of \cite{sv}.
\end{proof}

\subsection{Back to the random scattering  operators}
We now set $\mathcal{H}=\mathbb{H}_-^m$.
It was shown above that   for the  convergence of the Markov chain to a diffusion process it is sufficient to have:
 (i)    convergence of $h^{-1}A_h\Phi$ to $\mathcal{L}\Phi$ for every compactly supported smooth $\Phi$
as in Theorem \ref{maingeneral} and  (ii) a function $\varphi$ as in Lemma \ref{functionphi}.
The convergence required  in (i) is implied  by Theorem \ref{maintheorem}. We now show the existence of a  $\varphi$.

\begin{lemma}\label{lemmafunctionphi}
Let $\mathcal{L}$ be the differential operator of Theorem \ref{maintheorem}. Suppose that $\Lambda$ is adapted.
Then there is a smooth function $\varphi:\mathbb{H}_-^m\rightarrow (0,\infty)$ and a positive constant $\lambda$ such that
$\mathcal{L}\varphi\leq \lambda\varphi$ and $\varphi(v)\rightarrow \infty$ as $|v|\rightarrow \infty$ or $v$ approaches the
boundary of $\mathbb{H}_-^m$.  
\end{lemma}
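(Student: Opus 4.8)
The plan is to exhibit an explicit Lyapunov function $\varphi$ and verify the drift inequality $\mathcal{L}\varphi \le \lambda\varphi$ by direct computation, using the Sturm--Liouville structure of $\mathcal{L}$ rather than grinding through the raw second-order expression of Theorem \ref{maintheorem}. The two properties demanded of $\varphi$ pull in opposite directions: $\varphi$ must blow up both as $|v|\to\infty$ (so a confining term like $|v|^2$ or $e^{|v|^2/2\sigma^2}$ is natural) and as $v$ approaches the boundary $\{v_m=0\}$ of $\mathbb{H}_-^m$ (so a term like $-\log|v_m|$ or $1/|v_m|$ is natural). A reasonable first guess is therefore $\varphi(v)=e^{\,\alpha |v|^2}-\beta\log|v_m|$ for suitable positive constants, or additively $\varphi(v)=1+|v|^2 - \log|v_m|$; the coordinate change $x_m=|v|^2/2\sigma^2$ from Corollary \ref{kbig} suggests that working in the $x$-coordinates, where $\mathcal{L}$ has the clean divergence form $\tfrac12\mathcal{L}\Phi=\sum_{i<m}\lambda_i(h(x)\Phi_i)_i + \tfrac{\mathrm{Tr}(C\Lambda)}{\sigma^4}e^{x_m}(h(x)e^{-x_m}\Phi_m)_m$ with $h(x)=2\sigma^2 x_m-x_1^2-\cdots-x_{m-1}^2$, will make the verification most transparent.

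\emph{First} I would record what ``escaping every compact set'' means for $\mathbb{H}_-^m$: a sequence $v_n$ eventually leaves every compact subset of the \emph{open} half-space iff either $|v_n|\to\infty$ or $\mathrm{dist}(v_n,\partial\mathbb{H}_-^m)=|(v_n)_m|\to 0$. So $\varphi$ must tend to $+\infty$ under both degenerations, which is exactly what the two summands above are designed to handle separately. \emph{Second}, I would compute $\mathcal{L}\varphi$ on the trial function. Since $\mathcal{L}$ is linear it suffices to bound $\mathcal{L}$ applied to each summand. For the growth-at-infinity part, applying $\mathcal{L}$ to $|v|^2$ (or to $e^{x_m}$ in the $x$-coordinates) produces a drift term of the form $-4\langle\Lambda v,v\rangle$ plus lower-order pieces; the crucial point is that adaptedness and positive-semidefiniteness of $\Lambda$ make the leading contribution \emph{negative} (inward-pointing drift), so that $\mathcal{L}\varphi$ is controlled \emph{above} by a constant multiple of $\varphi$ rather than growing faster. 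For the boundary part, $\mathcal{L}(-\log|v_m|)$ must be checked not to blow up faster than $\varphi$ itself as $v_m\to0$; here the degeneracy $h(x)\to0$ (equivalently the factor $v_m$ multiplying the dangerous terms in Corollary \ref{kbig}) should tame the singularity, since the coefficient of $\Phi_{mm}$ vanishes at the boundary at the right rate.

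\emph{The main obstacle} I anticipate is the boundary behavior at $v_m\to0$: one must confirm that the singular terms generated by differentiating $\log|v_m|$ are either absorbed by the vanishing diffusion coefficient or dominated by $\lambda\varphi$, and this balance is delicate because $\varphi$ itself is only logarithmically singular there. If a pure log term proves too weak to beat the drift, the fix is to strengthen the boundary penalty (e.g.\ use a small negative power $|v_m|^{-\epsilon}$ or add a bounded-below correction) and re-check the inequality; the Sturm--Liouville weight $\varrho(v)=c\,v_m e^{-|v|^2/2\sigma^2}$, under which $\mathcal{L}=-\mathcal{D}^*\mathcal{D}$ is symmetric, gives a principled way to choose the penalty so that the boundary terms organize into a manifestly sign-definite expression. \emph{Once} both summands are bounded by $\lambda\varphi$ for a common large constant $\lambda$, and $\varphi>0$ and properly proper are confirmed, the hypotheses of Lemma \ref{functionphi} are met and the lemma follows; the heavy lifting is entirely in the one-time choice of $\varphi$ and the two elementary but careful drift estimates.
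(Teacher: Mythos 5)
Your proposal is correct and takes essentially the same route as the paper: the paper's Lyapunov function is $\varphi(v)=c+8\,\mathrm{Tr}(C\Lambda)+2\,\mathrm{Tr}(\Lambda)+|v|^2-\ln u(v_m)$, where $u$ is a smooth cutoff with $u(s)=-s$ near $0$ and $u\equiv 1$ away from the boundary — precisely your additive ansatz of quadratic growth plus a logarithmic boundary barrier, verified by direct computation of $\mathcal{L}\varphi$ region by region. The delicate point you flag resolves favorably, though by a slightly different mechanism than you guess: applying the operator of Corollary \ref{kbig} to $-\ln|v_m|$, the singular contribution of the $(1/v_m)\Phi_m$ term (whose coefficient does \emph{not} vanish at the boundary) is cancelled exactly by the $\Phi_{mm}$ term, leaving a constant, so the pure log barrier already yields $\mathcal{L}\varphi\leq\lambda\varphi$ and no strengthening to $|v_m|^{-\epsilon}$ is needed.
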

\begin{proof}
As $\Lambda$ is adapted, we may assume that  $\mathcal{L}$ is as in Corollary \ref{kbig} ($k\geq 1$)
or as in Corollary \ref{constspeed} ($k=0$). The case $k=0$ is much simpler: take $\varphi(v)=c+|v|^2$
for a big enough constant $c$. So we assume   $\mathcal{L}$ is as in Corollary \ref{kbig}.
Let $u:(-\infty, 0)\rightarrow (0,1]$ be a smooth function such that $u(s)=1$ for $|s|\geq 1$ and
$u(s)=-s$ for $|s|\leq 1/2$.
 Now define
 $$\varphi(v)=c+8\text{Tr}(C\Lambda)+2\text{Tr}(\Lambda) + |v|^2 -\ln u(v_m) $$
 where $c$ is a positive constant still  to be chosen.
 It is clear that $\varphi(v)$ goes to infinity as claimed. 
 A straightforward computation shows 
 $$\mathcal{L}\varphi=\begin{cases} 8\text{Tr}(C\Lambda) + 2\text{Tr}(\Lambda) & \text{ if }  |v_m|\leq 1/2\\
8\text{Tr}(C\Lambda)  & \text{ if } |v_m|\geq 1
\end{cases}$$
If $1/2\leq |v_m|\leq 1$, the coefficients of $\mathcal{L}\varphi$ are seen to depend quadratically on $v_1, \dots, v_{m-1}$ and
are bounded in $v_m,  1/v_m$.  This shows that in this range of $v_m$ there are  constants $c, \lambda$ greater than $1$
such that $\mathcal{L}\varphi \leq c +\lambda |v|^2$. On the other ranges, $\mathcal{L}\varphi \leq \varphi$, in which $c=0$. 
\end{proof}

Thus we conclude:

\begin{theorem}\label{maingeneral}
The martingale problem for the random billiard differential operator $\mathcal{L}$ under the assumption that $\Lambda$ is
adapted has a unique solution $=\mathbb{P}_u$ for each $u\in \mathbb{H}_-^m$. 
Furthermore,
 $\lim_{h\rightarrow 0}\mathbb{P}^h_u=\mathbb{P}_u,$
 where $\mathbb{P}^h_u$ solves  the martingale problem for the  Markov chain with transition probabilities operator $P_h$. 
Convergence is uniform in $u$ over compact subsets of $\mathcal{H}$.
\end{theorem}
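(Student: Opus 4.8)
The plan is to assemble three ingredients already established above: the generator convergence of Theorem \ref{maintheorem}, the Lyapunov-type function of Lemma \ref{lemmafunctionphi}, and the abstract diffusion-limit result of Theorem \ref{maingeneral0}. The statement splits into two assertions — unique solvability of the martingale problem for $\mathcal{L}$, and the weak convergence $\mathbb{P}^h_u\to\mathbb{P}_u$ — and each reduces to checking the hypotheses of a result proved earlier. There is no new computation to perform; the work is entirely in matching hypotheses.

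First I would dispatch well-posedness. Since $\Lambda$ is adapted, Lemma \ref{lemmafunctionphi} supplies a smooth $\varphi:\mathbb{H}_-^m\to(0,\infty)$ and a constant $\lambda>0$ with $\mathcal{L}\varphi\le\lambda\varphi$, such that $\varphi(v)\to\infty$ both as $|v|\to\infty$ and as $v$ approaches $\partial\mathbb{H}_-^m$. This is precisely the pair $(\varphi,\lambda)$ demanded by Lemma \ref{functionphi}, with $\mathcal{H}=\mathbb{H}_-^m$ and the condition ``$u_n\to\infty$'' read as leaving every compact subset of the open domain (which here includes approaching the boundary). Lemma \ref{functionphi} then furnishes a unique solution $\mathbb{P}_u$ to the martingale problem for $\mathcal{L}$ started at each $u\in\mathbb{H}_-^m$.

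Next I would verify the convergence hypothesis of Theorem \ref{maingeneral0}. Writing $A_h=P_h-I$, Theorem \ref{maintheorem} gives $h^{-1}A_h\Phi=h^{-1}(P_h\Phi-\Phi)\to\mathcal{L}\Phi$ uniformly on all of $\mathbb{H}_-^m$ for every $\Phi\in C_0^\infty(\mathbb{H}_-^m)$, hence a fortiori uniformly on compact subsets, which is exactly condition (ii). The structural requirements on $\mathcal{L}$ are also met: the coefficients, read off from Corollary \ref{kbig} (or Corollary \ref{constspeed} when $k=0$), are continuous on the open domain $\mathbb{H}_-^m$, since the only singular terms involve $1/\langle v,e\rangle$, which is finite away from the boundary, and ellipticity is provided by Theorem \ref{elliptic}. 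With condition (i) secured in the previous step and condition (ii) in hand, Theorem \ref{maingeneral0} yields $\lim_{h\to0}\mathbb{P}^h_u=\mathbb{P}_u$ uniformly in $u$ over compact subsets of $\mathbb{H}_-^m$, which is the assertion.

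The substantive difficulty is concentrated entirely in the well-posedness step, and specifically in the boundary behavior. The operator $\mathcal{L}$ degenerates and carries singular drift coefficients as $v_m\to0$, so a priori the limiting diffusion could reach $\partial\mathbb{H}_-^m$ in finite time and uniqueness could fail there without imposing boundary conditions. The role of the $-\ln u(v_m)$ contribution to $\varphi$ in Lemma \ref{lemmafunctionphi} is precisely to force $\varphi$ to blow up at the boundary while preserving $\mathcal{L}\varphi\le\lambda\varphi$; this rules out the process reaching the boundary, while the $|v|^2$ term rules out explosion at infinity. Thus the one genuinely delicate point — well-posedness of a degenerate-elliptic martingale problem on the open half-space — has already been absorbed into the construction of $\varphi$, and what remains here is a direct citation of the abstract machinery.
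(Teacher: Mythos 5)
Your proposal is correct and follows exactly the paper's own route: the paper's proof of Theorem \ref{maingeneral} is precisely the citation of Theorem \ref{maingeneral0} together with Lemmas \ref{functionphi} and \ref{lemmafunctionphi}, with the generator convergence supplied by Theorem \ref{maintheorem}. Your added commentary on the role of the $-\ln u(v_m)$ term in controlling the boundary degeneracy is accurate and consistent with the paper's construction of $\varphi$.
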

\begin{proof}
This follows from Theorem \ref{maingeneral0} and   Lemmas \ref{functionphi} and \ref{lemmafunctionphi}.
\end{proof}

It is useful to express the diffusion process with infinitesimal generator $\mathcal{L}$
as an stochastic differential equation.  

\begin{proposition}[It\^o SDE] We consider separately the cases $k=0$ and $k>0$. The operator  $\Lambda$ is
assumed positive definite on $\mathbb{R}^{m-1}$.
\begin{enumerate}
\item 
Under the conditions of Corollary \ref{constspeed} ($k=0$), the It\^o differential equation associated to the infinitesimal generator $\mathcal{L}$
has the form
$$dV_t= -4\Lambda V_t\, dt + \left[2\left(1-|V_t|^2\right) \Lambda\right]^{1/2}dB_t,$$ where $B_t$ is $n$-dimensional Brownian motion
restricted to the disc $D_1^n$. The Lebesgue measure on the disc is stationary for this process.
\item Under the conditions of Corollary \ref{kbig} ($k>0$), the same It\^o differential   equation
has the form
$$dV_t= Z(V_t)\, dt + b(V_t)\, dB_t,$$ where $B_t$ is $m$-dimensional Brownian motion restricted
to $\mathbb{H}_-^m$, $Z(v)$ is the vector field
$$Z(v):=-2\Lambda v + \left(\frac1{v_m}-\frac{v_m}{\sigma^2}\right) \left(\langle \Lambda v,v\rangle +\text{\em Tr}(C\Lambda)\right)e_m$$
and $b(v)$ is the linear map
$$ b(v)u:=v_m \Lambda^{1/2}u-\langle \Lambda^{1/2}v,u\rangle e_m +\text{\em Tr}\left(C\Lambda\right)^{1/2}u_m e_m. $$
The stationary distribution is Maxwell-Boltzmann, as described in Proposition \ref{invariantproposition}.
\end{enumerate}
\end{proposition}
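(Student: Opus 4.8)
The plan is to use the standard dictionary between an It\^o SDE and its infinitesimal generator: a process solving $dV_t = Z(V_t)\,dt + b(V_t)\,dB_t$ has generator $\Phi\mapsto \langle Z,\text{grad}\,\Phi\rangle + \tfrac12\,\text{Tr}\!\left(bb^{\mathsf T}\,\text{Hess}\,\Phi\right)$, and conversely a diffusion with prescribed generator $\mathcal{L}$ is represented by any pair $(Z,b)$ for which this expression equals $\mathcal{L}\Phi$ on $C_0^\infty$. Since Theorem \ref{maingeneral} already furnishes a unique solution $\mathbb{P}_u$ to the martingale problem for $\mathcal{L}$, and solutions of the martingale problem for a second order elliptic operator coincide in law with weak solutions of the associated SDE (\cite{sv}), it suffices to exhibit $Z$ and $b$ realizing $\mathcal{L}$; no separate existence/uniqueness argument for the SDE itself is needed. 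The whole proposition thus reduces to matching the first and second order coefficients of $\mathcal{L}$ and verifying the two stationarity claims.

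For the case $k=0$ I would start from the divergence form \eqref{genLegendreOp} of Corollary \ref{constspeed}, which differentiates to $\mathcal{L}\Psi = -4\langle \Lambda v,\text{grad}\,\Psi\rangle + 2(1-|v|^2)\,\text{Tr}\!\left(\Lambda\,\text{Hess}\,\Psi\right)$ (the form recorded just before the corollary). The first order part reads off the drift $Z(v)=-4\Lambda v$, while the second order part identifies the diffusion matrix with a scalar multiple of $(1-|v|^2)\Lambda$; taking the nonnegative symmetric square root gives $b(v)=\left[2(1-|v|^2)\Lambda\right]^{1/2}$, the orthogonal ambiguity being immaterial for the law. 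Stationarity of Lebesgue measure is then immediate: the operator \eqref{genLegendreOp} is already in divergence form with unit weight, so $\int_{D_1^n}(\mathcal{L}\Psi)\,dv = 0$ for every compactly supported $\Psi$, i.e.\ Lebesgue measure solves $\mathcal{L}^*\eta = 0$.

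For $k>0$ the cleanest route is through the factorization $\mathcal{L}=-\mathcal{D}^*\mathcal{D}$ established just above the proposition. The key observation is that $\mathcal{D}\Phi = \sqrt2\,b^{\mathsf T}\,\text{grad}\,\Phi$ with exactly the $b$ of the statement, which one checks directly from the definition of $\mathcal{D}$ using $g=\text{grad}\,\Phi$ and $g_m=\Phi_m$. Consequently the principal symbol of $-\mathcal{D}^*\mathcal{D}$ is proportional to $|b^{\mathsf T}\xi|^2=\langle bb^{\mathsf T}\xi,\xi\rangle$, so $bb^{\mathsf T}$ is precisely the diffusion matrix; here the cross terms collapse because $\lambda_m=0$ forces $\Lambda^{1/2}e_m=0$, the one structural fact that makes the computation close. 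Expanding $\mathcal{D}^*\mathcal{D}$ and collecting the genuinely first order terms — exactly the contributions of $\mathcal{D}^*=\varrho^{-1}\mathcal{D}'(\varrho\,\cdot)$ that differentiate the weight, hence carrying the logarithmic derivative $\varrho_m/\varrho = 1/v_m - v_m/\sigma^2$ — assembles into the drift $Z$. These two pieces can be cross-checked against the explicit coefficients of Corollary \ref{kbig} and the symbol recorded in Theorem \ref{elliptic}. Stationarity of $d\eta=\varrho\,dV$ then follows from the same factorization: for compactly supported $\Phi$ one has $\int_{\mathbb{H}_-^m}(\mathcal{L}\Phi)\,\varrho\,dV = -\langle \mathcal{D}\Phi,\mathcal{D}\mathbf 1\rangle_\varrho = 0$, since $\mathcal{D}\mathbf 1 = 0$ and $\mathcal{D}\Phi$ is compactly supported inside the open half-space (so no boundary term arises at $v_m=0$, where $\varrho$ vanishes anyway). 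This is $\mathcal{L}^*\eta=0$, matching Proposition \ref{invariantproposition}.

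The routine-but-delicate core, and the place where the proof actually has to be carried out, is the coefficient bookkeeping in the $k>0$ case: correctly tracking the numerical constants in passing from $-\mathcal{D}^*\mathcal{D}$ to the $(Z,b)$ form, and making sure the weight-differentiation terms are folded into $Z$ rather than silently discarded, with the $\Lambda^{1/2}e_m=0$ cancellation invoked at precisely the right moment. No hard analysis is expected beyond this: the degeneracy of $b$ at $\partial\mathbb{H}_-^m$ and the blow-up $Z_m\sim 1/v_m$ (respectively $1-|v|^2\to 0$ in the Legendre case) would ordinarily threaten existence, but well-posedness of the diffusion is already secured by the Lyapunov function of Lemma \ref{lemmafunctionphi} together with the uniqueness in Theorem \ref{maingeneral}, so these boundary effects need not be revisited here.
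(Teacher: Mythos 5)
Your overall strategy is the same as the paper's own proof, which is a one-line appeal to the generator--SDE dictionary of \cite{ok} plus symmetry of $\mathcal{L}$ for the stationarity claims. Several of your ingredients are correct and in fact sharper than anything the paper records: the identity $\mathcal{D}\Phi=\sqrt{2}\,b^{\mathsf T}\mathrm{grad}\,\Phi$ does hold and is the cleanest way to organize the computation; both stationarity arguments (divergence form with unit weight for $k=0$, and $\int(\mathcal{L}\Phi)\varrho\,dV=-\langle\mathcal{D}\Phi,\mathcal{D}\mathbf{1}\rangle_{\varrho}=0$ for $k>0$) are sound; and deferring well-posedness to Lemma \ref{lemmafunctionphi} and Theorem \ref{maingeneral} is legitimate.

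The genuine gap is that the ``routine-but-delicate'' coefficient bookkeeping you defer is precisely the step that does not close, and your own identity exposes this. From $\mathcal{D}\Phi=\sqrt{2}\,b^{\mathsf T}\mathrm{grad}\,\Phi$ one gets $\mathcal{L}\Phi=-\mathcal{D}^*\mathcal{D}\Phi=2\varrho^{-1}\mathrm{div}\bigl(\varrho\, bb^{\mathsf T}\mathrm{grad}\,\Phi\bigr)$, whose second-order part is $2\,\mathrm{Tr}\bigl(bb^{\mathsf T}\mathrm{Hess}\,\Phi\bigr)$; the SDE with noise coefficient $b$ contributes $\tfrac12\mathrm{Tr}\bigl(bb^{\mathsf T}\mathrm{Hess}\,\Phi\bigr)$ under the dictionary you fixed, so ``$bb^{\mathsf T}$ is precisely the diffusion matrix'' is off by a factor of $4$: the noise realizing $\mathcal{L}$ is $2b$, not $b$. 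More seriously, expanding the divergence (the $\mathrm{div}(bb^{\mathsf T})$ term plus the weight term $bb^{\mathsf T}\mathrm{grad}\log\varrho$) gives the first-order part
\[
-4\Lambda v+\Bigl[\tfrac{2}{v_m}\bigl(\langle\Lambda v,v\rangle+\mathrm{Tr}(C\Lambda)\bigr)-2v_m\mathrm{Tr}(\Lambda)\Bigr]e_m,
\]
in agreement with Theorem \ref{maintheorem} and Corollary \ref{kbig}, and this is \emph{not} a scalar multiple of the stated $Z$: using $\mathrm{Tr}(\Lambda)=\mathrm{Tr}(\Lambda^{\curlyvee})+\mathrm{Tr}(C\Lambda)/\sigma^2$, its $e_m$-component differs from $2Z_m$ by $2v_m\bigl(\langle\Lambda v,v\rangle/\sigma^2-\mathrm{Tr}(\Lambda^{\curlyvee})\bigr)$, which vanishes identically only when $\Lambda^{\curlyvee}=0$ (the one-dimensional case). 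For $k=0$ the issue is purely the normalization: the stated drift $-4\Lambda v$ is the full first-order part of the operator of Corollary \ref{constspeed}, while the stated $b$ accounts for only half of its second-order part under your convention. So an honest execution of your (correct) method does not verify the proposition as printed; it produces the pair consisting of the full first-order part and noise $2b$, or else forces the convention without the $\tfrac12$ together with a correction to $Z_m$. The claim that only routine matching remains is exactly what breaks -- and this also shows the discrepancies sit in the statement itself, something the paper's ``easy exercise'' proof never checked.
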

\begin{proof}
This is an easy  exercise. The general relation between the infinitesimal generator of the diffusion  and
the It\^o equation can be found, for example, in \cite{ok}. When $k=0$, note that the drift term
always points into the disc as $\langle \Lambda v, v\rangle>0$ for all non-zero $v$. 
As $\mathcal{L}$ is a symmetric operator (see Theorem \ref{elliptic}),   $\int(\mathcal{L}\Phi)(v)\, d\mu(v)=0$ for all
compactly supported smooth $\Phi$, where $\mu$ is the stationary measure for $P$ given in Proposition
\ref{invariantproposition}.
The claim about the stationary distributions is a consequence of this property.
\end{proof}

\begin{proposition}\label{reductionLegendre}
Let $\mathcal{L}$ be as in Theorem \ref{maintheorem}.
Then the following two conditions are equivalent:
\begin{enumerate}
\item  $\mathcal{L}|v|^2=0$ 
\item  $\text{\em Tr}(C\Lambda)=0$ and $\, \text{\em Tr}(\Lambda)=
\text{\em Tr}(\Lambda^\curlyvee)$. 
\end{enumerate} If  these conditions hold,   the diffusion associated to $\mathcal{L}$
restricts to hemispheres of arbitrary radius (i.e., the level surfaces of $|v|^2$), and is equivalent  to a Legendre diffusion.
\end{proposition}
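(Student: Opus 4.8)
The plan is to derive the equivalence by one direct substitution and then read off both geometric assertions from the resulting formula together with the symbol computation of Theorem~\ref{elliptic}. First I would put $\Phi(v)=|v|^2$, so that $\text{grad}_v\Phi=2v$ and $\text{Hess}_v\Phi=2I$, into the expression for $\mathcal{L}$ in Theorem~\ref{maintheorem}. Two cancellations drive everything. The three contributions proportional to $\langle\Lambda v,v\rangle$ (from the first term, from the $\langle\Lambda v,v\rangle$ piece of the second term, and from the fourth term) carry coefficients $-8,+4,+4$ and hence cancel; the two $\text{Tr}(C\Lambda)$ contributions add to $8\,\text{Tr}(C\Lambda)$; and the $v_m^2$ contributions coming from $\text{Tr}(\Lambda)$ in the second term and from $\text{Tr}(\Lambda\,\text{Hess}_v\Phi)=2\,\text{Tr}(\Lambda^\curlyvee)$ in the third term combine into $4\big(\text{Tr}(\Lambda^\curlyvee)-\text{Tr}(\Lambda)\big)v_m^2$. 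The only remaining coupling of $v$ to $e$ is a multiple of $\langle\Lambda e,v\rangle$, and this vanishes identically: since the horizontal normal $\overline{n}$ is orthogonal to $e=e_m$ we have $Ae_m=0$, hence $\Lambda e_m=0$. What survives is
$$\mathcal{L}|v|^2 = 8\,\text{Tr}(C\Lambda) + 4\big(\text{Tr}(\Lambda^\curlyvee)-\text{Tr}(\Lambda)\big)\,v_m^2 .$$
As the functions $1$ and $v_m^2$ are linearly independent on the open half-space, this vanishes identically iff $\text{Tr}(C\Lambda)=0$ and $\text{Tr}(\Lambda)=\text{Tr}(\Lambda^\curlyvee)$, which is exactly (1)$\Leftrightarrow$(2).

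Next I would show that (2) forces $|V_t|$ to be conserved, not merely to have zero drift. Writing the generator as the It\^o SDE of the preceding proposition with diffusion matrix $a$ (so $bb^{\top}=2a$ and $\sigma_{\mathcal{L}}(\xi)=\langle a\xi,\xi\rangle$), It\^o's formula gives $d|V_t|^2=(\mathcal{L}|v|^2)\,dt+\langle\text{grad}|v|^2,\,b\,dB_t\rangle$. The drift is zero by (1), while the quadratic variation density of the martingale part is proportional to $\sigma_{\mathcal{L}}(v)$. Evaluating the symbol from Theorem~\ref{elliptic} at $\xi=v$, the sum $\sum_i\lambda_i(v_mv_i-v_iv_m)^2$ is termwise zero and the remaining piece is $2\sigma^2\text{Tr}(\Lambda^\curlywedge)v_m^2=2\,\text{Tr}(C\Lambda)\,v_m^2=0$ by (2). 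Hence both the drift and the martingale part of $|V_t|^2$ vanish, so $|V_t|^2$ is a.s. constant and the process stays on the hemisphere $\{|v|=\rho,\ v_m<0\}$, i.e. on the level surfaces of $|v|^2$.

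Finally I would identify the restricted generator. Restricting to functions $\Phi=\Psi\circ Q$ depending only on the projection $\overline{v}=Qv$ to the ball (so $\Phi_m=\Phi_{im}=\Phi_{mm}=0$) annihilates every term of $\mathcal{L}$ containing $\langle\text{grad}\,\Phi,e\rangle$ or a derivative in the $e$ direction, and—using $\Lambda e=0$ once more—leaves $\mathcal{L}\Phi=-4\langle Q\,\text{grad}_{\overline v}\Psi,\Lambda\overline{v}\rangle+2v_m^2\,\text{Tr}(\Lambda\,\text{Hess}_{\overline v}^Q\Psi)$. On the hemisphere $v_m^2=\rho^2-|\overline{v}|^2$, so this is precisely the constant-speed operator displayed just before Corollary~\ref{constspeed}; by that corollary (after rescaling $\rho$ to $1$) it is the generalized Legendre operator (\ref{genLegendreOp}), and the confined process is a Legendre diffusion, which is the last assertion.

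I expect the main obstacle to be the conservation step: vanishing of the radial drift $\mathcal{L}|v|^2$ is not enough, and one must also exclude radial diffusion, which is cleanest to argue by tying $\text{Tr}(C\Lambda)=0$ to the vanishing of the principal symbol along the radial direction $\xi=v$. A secondary technical point is that one should not specialize Corollary~\ref{kbig} directly, since its $\sigma^2=\text{Tr}(C\Lambda)/\text{Tr}(\Lambda^\curlywedge)$ degenerates when $\text{Tr}(C\Lambda)=0$; working instead with functions pulled back from the ball sidesteps this degeneracy and reproduces the $k=0$ operator cleanly.
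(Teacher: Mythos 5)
Your proof is correct, and its core is exactly the paper's argument: the paper's entire proof is the one-line observation that $\mathcal{L}|v|^2/4=2\,\mathrm{Tr}(C\Lambda)+\bigl(\mathrm{Tr}(\Lambda^\curlyvee)-\mathrm{Tr}(\Lambda)\bigr)v_m$ (the paper writes $v_m$ where your computation gives $v_m^2$; your version appears to be the right one, since the relevant contributions come from $-\mathrm{Tr}(\Lambda)\langle v,e\rangle^2\cdot\frac{2}{\langle v,e\rangle}\cdot 2\langle v,e\rangle$ and from $2\langle v,e\rangle^2\,\mathrm{Tr}(\Lambda\,\mathrm{Hess}_v\Phi)$ --- in any case the equivalence (1)$\Leftrightarrow$(2) is unaffected). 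Where you genuinely go beyond the paper is in the last two assertions, which the paper treats as immediate consequences of that formula. You correctly identify that vanishing drift of $|V_t|^2$ alone does not force the process to stay on level sets --- a nonnegative local martingale need not be constant --- and you close this gap by checking that the principal symbol vanishes in the radial direction, $\sigma_{\mathcal{L}}(v)=2\,\mathrm{Tr}(C\Lambda)v_m^2=0$ under (2), so the martingale part of $|V_t|^2$ has zero quadratic variation; you then identify the restricted generator on pulled-back functions $\Psi\circ Q$ with the constant-speed operator preceding Corollary \ref{constspeed}. Both additions are sound and make the "restricts to hemispheres and is a Legendre diffusion" claim rigorous rather than asserted. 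One small citation point: Theorem \ref{elliptic}'s symbol formula is stated under adaptedness and positive definiteness of $\Lambda$, which condition (2) (with $k\geq 1$) is incompatible with; this is harmless, since the same evaluation $\sigma_{\mathcal{L}}(v)=2\,\mathrm{Tr}(C\Lambda)v_m^2$ follows directly from the second-order coefficients in Theorem \ref{maintheorem} (the terms $2v_m^2\langle\Lambda\xi,\xi\rangle-4v_m\xi_m\langle\Lambda\xi,v\rangle+2\langle\Lambda v,v\rangle\xi_m^2$ cancel at $\xi=v$), and your closing remark about avoiding Corollary \ref{kbig}'s degenerate $\sigma^2$ shows you were alert to exactly this issue.
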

\begin{proof}
This follows from the observation that $\mathcal{L}|v|^2/4=2\text{Tr}(C\Lambda)+v_m\left(\text{Tr}\left(\Lambda^\curlyvee\right)-\text{Tr}(\Lambda)\right)$.
\end{proof}

The significance of this remark is the following.  In the examples, $\text{Tr}(\Lambda)-
\text{Tr}(\Lambda^\curlyvee)$ consists of mass ratios  
whose denominators are the masses associated to the velocity covariance matrix $C$. These
constitute the ``wall subsystem,'' whose kinetic variables are ``hidden.''
Therefore, the two conditions amount to the assumption that the masses of
the wall subsystem are infinite and have zero velocity. Thus we have an elastic random scattering system.

\subsection{Example: wall with particle structure}\label{multiplemasses}
Consider the idealized physical model  depicted in Figure \ref{multiple}.
It   consists of $k$ point masses $m_1, \dots, m_k$ that can slide without friction
on  the interval $[0,l]$ independently of each other,
 and a point mass $m$ that can similarly move on  the interval $[0, \infty)$.
When reaching the endpoints of $[0,l]$,   masses $m_i$ bounce off elastically, while $m$ collides elastically
with the $m_i$ but moves freely past  $z=l$.  
One may think of  the  $m_i$ are being  tethered to the left wall by  imaginary  (inelastic, massless and fully flexible) strings of length $l$;
when a string is fully extended, the corresponding mass bounces back as if due to a wall at $z=l$.
So the $m_i$  are restricted to $[0,l]$, but $m$ is free to move into this interval and may collide with the $m_i$. 

 \vspace{0.1in}
\begin{figure}[htbp]
\begin{center}
\includegraphics[width=2.5in]{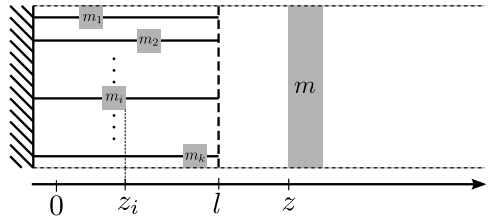}\ \ 
\caption{\small  In this model, masses $m_i$ constitute the {\em wall system}, while $m$ is the free mass. The system
is essentially one-dimensional.}
\label{multiple}
\end{center}
\end{figure} 

The positions of the $m_i$  are $z_i\in [0,l]$ and $z\in [0,\infty)$. Let $M=m+m_1+\dots +m_k$. In the new
 coordinates $x_i=\sqrt{m_i/M}\, z_i$, $x_{k+1}=\sqrt{m/M} z$,  the kinetic energy form  becomes
$$ K(x,\dot{x})=({M}/2) \left(\dot{x}_1^2+\cdots +\dot{x}_{k+1}^2\right).$$
We may equivalently assume that $(x_1, \dots, x_k)$ defines a point on the torus $\mathbb{T}^k$ by taking
the range of $x_i$ to be $[-a_i/2,a_i/2]$, where $a_i=2\sqrt{m_i/M}\, l$, and identifying the end points $ a_i/2$ and $-a_i/2$. Mass  $m$ is then constrained to move on the interval defined by
$$x_{k+1}\geq F(x_1, \dots, x_k):=\max\left\{\sqrt{m/m_1}\, |x_1|, \dots, \sqrt{m/m_{k}}\, |x_k|\right\}.$$
Thus the configuration manifold is $M=\left\{(x,x_{k+1})\in \mathbb{T}^k\times \mathbb{R}: x_{k+1}\geq F(x)\right\}$, 
and collision is represented (due to energy and momentum conservation and time-reversibility),  by
specular reflection at the boundary of $M$ as depicted in Figure \ref{tent}.

This deterministic billiard system can be turned into a random scattering system in
several ways. We illustrate two natural possibilities, which we call the {\em heat bath} model and
the {\em random elastic collision} model.  The assumptions for the heat bath model are as follows:   at the moment $m$ crosses
$z=l$ into $[0,l]$, the initial position of each $m_i$ is a random variable uniformly distributed over $[0,l]$,
and the velocity of $m_i$ is normally distributed with mean $0$ and variance $\sigma_i^2$.
We assume that these $\sigma_i^2$ are such that $m_i\sigma_i^2=m_j \sigma_j^2$ for all $i$ and $j$. In physical
terms, we are imposing a condition of equipartition of energy among the wall-bound masses. 
In the new coordinates $x_i$ the velocities are normal random variables with mean $0$ and equal variance
$\sigma^2=({m_i}/{M})\sigma_i^2$.

 \vspace{0.1in}
\begin{figure}[htbp]
\begin{center}
\includegraphics[width=2in]{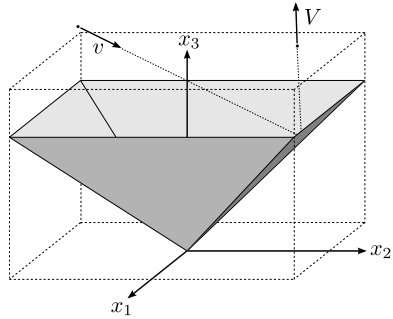}\ \ 
\caption{\small  Billiard representation of the system of Figure \ref{multiple} for $k=2$.}
\label{tent}
\end{center}
\end{figure}

This is  essentially the case described in Corollary \ref{cordim1}.
The dimensions for the heat bath model are: $n=k=m-1.$ Let $e_1, \dots, e_k, e=e_{k+1}$ be
the orthonormal basis of coordinate vector fields corresponding to the $x_i$. Then
the  projection to the hyperplane $e^{\perp}$
of the normal vector field $n(x)$ to the graph of $F$
   is
$$\overline{n}(x)=\pm \sqrt{\frac{m}{m+m_i}}\,  e_i \text{ for $x$ such that } |x_i|\geq \max_{j}|x_j|.$$
Therefore, $A$ is the diagonal matrix
$$A= V_0\sum_{i=1}^k \frac{m}{m+m_i}e_i^*\otimes e_i $$
where $V_0$ is the   volume of the sector $|x_i|\geq \max_j|x_j|$, normalized so that the total volume of the torus is $1$.
The matrix $C$ is the covariance matrix of the velocity $\dot{x}$, and is by assumption the scalar matrix
 $C=\sigma^2\sum_i e_i^*\otimes e_i$.

 Therefore,  
 $${\text{Tr}\left(CA\right)}/{\text{Tr}(A)}=\sigma^2,\ \text{Tr}(A)= \sum_{i=1}^k\frac{m}{m+m_i}, \ h=\max_j\left\{\frac{m}{m_1}, \dots, \frac{m}{m_k}\right\}.$$
Let us say for concreteness that all the wall-bound masses are equal to $m_0$, so $h=m/m_0$ and  $\Lambda$ becomes the
identity matrix times $V_0$, whose trace is $kV_0$.  From Corollary \ref{kbig} we obtain, for the heat bath model
with small ratio $m/m_0$ and $k$ equal bound masses,  the following differential operator. Let 
$v$ indicate the velocity of the free mass $m$ (in the new coordinate system, so $v=\dot{x}_{k+1}$)
and let $\Phi$ be any compactly supported smooth function on the interval $(0,\infty)$. Then 
\begin{equation}\label{laguerre}\mathcal{L}\Phi= 2kV_0\sigma^2\left[\left(\frac1v-\frac{v}{\sigma^2}\right)\Phi' +\Phi''\right]. \end{equation}
The corresponding It\^o diffusion has  the form
$$ dV_t=2kV_0\sigma^2\left(\frac1v-\frac{v}{\sigma^2}\right)\, dt + \sqrt{2kV_0\sigma^2}\, dB_t.$$
Figure \ref{OneD} shows a sample path for this SDE obtained by   Euler approximation.  (See \cite{kp}.)

We now consider the random elastic collision model. The assumptions for  this model are as follows:
the velocities of all the masses $m_i, m$ constitute the observable variables, and the positions
in $[0,l]$ of the wall-bound masses at the moment $m$ crosses into $[0,l]$ are uniformly distributed
random variables. This is the case to which Corollary \ref{constspeed} applies, where $n=k$. (The integer $k$ of
Theorem \ref{maintheorem} is  $0$.) Again, for concreteness, suppose that all the wall-bound masses are equal to $m_0$.
The eigenvalues of $\Lambda=V_0 I$ are all $\lambda_i=V_0$. We may assume without loss of generality
that the constant speed of the billiard particle (in the billiard representation of Figure \ref{tent}) is $1$ and let  $v$ denote
the projection of the billiard particle's velocity to the unit disc $D^k_1$ in dimension $k$.   Then  we obtain from Corollary \ref{constspeed}:
\begin{equation}\label{legendre11}(\mathcal{L}\Phi)(v) = 2 V_0 \sum_{i=1}^k\left(\left(1-|v|^2\right)\Phi_i\right)_i\end{equation}
where $\Phi$ is any compactly supported smooth function on  $D^k_1$.

The differential operator of \ref{legendre11}, as well as    \ref{genLegendreOp} in  Corollary \ref{constspeed}, generalize  in a natural way the standard
Legendre operator defined on functions of the interval $[-1,1]$.  We refer to
the diffusion process with this type of infinitesimal generator a (generalized) {\em Legendre diffusion}.
A sample path is shown in Figure \ref{legendre}.

It is interesting to notice that     the heat bath and random elastic collision models  lead to very standard Sturm-Liouville differential operators. For the heat bath, Equation \ref{laguerre}
is, up to constant, Laguerre's differential operator.
Essentially the same  model of heat bath/thermostat described here is used in \cite{chum}
to build a minimalist mathematical model of a heat engine, described as  a random system of billiard type.

 \vspace{0.1in}
\begin{figure}[htbp]
\begin{center}
\includegraphics[width=3.5in]{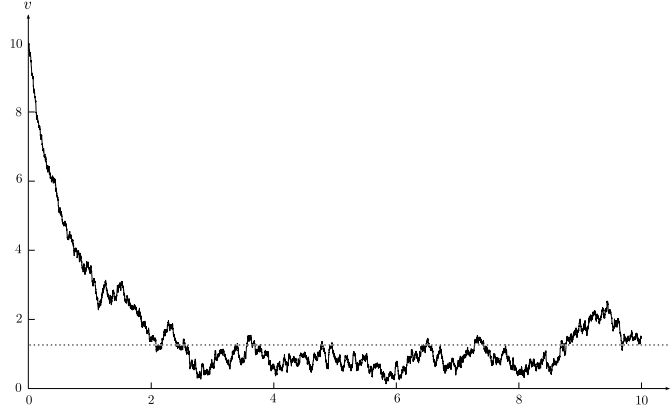}\ \ 
\caption{\small 
A sample path of the It\^o equation
$dV_t=(1/v - v)dt +   dB_t$. We have used Euler approximation with time length 
$10$, initial point $v=10$, and  number of steps $=10000$. The mean value relative to the stationary distribution is $\sqrt{\pi/2}$,
corresponding to the horizontal dashed line.}
\label{OneD}
\end{center}
\end{figure}

\subsection{Example: collisions of point mass and moving surface}
Here we give the differential equation approximation of  $P-I$ for the
example of Subsection \ref{movingsurface}. 
The main interest in this example is that it is the simplest that combines the features of
the two cases considered above in Subsection \ref{multiplemasses}.

It is first necessary to describe 
  the operator $A$ (see Subsection \ref{appargument}). The notations are as in that subsection.
  The torus $\mathbb{T}^2$ has fundamental domain (centered at $(0,0)$) 
$$ |x_0|\leq \tau/2,\ \ |x_1|\leq 1/2, \ \ \tau = \frac{a_0}{a_1}\sqrt{\frac{m_0}{m_1}}.$$
The billiard boundary surface is the graph of  $x_2=F(x_0, x_1)$, 
where
$$ F(x_0,x_1)= \sqrt{\frac{m_1}{m_0}}|x_0| +a_1^{-1}f(a_1x_1).$$
Let $e_0, e_1, e:=e_2$ be the standard coordinate vector fields for the coordinate system $(x_0,x_1, x_2)$.
It is easily checked that $A$ is
$$ A=\left(\frac{m_1}{m_0}+ O(h^4)\right)e_0^*\otimes e_0+ \left(\int_0^1 [f'(a_1 s)]^2\, ds+ O(h^4)\right) e_1^*\otimes e_1,$$
where the error term satisfies $0\leq O(h^2)\leq h^2$, while the norm of $A$ 
satisfies $\|A\|\leq h.$
As an  example, take $$f(z_1)=\sqrt{R^2-z_1^2}-\sqrt{R^2-a_1^2/4}.$$ The graph of $f$  is an arc of circle 
of radius $R$ intersecting the $z_1$-axis at the points $(\pm a_1/2, 0)$. Let the {\em scale-free curvature}
be $\kappa:=a_1/R<1$. Then 
$$a:=\int_0^1[f'(a_1s)]^2\, ds ={\kappa}^{-1} \ln\frac{1+\frac{\kappa}{2}}{1-\frac{\kappa}{2}} -1=\frac{\kappa^2}{12}+O(\kappa^3).$$
Thus for small values of $h$   (disregarding terms in  $\kappa$ or order greater
than $2$, and in $m_1/m_0$ of order greater than $1$)  we have
$$A=\frac{m_1}{m_0} e_0^*\otimes e_0+ \frac{\kappa^2}{12}e_1^*\otimes e_1 \text{ and } h= \frac{\kappa^2}{4}+\frac{m_1}{m_0}. $$
The operator $C$   takes the form
$$C=\int_{-\infty}^{\infty}w^2\, d\mu(w)e_0^*\otimes e_0= \sigma^2 e_0^*\otimes e_0.$$
We observe  that $\text{Tr}(CA)=\frac{m_1}{m_0}\sigma^2 $ and $\text{Tr}(A)=\frac{m_1}{m_0} +a$.
For  the special case of an arc of circle, 
$a=\kappa^2/12$, where $\kappa$ is the scale free curvature.
This yields the   approximation, written informally as
 \begin{equation}\label{updownlaplacian}
\frac{(P\Phi)(v)-\Phi(v)}{2}\approx  \frac{m_1}{m_0}\sigma^2\mathcal{L}_{\text{\tiny temp}}\Phi+   \frac{\kappa^2}{12}\mathcal{L}_{\text{\tiny curv}}\Phi
\end{equation}
where 
 \begin{align*}
\mathcal{L}_{\text{\tiny temp}}\Phi&=  \left(\frac{1}{v_2}-\frac{v_2}{\sigma^2}\right)\Phi_2 +\Phi_{22} \\
\mathcal{L}_{\text{\tiny curv}}\Phi &=
 -2 v_1\Phi_1+\frac{v_1^2-v_2^2}{v_2}\Phi_2 -2v_1v_2 \Phi_{12} +v_2^2\Phi_{11} + v_1^2 \Phi_{22}
\end{align*}

The mass-ratio and curvature parameters may
{\em a priori}   go to $0$ independently with $h$ (under
Assumption \ref{assumption1}) and  the
particular way in which each goes to $0$  matters for  the limit. Expression \ref{updownlaplacian}
shows that taking  $h$ for  the denominator in the quotient  $(P_h\Phi-\Phi)/h$ used in the definition
of $\mathcal{L}$ is  essentially an arbitrary choice. One could have taken instead $\text{Tr}(A)$, for example. 
If  we further ask  in  this example that
  the scale-free curvature and the mass ratio be coupled by a linear relation such as $\frac{m_1}{m_0}=\alpha\frac{\kappa^2}{4}$,
  for a fixed  but arbitrary constant $\alpha>0$, and keep 
 the original choice of denominator $h$,  then
  $$\Lambda =\lim_{h\rightarrow 0}A/h= \frac{\alpha}{1+\alpha}e_0^*\otimes e_0 + \frac1{3(1+\alpha)}e_1^*\otimes e_1.$$
This gives the family of operators (depending on $\alpha$)
 \begin{align*}
\left( \mathcal{L}\Phi\right)(v)=\lim_{h\rightarrow 0}
\frac{(P_h\Phi)(v)-\Phi(v)}{h}&=
\frac{2\sigma^2\alpha }{1+\alpha}
 \left\{\left(\frac{1}{v_2}-\frac{v_2}{\sigma^2}\right)\Phi_2 +\Phi_{22}\right\} + \\
 &
\frac2{3(1+\alpha)}\left\{-2 v_1\Phi_1+\frac{v_1^2-v_2^2}{v_2}\Phi_2 -2v_1v_2 \Phi_{12} +v_2^2\Phi_{11}  + v_1^2 \Phi_{22}\right\}.
\end{align*}
In the concrete example  of Figure \ref{example3SDE} we took $\alpha=1$, $\sigma^2=1/3$ (and multiply the
the operator by an overall factor $3$ to make it look simpler).
  
The expression  \ref{updownlaplacian},
points to  a separation between, on the one hand,  the term responsible 
for the change in speed, which    contains the variance (temperature) $\sigma^2$ and the mass ratio,
and on the other, a purely geometric term that involves the scale free curvature $\kappa$. 
If we let $\sigma^2$ be  $0$ and the wall mass   $\infty$, and consider  $\kappa$ to be small,
the MB-Laplacian reduces to $\mathcal{L}_{\text{\tiny curv}}$.
It is interesting to note that  $\mathcal{L}_{\text{\tiny curv}}|v|^2=0$, so the diffusion
associated to this second order operator restricts to hemispheres of arbitrary radius,
and we have a Legendre diffusion. (See Proposition \ref{reductionLegendre}.)
 
\vspace{0.2in}
{\it Acknowledgment:} Hong-Kun Zhang was partially funded by NSF  Grant DMS-090144 and NSF CAREER Grant DMS-1151762.


\begin{thebibliography}{Abcdef}
  \bibitem{cercignani} C. Cercignani and D. H. Sattinger, {\em Scaling Limits and Models in Physical Processes}, 
  DMV Seminar Band 28, Birk\"auser,  1998.
  \bibitem{celestini} F. Celestini and F. Mortessagne, {\em Cosine law at the atomic scale: toward realistic simulations of Knudsen diffusion}, Physical Review E 77, 021202 (2008).
  \bibitem{chernov}  N. Chernov and R. Markarian, {\em Chaotic Billiards}, Mathematical Surveys and Monographs, 127, American Mathematical Society, 2006.
  \bibitem{chum} T. Chumley,  S. Cook  and R. Feres, {\em From billiards to thermodynamics}, 2012, arXiv:1207.5878v1
  \bibitem{comets} F. Comets, S. Popov, G. M. Schutz, M. Vachkovskaia, {\em Billiards in a general domain with random reflections}.
  Arch. Ration. Mech. Anal. 191 (2008), 497-537.
  \bibitem{comets2} F. Comets, S. Popov, G. M. Schutz, M. Vachkovskaia, {\em  Knudsen gas in a finite random tube: transport diffusion and first passage properties}, J. Stat. Phys. 140, 948-984 (2010).
  \bibitem{scott} S. Cook and R. Feres, {\em Random billiards with wall temperature and associated Markov chains},
  Nonlinearity 25 (2012) 2503-2541.
  \bibitem{feres} R. Feres, {\em Random walks derived from billiards}, in Dynamics, Ergodic Theory, and Geometry, Ed. Boris Hasselblatt, MSRI  Publications, Cambridge University Press, (2007) 179-222.
  \bibitem{fz} R. Feres and H-K. Zhang, {\em The spectrum of the billiard Laplacian of a family of random billiards}, 
Journal of Statistical Physics, V. 141, N. 6 (2010) 1030-1054.
\bibitem{fz2} R. Feres and H-K. Zhang, {\em Spectral Gap for a Class of Random Billiards},  Comm. Math. Phys. 313, 479-515, 2012.
\bibitem{harris} S. Harris, {\em An introduction to the theory of the Boltzmann equation}, Dover, 1999.
\bibitem{kp} P. E. Kloeden and E. Platen, {\em Numerical Solutions of  Stochastic Differential Equations}, Springer,  1995.
\bibitem{meyn} S. Meyn and R. L. Tweedie, {\em Markov Chains and Stochastic Stability}, second edition, Cambridge University Press, 2009.
\bibitem{ok} B. Oksendal, {\em Stochastic Differential Equations}, Springer, 1998.
\bibitem{sv} D. W. Stroock, S. R. S. Varadhan, {\em Multidimensional Diffusion Processes}. Classics in Mathematics, Springer, 2006.


\end{thebibliography}
\end{document}